\theoremstyle{plain}
\newtheorem{lemma}{Lemma}
\newtheorem{theorem}[lemma]{Theorem}
\newtheorem{corollary}[lemma]{Corollary}
\title{
Walk refinement, walk logic, and the iteration number of the Weisfeiler-Leman algorithm
}
\author{Moritz Lichter\\
TU Kaiserslautern\\
\texttt{lichter@cs.uni-kl.de}
\and
Ilia Ponomarenko\\
St.\ Petersburg Department of Steklov Mathematical\\ Institute of the Russian Academy of Sciences\\
\texttt{inp@pdmi.ras.ru}
\and
Pascal Schweitzer\\
TU Kaiserslautern\\
\texttt{schweitzer@cs.uni-kl.de}
}
\definecolor{lightblue}{rgb}{0.5,0.5,1.0}
\definecolor{darkred}{rgb}{0.5,0,0}
\definecolor{darkgreen}{rgb}{0,0.5,0}
\definecolor{darkblue}{rgb}{0,0,0.5}
\begin{document}


\newcommand{\countinglogic}[1]{\mathcal{C}_{#1}}
\newcommand{\walklogic}[1]{\mathcal{W}_{#1}}
\newcommand{\msetopen}[0]{\{\hspace{-3pt}\{}
\newcommand{\msetclose}[0]{\}\hspace{-3pt}\}}
\newcommand{\mset}[1]{\msetopen #1 \msetclose}
\newcommand{\msetcondition}[2]{\msetopen #1 \mid #2 \msetclose}
\newcommand{\set}[1] {\left\lbrace #1 \right\rbrace }
\newcommand{\setcondition}[2] {\left\lbrace #1 \mid #2 \right\rbrace }

\newcommand{\walkrefinement}[1]{#1_{\mathcal{W}}}
\newcommand{\walkrefinementb}[2]{#1_{\mathcal{W}[#2]}}
\newcommand{\wlrefinementname}{\mathsf{WL}}
\newcommand{\wlrefinement}[1]{#1_{\wlrefinementname}}

\newcommand{\walkcolor}[1]{\overline{#1}}

\newcommand{\refines}{\preceq}
\newcommand{\strictrefines}{\prec}
\newcommand{\isomorph}{\cong}

\newcommand{\sat}[0]{\vDash}
\newcommand{\nsat}[0]{\nvDash}
\newcommand{\bigand}{\bigwedge}
\renewcommand{\phi}{\varphi}
\renewcommand{\rho}{\varrho}

\newcommand{\partition}[1]{\pi(#1)}
\newcommand{\disjoiuntUnion}[0]{\dot{\cup}}
\newcommand{\nat}{\mathbb{N}}

\newcommand{\bigO}{\mathcal{O}}

\newcommand{\linearspan}[1]{\mathbb{C}#1}
\newcommand{\inducedalgebra}[1]{\langle #1 \rangle}
\newcommand{\allproducts}[1]{\widehat{#1}}
\newcommand{\fullmatrixalgebra}[1]{\mathsf{M}_{#1}(\mathbb{C})}
\newcommand{\algebra}[1]{\mathcal{#1}}
\newcommand{\productsOfLength}[2]{#1^{\leq #2}}
\newcommand{\rank}[1]{\mathrm{rank}(#1)}

\newcommand{\twisted}[1]{\tilde{#1}}
\newcommand{\straight}[1]{\bar{#1}}
\newcommand{\gadgetvertex}[1]{\straight{v}}
\newcommand{\repl}[1]{X(#1)}
\newcommand{\replt}[1]{\tilde{X}(#1)}
\newcommand{\repli}[1]{\hat{X}(#1)}

\newcommand{\grid}[2]{G_{#2}^{#1}}

\maketitle

\thanks{The research leading to these results has received funding from the European Research Council (ERC) under the European Union’s Horizon 2020 research and innovation programme (grant agreement No. 820148).}

 \begin{abstract}
We show that the $2$-dimensional Weisfeiler-Leman algorithm
stabilizes $n$-vertex graphs
after at most~$\bigO(n \log n)$ iterations.
This implies that
if such graphs are distinguishable
in~3-variable first order logic with counting,
then they can also be distinguished in this logic
by a formula of quantifier depth at most $\bigO(n \log n)$.

For this we exploit a new refinement based on counting walks
and argue that its iteration number differs
from the classic Weisfeiler-Leman refinement by at most a logarithmic factor.
We then prove matching linear upper and lower bounds
on the number of iterations of the walk refinement.
This is achieved with an algebraic approach
by exploiting properties of semisimple matrix algebras.
We also define a walk logic and a bijective walk pebble game
that precisely correspond to the new walk refinement.
 
\end{abstract}

\section{Introduction}
The classic Weisfeiler-Leman algorithm
is a tool that lies at the heart of algebraic combinatorics.
Developed first in 1968 to investigate symmetries of highly regular graphs,
it is routinely employed for the purpose of isomorphism testing.
Its development recently culminated in the WL2018 conference on
``Symmetry vs Regularity'' in algebraic graph theory in Pilsen~\cite{WL2018},
marking the 50 year anniversary of the algorithm.
Roughly speaking, the core idea of the classic ($2$-dimensional) algorithm
is to propagate structural information
regarding pairs~$(u,w)$ of vertices in a graph
by considering
for each possible third vertex $v$
the information already known
for the pairs $(u,v)$ and $(v,w)$.
To say equivalently,
the algorithm repeatedly classifies pairs~$(u,w)$ of vertices
according to the multiset of walks of length~$2$ from~$u$ to~$w$.
This process necessarily stabilizes after a finite number of iterations
and the corresponding stabilization is used to classify pairs of vertices. 

There is a close connection to a specific logic
namely the 3-variable fragment of first order logic with counting~\cite{CaiFI1992}.
Not only do the distinguishing power of the logic
and the distinguishing power of the algorithm agree,
there is also a close correspondence between the number of iterations
required by the algorithm and the quantifier depth required in the logic.

In this paper we are therefore interested in the number of iterations
after which the algorithm stabilizes.
Equivalently, we are interested in the maximum quantifier depth
needed in said first order logic to capture its expressibility on a graph of given size.
There is a trivial upper bound of~$n^2$ for this number,
since there are only~$n^2$ pairs of vertices
and thus a proper chain of partitions on the vertex pairs
(each partition finer than the previous one)
cannot be longer than~$n^2$.
With regard to lower bounds, Fürer~\cite{Furer2001}
proved that there are graphs
on which the stabilization number is in~$\Omega(n)$.
In~\cite{DBLP:conf/lics/KieferS16a},
using combinatorial techniques and a case distinction
into small and large vertex-color classes,
the currently best upper bound of~$\bigO(n^2/\log n)$
for the iteration number was proven.
In this paper we take an algebraic approach to the problem
and show the following upper bound.

\begin{theorem}
	\label{thm:wl-upper-bound}
The $2$-dimensional Weisfeiler-Leman algorithm
stabilizes after~$\bigO(n\log n)$ iterations
on graphs with $n$ vertices.
\end{theorem}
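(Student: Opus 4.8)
The plan is to route the argument through a stronger intermediate refinement, the \emph{walk refinement}, which in a single round reclassifies each vertex pair $(u,w)$ by the multiset of \emph{all} colored walks from $u$ to $w$, of every length; the classical $2$-dimensional algorithm does the same but records only walks of length two. The theorem then follows from two quantitative facts: the walk refinement reaches its fixed point after at most $\bigO(n)$ rounds, and a single walk-refinement round can be simulated by $\bigO(\log n)$ Weisfeiler--Leman rounds.

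For the simulation I argue in matrix-algebra terms. Encoding a partition of $V\times V$ by the $\{0,1\}$-matrices of its classes, one $\mathsf{WL}$ round refines a partition exactly so that for each pair $(u,w)$ the quantity $(A_a A_b)_{uw}$ --- the number of length-two colored walks through colors $a,b$ --- becomes a function of the new color of $(u,w)$. A routine induction on $t$ then shows that after $t$ rounds the color of $(u,w)$ determines the $(u,w)$-entry of every product $A_{c_1}\cdots A_{c_\ell}$ of length $\ell\le 2^t$ of class matrices of the \emph{initial} partition (split a long product at its midpoint and spend one more round summing over the midpoint vertex). The $\mathbb{C}$-span of all these products is a subalgebra of $\mathsf{M}_n(\mathbb{C})$, hence of dimension at most $n^2$ and already spanned by products of length at most $n^2$; so after $s:=\lceil 2\log_2 n\rceil$ rounds the color of $(u,w)$ determines the entries of \emph{all} such products, i.e. the full walk-refinement color. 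Thus $s$ rounds of $\mathsf{WL}$ refine at least as much as one walk-refinement round applied to any starting partition, and since the walk-refinement operator is monotone for the refinement order, induction gives that $st$ rounds of $\mathsf{WL}$ refine at least as much as $t$ walk-refinement rounds. A walk-stable partition is in particular $\mathsf{WL}$-stable, and the $\mathsf{WL}$ limit is the coarsest $\mathsf{WL}$-stable refinement of the input; taking $t=W(G)$, the number of walk-refinement rounds to stabilization, the partition reached after $s\cdot W(G)$ rounds of $\mathsf{WL}$ is squeezed between the $\mathsf{WL}$ limit and the walk-refinement fixed point, which in turn refines the $\mathsf{WL}$ limit, so all three coincide and $\mathsf{WL}$ has stabilized by round $s\cdot W(G)$.

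The remaining, and most substantial, step is the bound $W(G)=\bigO(n)$; here the approach is genuinely algebraic. At its fixed point the walk refinement produces the coherent closure of the input, whose adjacency algebra $\mathcal{A}$ is a semisimple matrix algebra containing the identity $I$ and the all-ones matrix $J$; in its Wedderburn decomposition $\mathcal{A}\cong\bigoplus_j \mathsf{M}_{d_j}(\mathbb{C})$ one has $\sum_j d_j m_j = n$ for the multiplicities $m_j$ of the action on $\mathbb{C}^n$, hence $\sum_j d_j\le n$. Along the refinement the spans of the successive partitions form a strictly increasing chain of matrix algebras converging to $\mathcal{A}$, each obtained from its predecessor by closing first under matrix multiplication and then under the Hadamard product (the latter being forced when one passes to the partition into common level sets). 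I would attach to each partition in this chain a progress measure read off from the Wedderburn decomposition of the algebra it generates --- morally the sum of the degrees of the simple components, which ranges between a constant and $\sum_j d_j\le n$ --- and show that each walk-refinement round strictly increases it. The hard part is precisely that a naive component-degree count need not increase at every round, since a proper semisimple subalgebra can share the same degree sum as the algebra above it; to certify progress one must exploit the interplay between the matrix-algebra closure and the Hadamard closure, and the fact that every algebra in the chain contains $I$ and $J$. With $W(G)=\bigO(n)$ in hand, combining it with $s=\bigO(\log n)$ through $\mathsf{WL}(G)\le s\cdot W(G)$ yields $\mathsf{WL}(G)=\bigO(n\log n)$. (A matching lower bound $W(G)=\Omega(n)$, and thereby a lower bound on the $\mathsf{WL}$ iteration number, is proved separately by an explicit construction and is not needed here.)
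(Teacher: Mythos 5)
Your overall architecture is exactly the paper's: simulate one walk-refinement round by $\bigO(\log n)$ Weisfeiler--Leman rounds (via the doubling induction on product length plus the $n^2$ dimension bound on the generated algebra), and bound the number of walk-refinement rounds by the length of a strictly increasing chain of semisimple subalgebras of $\fullmatrixalgebra{n}$. The simulation, the monotone composition of the two refinements, and the final squeeze argument are all correct and essentially identical to Lemmas~\ref{lem:simulate-walk-by-WL-refinement}, \ref{lem:wl-and-walk-same-stable} and \ref{lem:walk-refinment-walk-length-bound}.

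However, there is a genuine gap at the step you yourself flag as ``the hard part'': you never establish a progress measure that provably increases at every round, and this is the cornerstone of the whole proof. You correctly observe that the naive sum of Wedderburn degrees can stall (e.g.\ $\fullmatrixalgebra{1}\oplus\fullmatrixalgebra{1}$ sits properly inside $\fullmatrixalgebra{2}$ with the same degree sum $2$), but the repair you gesture at --- exploiting Hadamard closure and the presence of $I$ and $J$ --- is left entirely open, and it is not in fact needed. The resolution is purely algebraic and works for \emph{any} proper chain of semisimple subalgebras: take as potential $s(\algebra{A}) := 2\sum_i a_i - k$ where $\algebra{A}\isomorph\bigoplus_{i=1}^k\fullmatrixalgebra{a_i}$. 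Two sub-lemmas make this work: (i) any algebra monomorphism $\bigoplus_i\fullmatrixalgebra{a_i}\to\fullmatrixalgebra{m}$ forces $\sum_i a_i\le m$ (push forward the $\sum_i a_i$ pairwise orthogonal nonzero idempotents and use rank additivity for orthogonal idempotents); (ii) by simplicity, each summand $\fullmatrixalgebra{a_i}$ of the smaller algebra maps injectively into some summand $\fullmatrixalgebra{b_{f(i)}}$ of the larger one, whence $b_j\ge\sum_{i\in f^{-1}(j)}a_i$ and therefore $2b_j-1\ge 2\sum_{i\in f^{-1}(j)}a_i-|f^{-1}(j)|$ for every $j$; summing gives $s(\algebra{A})\le s(\algebra{B})$, and tracing the equality case shows $f$ is a degree-preserving bijection, i.e.\ $\algebra{A}=\algebra{B}$. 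Since $s\le 2n-1$ for subalgebras of $\fullmatrixalgebra{n}$, the chain --- and hence the number of walk-refinement rounds --- is at most $2n$. Without this (or some substitute for it) your argument does not yield $W(G)=\bigO(n)$, and the $\bigO(n\log n)$ conclusion does not follow.
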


Via the above-mentioned correspondence between the Weisfeiler-Leman (WL) algorithm  and the logic~\cite{CaiFI1992} we obtain the following corollary.

\begin{corollary}
If two~$n$-vertex graphs
can be distinguished by a sentence in~$3$-variable first order logic with counting~$\countinglogic{3}$,
then there is also a $\countinglogic{3}$ sentence
of quantifier depth at most~$\bigO(n \log n)$ 
that distinguishes the two graphs.
\end{corollary}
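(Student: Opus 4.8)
The plan is to derive the corollary directly from Theorem~\ref{thm:wl-upper-bound} by invoking the quantitative correspondence between the iteration number of the $2$-dimensional Weisfeiler-Leman refinement and the quantifier depth of the logic $\countinglogic{3}$, which goes back to Cai, F\"urer, and Immerman~\cite{CaiFI1992} and is most conveniently phrased through the bijective pebble game with three pebbles. Concretely, I would first recall the following two directions of this correspondence for a pair of graphs $G$ and $H$: (i) if the colorings produced by the $2$-dimensional WL algorithm on $G$ and on $H$ differ after $r$ iterations, then some $\countinglogic{3}$ sentence of quantifier depth $r + \bigO(1)$ distinguishes $G$ from $H$; and (ii) conversely, if some $\countinglogic{3}$ sentence distinguishes $G$ from $H$, then the WL colorings of $G$ and $H$ differ after finitely many iterations, hence in particular once the refinement has stabilized. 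Both directions are standard, and I would simply cite them.

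Granting this, the argument is short. Let $G$ and $H$ be $n$-vertex graphs separated by some $\countinglogic{3}$ sentence. By direction (ii), the $2$-dimensional WL algorithm distinguishes $G$ from $H$, i.e., the resulting colorings, viewed as multisets over vertex pairs, differ once the algorithm has stabilized. Running the algorithm on the disjoint union $G \mathbin{\dot\cup} H$, which has $2n$ vertices, Theorem~\ref{thm:wl-upper-bound} guarantees stabilization after $\bigO(2n \log(2n)) = \bigO(n \log n)$ iterations, so the colorings of $G$ and $H$ are already distinguished after some $r = \bigO(n \log n)$ iterations. Applying direction (i) then yields a $\countinglogic{3}$ sentence of quantifier depth $r + \bigO(1) = \bigO(n \log n)$ distinguishing $G$ from $H$, as claimed.

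The only point requiring care is bookkeeping rather than any real difficulty: one must check that the algorithmic and the logical notions of ``distinguishing a pair of graphs'' match up --- the former being that the stable colorings, as multisets over vertex pairs, are unequal, the latter the existence of a sentence true in one graph and false in the other --- and that the iteration bound of Theorem~\ref{thm:wl-upper-bound} is applied to the structure on which the game is actually played, namely $G \mathbin{\dot\cup} H$ (or equivalently the two graphs run in parallel over a shared color alphabet). The blow-up from $n$ to $2n$ vertices and the additive $\bigO(1)$ stemming from the initial atomic-type coloring are both absorbed into the $\bigO(n \log n)$ bound, so no loss beyond constant factors occurs.
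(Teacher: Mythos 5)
Your argument is correct and is essentially the paper's own (the paper simply invokes the Cai--F\"urer--Immerman correspondence between WL iterations and $\countinglogic{3}$ quantifier depth and combines it with Theorem~\ref{thm:wl-upper-bound}, without writing out a proof). Your additional bookkeeping --- running the refinement on the $2n$-vertex disjoint union and absorbing the constant-iteration overhead --- is the right way to make the standard reduction explicit and introduces no gap.
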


To prove Theorem~1, we take an algebraic point of view and use a one to one correspondence between coherent configurations and coherent algebras~\cite{MR898557}. The WL algorithm produces the former as output, whereas the latter are semisimple matrix algebras
closed with respect to the Hadamard multiplication.

\subsection{Our technique}
Generalizing the idea of considering walks of length 2,
we consider a new type of refinement,
which we call the walk refinement.
In one iteration it distinguishes pairs of vertices
not only according to the multiset of 2-walks between them,
but rather considers the multiset of all walks of arbitrary length
between the two vertices.
Naturally, considering all walks cannot be weaker than considering 2-walks.
However, using arguments from linear algebra
it can be proven that it suffices to consider walks of bounded length.
This in turn can be used to argue that the walk refinement is subsumed
by a logarithmic number of  traditional 2-walk refinements.
In particular the two kinds of refinement yield the same stabilization
and their iteration numbers differ by at most a logarithmic factor. 

The cornerstone of our argument is then to show
that the number of iterations of the walk refinement
is at most linear in the number of vertices.
This is done by observing that
the result of walk refinement corresponds to a semisimple matrix algebra.
Multiple iterations of walk refinement
must therefore correspond to an increasing chain of semisimple subalgebras of a full matrix algebra. 
We can show that the length of such a chain is at most~$\bigO(n)$,
which gives a linear upper bound for the iteration number of repeated walk refinement.

Since our upper bound on the iteration number
of the WL refinement is tight up to a logarithmic factor,
the question arises whether the factor of~$\log n$ can be removed.
We show that walk refinement requires $\Theta(n)$ iterations
on the same graphs, 
for which Fürer~\cite{Furer2001} showed that the WL refinement requires $\Theta(n)$ iterations.
This leaves the problem open,
whether our method can be pushed further.

In our paper,
we also associate the walk refinement with a special Ehrenfeucht–Fraïssé type duplicator-spoiler game
and a variant of a counting logic.
We call them bijective walk pebble game and walk counting logic, respectively. 
They are suitable adaptations of a game and the logic $\countinglogic{3}$
that are associated with the classic Weisfeiler-Leman (2-walk) refinement.
The close correspondences between aspects of refinement algorithm,
game, and logic translate to our scenario (Theorem~\ref{thm:walk-refinement-logic-game-correspondence}).
In particular, we prove 
tight bounds on the length of shortest winning strategies
and optimal quantifier depth, respectively, of~$\Theta(n)$
(Theorem~\ref{thm:lower-bound-game}, Corollaries~\ref{cor:lower-bound-walk-logic} and~\ref{cor:lower-bound-walk-refinement}).

We should remark that while the combinatorial techniques
from~\cite{DBLP:conf/lics/KieferS16a}
showing the upper bound of~$\bigO(n^2/\log n)$
also translate to the setting without counting,
our techniques seem to strongly rely on counting,
since only the counting itself ensures
the correspondence to matrix algebras that we exploit.

\subsection{Related work}
Deep Stabilization (see~\cite{MR0543783}),
developed by Weisfeiler and Leman,
is a generalization of the classic $2$-dimensional WL algorithm.
It can in turn be seen as a restricted form
of the $k$-dimensional WL algorithm (for a suitable $k$)
in the sense of Babai (see~\cite{CaiFI1992}).
For each~$k\in \mathbb{N}$ both generalizations give a 
polynomial-time algorithm
that, as~$k$ increases, can distinguish more and more non-isomorphic graphs.
 
Over the course of the years
striking connections have been drawn between the Weisfeiler-Leman algorithm
and seemingly unrelated areas of research.
While at first it was unclear whether the algorithm
(for some~$k$) solves the graph isomorphism in polynomial time,
the seminal paper of Cai, Fürer, and Immerman~\cite{CaiFI1992}
answered this question in the negative.
Not only did they construct for each~$k\in \mathbb{N}$ graphs that
cannot be distinguished by the~$k$-dimensional version,
but they also exhibited a close connection to a logic with counting and
described the Ehrenfeucht–Fraïssé type duplicator-spoiler game mentioned above.
Optimal strategies in the game reflect precisely the outcome of the algorithm.
The precise logic in question here
is the~$k$-variable fragment of first order logic with counting.

Babai employs the~$k$-dimensional WL algorithm,
with~$k$ logarithmic in the input,
as a subroutine in his quasi-polynomial time algorithm for graph isomorphism testing~\cite{DBLP:conf/stoc/Babai16}. 

In the language of Grohe~\cite{MR3729479},
the Weisfeiler-Leman dimension of a graph~$G$ is the least number~$k$
for which the~$k$-dimensional WL algorithm
distinguishes~$G$ from every graph non-isomorphic to~$G$.
Equivalently, it is the number of variables
needed in said fixed-point logic with counting
to distinguish the graph from every non-isomorphic graph.
Grohe shows~\cite{MR3729479} that graphs with a forbidden minor
have bounded Weisfeiler-Leman dimension,
reassuring the polynomial-time solvability of the isomorphism problem
of such graphs~\cite{MR976178}. 
For recent developments that relate techniques
of the WL algorithm to group-CSP
(constrained satisfaction problems in which the constraints are cosets of a group)
we refer to~\cite{DBLP:conf/soda/BerkholzG17}.

Regarding bounds,
Berkholz and Nordstr\"{o}m~\cite{DBLP:conf/lics/BerkholzN16}
proved a lower bound on the number of iterations
of the~$k$-dimensional WL algorithm for finite structures.
Specifically, they show for sufficiently large~$k$ the existence of $n$-element relational structures
distinguished by the~$k$-dimensional WL algorithm
but for which~$n^{o(k/\log k)}$ iterations do not suffice.
For a different logic,
namely the $3$-variable existential negation-free fragment of first-order logic,
Berkholz also developed techniques to prove tight bounds~\cite{DBLP:conf/cp/Berkholz14}. In contrast to these bounds,
Fürer's lower bound~\cite{Furer2001} of~$\Omega(n)$ mentioned above is applicable to graphs and in fact also applies to all fixed dimensions~$k$.

\section{Preliminaries}

We denote with $[n]$ the set of numbers $\{1, \dots , n\}$ and with
$\mset{x_1, \dots , x_k}$ the multiset containing the elements $x_1$ to $x_k$.
In this paper we work with colored directed graphs ${G = (V, E, \chi)}$, where~$\chi\colon E \to C$ is a coloring function into some set~$C$ of colors. We will often consider complete directed graphs (with loops), i.e., the case~$E=V^2$.
We always require that $\chi$ assigns different colors
to loops than it does to other edges (i.e.,~${\chi(v,v) \neq \chi(u,w)}$ whenever~$u\neq w$).
For a tuple of $m > 1$ vertices $(v_1, \dots, v_m) \in V^m$
we set \[\walkcolor{\chi}(v_1, \dots , v_m) := (\chi(v_1, v_2), \chi(v_2, v_3), \dots, \chi(v_{m-1}, v_m))\]
and for a single vertex $v\in V$ we set $\chi(v) := \chi(v,v)$.

A coloring $\chi$ induces a partition $\partition{\chi}$ of the vertex pairs.
For two colorings~$\chi$ and~$\chi'$ we write
$\partition{\chi} \refines \partition{\chi'}$
to say that $\partition{\chi}$ is finer than $\partition{\chi'}$.
If not ambiguous we may only write $\chi \refines \chi'$.
If $\partition{\chi} = \partition{\chi'}$ we also write $\chi \equiv \chi'$.
We say that $\chi$ respects \emph{converse equivalence}
if ${\chi(u_1,v_1) = \chi(u_2, v_2)}$ implies ${\chi(v_1,u_1) = \chi(v_2, u_2)}$
for all $u_1,u_2,v_1,v_2 \in V$,
i.e., the color $\chi(u_1,v_1)$ determines $\chi(v_1, u_1)$.

A \emph{$k$-walk} or walk of \emph{length $k$} from $v_1$ to $v_{k+1}$ is a tuple $(v_1, \dots, v_{k+1}) \in V^{k+1}$.
Its color is $\walkcolor{\chi}(v_1, \dots, v_{k+1})$.
We say that tuples in $C^k$ are (potential) \emph{$k$-walk colors}
in $\chi$ and
omit the coloring if it is clear from the context.

A \emph{refinement} $r$ is a function
that for each graph $G$ and coloring $\chi$
yields a new coloring $\chi'$
such that $\chi' \refines \chi$.
Additionally, $r$ is required to be isomorphism invariant:
if we apply $r$  to two isomorphic, complete, and colored graphs
(i.e.,~there is an isomorphism respecting the partitions induced by the colorings)
then we obtain two new colorings that make the graphs isomorphic again.

We write $\chi_r$ for the application of the refinement $r$
to the coloring $\chi$ and $\chi_r^m$ for $m$ applications of $r$, i.e., ${\chi_r^{m+1}= {(\chi_r^m)}_r}$.
We denote with~$\chi_r^\infty$ the stable coloring,
i.e.,~the~$\chi_r^m$ for the smallest $m$
such that $\chi_r^m \equiv \chi_r^{m+1}$.
Let $G' = (V', E', \chi')$ be another colored complete graph,
and suppose $u, v \in V$ and $u',v' \in V'$.
We say that the refinement $r$
\emph{distinguishes} $(u,v)$ from $(u',v')$ in $m$ iterations
if $\chi_r^m(u,v) \neq {(\chi')}_r^m(u',v')$.
The refinement~$r$ \emph{distinguishes} $G$ and $G'$
in $m$ iterations, if the multiset of colors after $m$ iterations is different, that is
\[\msetcondition{\chi_r^m(u,v)}{u,v\in V} \neq \msetcondition{{(\chi')}_r^m(u',v')}{u',v'\in V'}.\]
We call the number of applications of $r$
needed to obtain the stable partition
the \emph{iteration number} of $r$.

Now let $G=(V,E)$ be an undirected and uncolored graph. 
We turn~$G$ to a colored complete graph
by defining a coloring $\chi\colon V^2\rightarrow \{-1,0,1\}$ setting~$\chi(v,v)=-1$ for all~${v \in V}$, as well as~$\chi(v,u)=1$ if~$v\neq u$ and~$(v,u)\in E(G)$, and setting~${\chi(v,u)=0}$ otherwise.
We refer to $\chi$ as the \emph{initial coloring} of $G$.
By construction the initial coloring respects converse equivalence.
A refinement distinguishes two undirected and uncolored graphs
if the respective initial colorings are distinguished by the refinement.
The analogous definition applies to vertex pairs.

\section{The Weisfeiler-Leman Refinement}

In this section we recall the $2$-dimensional WL refinement
and its connections to the counting logic with $3$ variables
and the bijective $3$-pebble game.
A~reader familiar with these notions should feel free to proceed to the next section.

Of particular interest in this paper is the
$2$-dimensional \emph{Weisfeiler-Leman} refinement~$\wlrefinementname$:
\[\wlrefinement{\chi}(u,v) := \msetcondition{\walkcolor{\chi}(u,w,v)}{w \in V}.\]
Intuitively, $\wlrefinementname$ refines the color of a vertex pair
with the colors of all triangles containing this pair.
This definition gives indeed a refinement:
because loops and non-loops always have distinct colors,
the presence of the color $\walkcolor{\chi}(u,u,v)$ (or $\walkcolor{\chi}(u,v,v)$) in the multiset ensures that
pairs colored differently remain colored differently after applying $\wlrefinementname$. In particular, we do not need to include the color $\chi(u,v)$ of the previous iteration in the new color explicitly
to ensure that $\wlrefinementname$ is a refinement.
In this paper, we refer with ``the Weisfeiler-Leman refinement'' always
to the $2$-dimensional WL refinement.

There is a close connection between the WL refinement,
the counting logic with three variables $\countinglogic{3}$,
and the so called bijective $3$-pebble game.

The logic $\countinglogic{3}$ provides counting existential quantification
and is limited to use three variables
(but they may be bound multiple times).
For~$k\in \mathbb{N}$ in general, $\countinglogic{k}$~formulas are defined
for a variable set $\mathcal{V}$ of size $k$ by the following grammar:
\[ \phi ::= 
x = y \mid
x \sim y \mid
\phi \land \phi \mid
\neg \phi \mid 
\exists^j x.~\phi  \hspace{0.5cm} x,y\in \mathcal{V}, j \in \nat.\]
The variables will be interpreted as vertices of an undirected graph,
$x=y$ expresses equality
and $x \sim y $ the edge relation of the graph.
A counting quantifier $\exists^j x.~\phi$ is satisfied if
there are at least $j$ distinct vertices
that satisfy~$\phi$.

The bijective $3$-pebble game is played by two players
called Spoiler and Duplicator on two undirected graphs ${G=(V,E)}$ and ${G'=(V',E')}$.
There are three pebble pairs $(p_1, q_1), (p_2, q_2), (p_3, q_3)$,
where the pebbles of the $i$-th pair are labeled with number $i$.
Initially, all pebbles are placed beside the graphs.
During the game the $p_i$ pebbles will be placed on the vertices of $G$
and the $q_i$ pebbles on the vertices of $G'$.
A~round of the game consists of the following moves:
\begin{enumerate}
	\item Spoiler picks up a pair of pebbles $(p_i, q_i)$.
	\item Duplicator chooses a bijection $f\colon  V \to V'$.
	\item Spoiler places $p_i$ on a vertex $v \in V$
	      and $q_i$ on $f(v) \in V'$.
\end{enumerate}
Spoiler wins the game after the~$m$-th round,
if mapping
the vertex covered by pebble~$p_i$
to the vertex covered by pebble~$q_i$
is not an isomorphism
of the subgraphs of~$G$ and~$G'$ induced by the vertices covered by pebbles.
Duplicator wins the game if Spoiler never wins the game.

The connection between the $2$-dimensional WL refinement,
the logic $\countinglogic{3}$, and the bijective $3$-pebble game is the following:
Let $(u,v) \in V^2$ and $(u',v') \in (V')^2$ be vertex pairs.
Then the following statements are equivalent~\cite{CaiFI1992, Hella96}:
\begin{itemize}
	\item If $u,u',v,v'$ are covered by $p_1,q_1,p_2,q_2$, respectively,
	      then Spoiler has a winning strategy finished after at most~$m$ rounds
	      (i.e.,~a way to win in at most $m$ rounds whatever Duplicator does).
	\item There is a $\countinglogic{3}$ formula $\phi(x,y)$
	      of quantifier depth~$m$ with
	      exactly two free variables,
	      such that $\phi$ holds on~$G$
	      when assigning~$u$ to~$x$ and~$v$ to~$y$,
	      but not on~$G'$ when assigning~$u'$ to~$x$ and~$v'$ to~$y$.
	\item After~$m$ iterations of the Weisfeiler-Leman refinement
	      (starting with the initial colorings of~$G$ and~$G'$)
	      the pairs $(u,v)$ and $(u',v')$ are colored differently.
\end{itemize}
It follows that the WL refinement distinguishes
exactly the same graphs as the logic $\countinglogic{3}$
(there is a sentence holding on one graph but not the other)
and as the bijective $3$-pebble game
(Spoiler has a winning strategy).

\section{Walk Refinement}

We now introduce a new refinement.
Suppose that ${G=(V,E,\chi)}$ is a complete and colored graph and recall that ${\walkcolor{\chi}(v_1, \dots , v_m) = (\chi(v_1, v_2), \chi(v_2, v_3), \dots, \chi(v_{m-1}, v_m))}$.
We define for $k \geq 2$ the \emph{$k$-walk refinement} to be
the function that for~$G$ and~$\chi$
gives the new coloring $\walkrefinementb{\chi}{k}$
defined by
\[\walkrefinementb{\chi}{k}(u,v) := \msetcondition{
	\overline{\chi}(u, w_1, \dots,  w_{k-1}, v)} {w_i \in V}.\]

Intuitively, the $k$-walk refinement refines the color of a vertex pair~$(u,v)$
with the color sequence of the traversed vertex pairs along walks, taken over all possible walks of length~$k$
from~$u$ to~$v$ (taken as multiset).
Note that,
since $\chi$ assigns different colors to loops,
the refinement implicitly also refines with respect to walks of shorter lengths $k' < k$
(indeed, the information is contained in the walks whose first $k-k'$ steps are of color $\chi(u)$, i.e., which are stationary at~$u$).
So the $k$-walk refinement is indeed a refinement,
i.e.~$\walkrefinementb{\chi}{k} \refines \chi$,
because walks of length $1$ are just the old colors.
It is easy to see that the $k$-walk refinement
is isomorphism invariant and preserves converse equivalence.

From what we just argued, obviously $\walkrefinementb{\chi}{k} \refines \walkrefinementb{\chi}{j}$ if $k \geq j$.
Also note that $2$-walk refinement is exactly the 2\nobreakdash-dimensional Weisfeiler-Leman refinement. 
Thus, for $k\geq 2$
\[\walkrefinementb{\chi}{k} \refines \wlrefinement{\chi} \refines \chi.\]

We argue next that $k$-walk refinement can be simulated with a logarithmic number of Weisfeiler-Leman refinements.

\begin{lemma}
	If $k \geq 2$ then \label{lem:simulate-walk-by-WL-refinement}
	$\wlrefinement{\chi}^{\lceil\log k\rceil} \refines \walkrefinementb{\chi}{k}$.
\end{lemma}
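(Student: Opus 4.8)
The plan is to show that $\lceil \log k\rceil$ iterations of the Weisfeiler-Leman refinement (which is $2$-walk refinement) already compute enough information to recover the color $\walkrefinementb{\chi}{k}(u,v)$. The intuition is a doubling argument: one WL iteration lets us go from knowing $j$-walk colors to knowing $2j$-walk colors, so after $t$ iterations we know all $2^t$-walk colors, and $2^{\lceil\log k\rceil} \ge k$. More precisely, I would prove by induction on $t\ge 0$ the statement $\wlrefinement{\chi}^{t} \refines \walkrefinementb{\chi}{2^t}$. The base case $t=0$ is exactly the observation already made in the excerpt, that $\walkrefinementb{\chi}{1}$ just records the old colors, i.e. $\chi \refines \walkrefinementb{\chi}{1}$ (here I am using the convention that $1$-walk refinement assigns to $(u,v)$ the single edge color, and that shorter walks are implicitly captured because loops have distinguished colors). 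The lemma then follows from $\wlrefinement{\chi}^{\lceil\log k\rceil} \refines \walkrefinementb{\chi}{2^{\lceil\log k\rceil}} \refines \walkrefinementb{\chi}{k}$, using the monotonicity $\walkrefinementb{\chi}{k'} \refines \walkrefinementb{\chi}{k}$ for $k' \ge k$ noted right before the lemma.

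For the inductive step, set $\psi := \wlrefinement{\chi}^{t}$ and assume $\psi \refines \walkrefinementb{\chi}{2^t}$; I want $\psi_{\wlrefinementname} \refines \walkrefinementb{\chi}{2^{t+1}}$. The key point is that the multiset
\[
  \msetcondition{\walkcolor{\chi}(u, w_1, \dots, w_{2^{t+1}-1}, v)}{w_i \in V}
\]
can be reconstructed from $\msetcondition{(\walkrefinementb{\chi}{2^t}(u,m),\, \walkrefinementb{\chi}{2^t}(m,v))}{m \in V}$, because a walk of length $2^{t+1}$ from $u$ to $v$ decomposes uniquely at its midpoint $m$ (the $2^t$-th vertex) into a walk of length $2^t$ from $u$ to $m$ and a walk of length $2^t$ from $m$ to $v$, and the $\walk$-color of the concatenation is the concatenation of the two $\walk$-colors. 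Hence the multiset of all $2^{t+1}$-walk colors is obtained by, for each $m$, forming all pairwise concatenations of a $u$-to-$m$ $2^t$-walk color with an $m$-to-$v$ $2^t$-walk color — and this combinatorial operation depends only on the pair of multisets $\walkrefinementb{\chi}{2^t}(u,m)$ and $\walkrefinementb{\chi}{2^t}(m,v)$. By the induction hypothesis $\psi$ is at least as fine as $\walkrefinementb{\chi}{2^t}$, so $\psi(u,m)$ determines $\walkrefinementb{\chi}{2^t}(u,m)$ and $\psi(m,v)$ determines $\walkrefinementb{\chi}{2^t}(m,v)$; therefore $\psi_{\wlrefinementname}(u,v) = \msetcondition{\walkcolor{\psi}(u,m,v)}{m\in V} = \msetcondition{(\psi(u,m),\psi(m,v))}{m\in V}$ determines the target multiset of $2^{t+1}$-walk colors, which is exactly $\walkrefinementb{\chi}{2^{t+1}}(u,v)$. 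This gives $\psi_{\wlrefinementname} \refines \walkrefinementb{\chi}{2^{t+1}}$.

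The main thing to be careful about — the only real obstacle — is bookkeeping around walk lengths and the "implicit shorter walks" convention: I must make sure that the midpoint decomposition is stated for the correct length parity and that the degenerate cases ($u = w_i$ or repeated vertices, loops) are handled by the already-noted fact that stationary walks of color $\chi(u)$ are present, so that $\walkrefinementb{\chi}{2^t}$ really does subsume $\walkrefinementb{\chi}{j}$ for all $j \le 2^t$. Once the decomposition-at-the-midpoint observation is phrased cleanly as "the $2^{t+1}$-walk-color multiset between $u$ and $v$ is a function of $\msetcondition{(\walkrefinementb{\chi}{2^t}(u,m), \walkrefinementb{\chi}{2^t}(m,v))}{m \in V}$", the induction runs through with no further calculation, and the final chain of refinements closes the proof.
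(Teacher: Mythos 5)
Your proposal is correct and follows essentially the same route as the paper's proof: an induction showing $\wlrefinement{\chi}^{t} \refines \walkrefinementb{\chi}{2^t}$, where the inductive step splits each $2^{t+1}$-walk at its midpoint and recovers the walk-color multiplicities as sums of products of the two halves' multiplicities, followed by the monotonicity $\walkrefinementb{\chi}{2^{\lceil\log k\rceil}} \refines \walkrefinementb{\chi}{k}$. The only cosmetic difference is that you start the induction at $t=0$ (requiring a convention for $1$-walk refinement) whereas the paper starts at $i=1$, where the claim is immediate from the definition of $\wlrefinementname$ as $2$-walk refinement.
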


\begin{proof}
	Let~$C$ be the set of colors of $\chi \colon V^2 \to C$
	and let~$C_i$ be the set of colors of $\wlrefinement{\chi}^i \colon V^2 \to C_i$.
	
	We show by induction
	that after $i \geq 1$ iterations of the Weisfeiler-Leman refinement
	for each color $d \in C_i$
	there is a function $\bar{d}\colon C^{(2^i)} \to \nat$
	with the following property:
    For all $u,v \in V$ of color $\wlrefinement{\chi}^i(u,v) = d$
    and for all $2^i$-walk colors $(c_1, \dots c_{2^i}) \in C^{(2^i)}$ in $\chi$
	there are exactly 
	$\bar{d}(c_1, \dots, c_{2^i})$
	many $(c_1, \dots, c_{2^i})$ colored walks between $u$ and $v$ in $\chi$.
	In particular, the number of $(c_1, \dots, c_{2^i})$ colored walks
	is the same for all such $u$ and $v$.
	This implies $\wlrefinement{\chi}^{i} \refines \walkrefinementb{\chi}{2^i}$.
	
	For $i=1$, the Weisfeiler-Leman refinement assigns colors
	such that every color just contains the possible $2$-walk colors. So assume~$i>1$.
	
	Let $u,v\in V$ be vertices, $d_1, e_1, \dots, d_n, e_n \in C_{i}$
	be colors of~$\wlrefinement{\chi}^i$,
	$${d = \mset{(d_1,e_1),\dots, (d_{n}, e_{n})} \in C_{i+1}}$$
	be a color of~$\wlrefinement{\chi}^{i+1}$,
	and $(c_1, \dots, c_{2^{i+1}}) \in C^{(2^{i+1})}$ be a $2^{(i+1)}$-walk color.
	We set 
	\[\bar{d}(c_1, \dots, c_{2^{i+1}})
	  := \sum_{j \in [n]} \bar{d}_j (c_1, \dots, c_{2^i}) \cdot \bar{e}_j (c_{2^i+1}, \dots, c_{2^{i+1}}).\]
	By induction hypothesis $\bar{d}_i$ and $\bar{e}_i$
	yield the correct number of $2^i$-walks and
	so $\bar{d}$ yields the correct number of $2^{i+1}$ walks.
\end{proof}

This lemma corresponds to the known fact that in $\countinglogic{3}$
walks of length $k$ can be defined by a formula of quantifier depth $\lceil \log k \rceil$. These walks can be counted using counting quantifiers similarly.
Considering paths we see that the $k$-walk refinement cannot be simulated with less than a logarithmic number of Weisfeiler-Leman refinements, and in that sense the bound in the lemma is tight.
On the other hand the relation
can be strict, that is
${\wlrefinement{\chi}^{\lceil\log k\rceil} \strictrefines \walkrefinementb{\chi}{k}}$.
However, the Weisfeiler-Leman and $k$-walk refinement produce the same stable partition
because finitely many steps of one
subsume a single step of the other.

\begin{lemma}
	\label{lem:wl-and-walk-same-stable}
	If $k \geq 2$, then $\wlrefinement{\chi}^{\infty} \equiv \walkrefinementb{\chi}{k}^{\infty}$.
\end{lemma}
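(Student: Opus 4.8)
The plan is to show that the two refinements are interreducible in finitely many steps, so they induce the same stable partition. The key observation is that each single step of one refinement is subsumed by finitely many steps of the other, and this ``mutual domination'' forces the stable partitions to coincide.

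First I would recall the inequality $\walkrefinementb{\chi}{k} \refines \wlrefinement{\chi}$, which holds for every coloring $\chi$ since $k$-walks include the $2$-walks (stationary steps handle the shorter lengths). Iterating this gives $\walkrefinementb{\chi}{k}^{m} \refines \wlrefinement{\chi}^{m}$ for all $m$, where one has to check that the refinement operators are monotone in the sense that $\chi \refines \chi'$ implies $\chi_r \refines \chi'_r$ for isomorphism-invariant refinements $r$ (this is a standard fact about the WL and walk operators: a finer input coloring can only yield a finer output coloring, since colors of walks are computed from colors of their pairs). Taking $m$ large enough that both sides have stabilized yields $\walkrefinementb{\chi}{k}^{\infty} \refines \wlrefinement{\chi}^{\infty}$.

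For the reverse direction I would invoke Lemma~\ref{lem:simulate-walk-by-WL-refinement}: applying it with $2^{\lceil \log k \rceil} \geq k$ and using that $\walkrefinementb{\chi}{2^i} \refines \walkrefinementb{\chi}{k}$ whenever $2^i \geq k$, we get $\wlrefinement{\chi}^{\lceil \log k \rceil} \refines \walkrefinementb{\chi}{k}$. Again by monotonicity, iterating, $\wlrefinement{\chi}^{\lceil \log k \rceil \cdot m} \refines \walkrefinementb{\chi}{k}^{m}$ for all $m$, and choosing $m$ past the stabilization point of both operators gives $\wlrefinement{\chi}^{\infty} \refines \walkrefinementb{\chi}{k}^{\infty}$. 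Combining the two inclusions of partitions yields $\wlrefinement{\chi}^{\infty} \equiv \walkrefinementb{\chi}{k}^{\infty}$.

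The one point that needs a little care — and is the main (if minor) obstacle — is making the stabilization argument precise: once a refinement operator reaches its stable coloring, further applications do not change the induced partition, so $\chi_r^{\infty} \equiv \chi_r^{m}$ for all sufficiently large $m$; one must note that applying a (monotone, isomorphism-invariant) refinement to a coloring with the same partition as $\chi_r^\infty$ again yields that same partition, so that the chains of inequalities above can be evaluated ``at infinity.'' With monotonicity and this stabilization bookkeeping in hand, the proof is a short two-line sandwich between the two lemmas.
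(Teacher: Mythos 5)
Your proof is correct and follows essentially the same route as the paper: the paper's proof is precisely the two-sided sandwich $\walkrefinementb{\chi}{k}^j \refines \wlrefinement{\chi}^j \equiv \wlrefinement{\chi}^{j\lceil\log k\rceil} \refines \walkrefinementb{\chi}{k}^j$ evaluated at a common stabilization index $j$, using $\walkrefinementb{\chi}{k}\refines\wlrefinement{\chi}$ in one direction and Lemma~\ref{lem:simulate-walk-by-WL-refinement} in the other. Your added care about monotonicity of the refinement operators is a legitimate detail that the paper leaves implicit, but it does not change the argument.
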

\begin{proof}
	Suppose that~$\walkrefinementb{\chi}{k}^{\infty} \equiv  \walkrefinementb{\chi}{k}^j$
	and $\wlrefinement{\chi}^{\infty} \equiv \wlrefinement{\chi}^j$
	for some suitable $j$.
	Then
$$\walkrefinementb{\chi}{k}^j \refines {\wlrefinement{\chi}^j \equiv \wlrefinement{\chi}^{j \cdot \lceil \log k \rceil} \refines \walkrefinementb{\chi}{k}^j}$$
	by Lemma~\ref{lem:simulate-walk-by-WL-refinement},
	and hence
	$\wlrefinement{\chi}^{\infty} \equiv \walkrefinementb{\chi}{k}^{\infty}$.
\end{proof}

We remark that it is possible that the partitions
produced by the Weisfeiler-Leman refinement
and the partitions produced by $k$-walk refinement
all disagree except for the stable partitions in the end
(for example this is the case for the graphs $\repl{\grid{2}{n}}$ for ${2\leq n \leq 10}$ defined in Section~\ref{sec:lower-bound} as shown by computer calculations with $k = n$).

We define the \emph{walk refinement} $\walkrefinement{\chi}$ as
the finest $k$-walk refinement. More precisely, we define it as $\walkrefinementb{\chi}{k}$ for the smallest~$k$,
for which
$\walkrefinementb{\chi}{k}$ induces the finest partition
over all choices of $k$.
We will prove in Section~\ref{sec:walk-refinment-iteration-number}
that $n^2$-walk refinement always produces this finest partition, thus~$k\leq n^2$. From that we will conclude that
${\walkrefinement{\chi} \equiv \walkrefinementb{\chi}{n^2}}$ and ${\wlrefinement{\chi}^{\bigO(\log n)} \refines \walkrefinement{\chi}}$. 
This will allow us
to bound the iteration number of the Weisfeiler-Leman refinement
by bounding the iteration number of walk refinement.

\section{Iteration Number of Walk Refinement}
\label{sec:walk-refinment-iteration-number}

In this section we show
that walk refinement stabilizes after $\bigO(n)$ iterations.
We interpret the partitions
produced by walk refinement as matrix algebras.
If walk refinement strictly refines the partition
then the algebra is strictly enlarged.
We obtain the linear bound by observing that
these algebras can be nested at most a linear number of times.

Throughout this section,
let $G=(V, E , \chi)$ be a complete and colored graph
with ${V = [n]}$
and
let $\chi \colon E \to C$ respect converse equivalence.

\subsection{Background on Matrix Algebras}

In this section we make use of standard material from representation theory,
see e.g.~\cite{zimmermann2014}.
Let $S$ be a set of $n\times n$ matrices over~$\mathbb{C}$.
We denote with $\linearspan{S}$ the $\mathbb{C}$-linear span of $S$
and with 
\[\productsOfLength{S}{k} := \setcondition{M_1 \cdot \ldots \cdot M_j }{j \leq k, M_i \in S \text{ for all } i \in [j]}\]
the set of all products of matrices in $S$ with at most $k$ factors.
Clearly ${\productsOfLength{S}{k} \subseteq \productsOfLength{S}{k+1}}$.
We write $\allproducts{S}$ for the union of all $\productsOfLength{S}{k}$.

For a color $c \in C$ we denote with~$M_c$
the $n\times n$ \emph{color~$c$ adjacency matrix},
that is~$(M_c)_{ij}=1$ if~$\chi(i,j)=c$ and~$(M_c)_{ij}=0$ otherwise.
The set of all color adjacency matrices is denoted by $M_\chi$.
The coloring $\chi$  thereby induces an $n \times n$ matrix algebra over the complex numbers:
\[ \inducedalgebra{\chi} := \linearspan{\allproducts{M}_\chi}.\]
The algebra $\inducedalgebra{\chi}$ is closed under (conjugate) transposition
because $\chi$ respects converse equivalence.

We write $\fullmatrixalgebra{k}$ for the (full) matrix algebra
of all $k \times k$ matrices over the complex numbers.
It is a well-known fact that a matrix algebra~${\algebra{A} \subseteq \fullmatrixalgebra{n}}$
closed under conjugate transposition is always semisimple.
Indeed, if $M$ is in the Jacobson radical of $\algebra{A}$,
so is $M^*M$.
But $M^*M$ is diagonalizable (because it is Hermitian)
and nilpotent (because the radical is nilpotent, Lemma 1.6.6 in~\cite{zimmermann2014})
and hence $M^*M = 0$ and so $M = 0$.
Then the radical itself is $0$,
which is one characterization of semisimplicity.

By the theorem of Wedderburn (Corollary~1.4.17 in~\cite{zimmermann2014})
a semisimple matrix algebra $\algebra{A} \subseteq \fullmatrixalgebra{n}$
is always isomorphic to a direct sum of full matrix algebras, that is
\[\algebra{A} \isomorph \fullmatrixalgebra{a_1} \oplus \dots \oplus \fullmatrixalgebra{a_k},\]
for some positive integers~$k$ and~$a_1,\ldots, a_k$.
The direct sum decomposition is unique up to reordering.
We will prove a bound on the length of proper chains
${\algebra{A}_1 \subsetneq \dots \subsetneq \algebra{A}_m \subseteq \fullmatrixalgebra{n}}$
of semisimple matrix algebras.
This is the essential theorem to bound the iteration number of walk refinement:

\begin{theorem}
	\label{thm:matrix-alegbra-chain-length}
	Let 
	$\algebra{A}_1 \subsetneq \dots \subsetneq \algebra{A}_m \subseteq \fullmatrixalgebra{n}$
	be a chain of semisimple strict subalgebras.
	Then $m \leq 2n$.
\end{theorem}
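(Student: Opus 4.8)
I would prove the theorem by attaching to each semisimple algebra $\algebra{A} \subseteq \fullmatrixalgebra{n}$ a numerical invariant that strictly increases along a proper chain and that is bounded by $2n$. The natural candidate, given the Wedderburn decomposition $\algebra{A} \isomorph \fullmatrixalgebra{a_1} \oplus \dots \oplus \fullmatrixalgebra{a_k}$, is something like $\Phi(\algebra{A}) := k + \sum_{i} (\text{something depending on } a_i)$. The first task is to identify the right $\Phi$. Observe that $\sum_i a_i^2 = \dim \algebra{A}$, which by itself already strictly increases, but $\dim$ can be as large as $n^2$, so dimension alone is far too weak. The key structural fact to exploit is that $\algebra{A} \subseteq \fullmatrixalgebra{n}$ forces the decomposition to ``fit'' inside $n$: concretely, one shows that the identity of $\fullmatrixalgebra{n}$ decomposes as a sum of central idempotents $e_1, \dots, e_k$ of $\algebra{A}$, and each $e_i \fullmatrixalgebra{n} e_i$ is a full matrix algebra $\fullmatrixalgebra{n_i}$ containing $\algebra{A} e_i \isomorph \fullmatrixalgebra{a_i}$ as a subalgebra, with $\sum_i n_i = n$. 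So $a_i \le n_i$ and in fact the embedding $\fullmatrixalgebra{a_i} \hookrightarrow \fullmatrixalgebra{n_i}$ has a multiplicity $m_i \ge 1$ with $a_i m_i = n_i$. This reduces everything to understanding chains within a single block.

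\textbf{Key steps.} First I would set up the block decomposition described above, so that a proper inclusion $\algebra{A} \subsetneq \algebra{B}$ inside $\fullmatrixalgebra{n}$ is analyzed through how the central idempotents of $\algebra{A}$ refine or merge relative to those of $\algebra{B}$, together with how the individual blocks grow. Second, I would propose the invariant $\Phi(\algebra{A}) = \sum_{i=1}^k (1 + m_i)$ where $m_i = n_i / a_i$ is the multiplicity of the $i$-th block (equivalently $\Phi(\algebra{A}) = k + \sum_i m_i$), and check that $\sum_i (1 + m_i) \le \sum_i (m_i + a_i m_i) = \sum_i n_i = n$ when all $a_i \ge 1$ — wait, that only gives $\le n$ when every $a_i \ge 1$, which is automatic, but I should be careful about the $+1$: actually $1 + m_i \le a_i m_i + m_i$ fails when $a_i = 1$. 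The correct bound is $\Phi(\algebra{A}) = \sum_i(a_i + m_i) \le \sum_i a_i m_i + \sum_i [a_i=1 \text{ correction}]$... this needs the clean inequality $a + m \le am + 1 \le n_i + 1$, which does hold for $a, m \ge 1$ since $(a-1)(m-1) \ge 0$. Summing, $\Phi(\algebra{A}) = \sum_i (a_i + m_i) \le \sum_i (n_i + 1) = n + k \le 2n$. Third, and this is the substance, I would show $\Phi$ is \emph{strictly monotone}: if $\algebra{A} \subsetneq \algebra{B}$ then $\Phi(\algebra{A}) < \Phi(\algebra{B})$. Here one distinguishes cases according to whether the central idempotents change. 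If the block structure of $\algebra{B}$ is a genuine coarsening (some central idempotents of $\algebra{A}$ sum together in $\algebra{B}$), the number of blocks drops but I must argue the multiplicity/size terms compensate with a strict gain; if the central idempotents are unchanged, then within some common block we have a proper inclusion $\fullmatrixalgebra{a} \subsetneq \fullmatrixalgebra{a'}$ of subalgebras of a fixed $\fullmatrixalgebra{n_i}$ with $a < a'$ (a proper subalgebra of a simple algebra, both simple, forces the dimension and hence the degree up), and then $a + m$ strictly increases because $a' \ge a + 1$ while $m' = n_i/a' $ could drop — so I need $a + n_i/a$ to be handled carefully, noting it is not monotone in $a$.

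\textbf{Main obstacle.} The delicate point, and where I expect the real work to lie, is exactly this last non-monotonicity: when a block refines internally, $a_i$ goes up but $m_i = n_i/a_i$ goes down, so $a_i + m_i$ need not increase. The resolution must be that \emph{a proper chain of semisimple subalgebras of a fixed simple $\fullmatrixalgebra{N}$ has its own length bound}, essentially by induction, and one shows the total contribution telescopes correctly — perhaps the right per-block invariant is not $a_i + m_i$ but the number of divisors of $n_i$ lying between $a_i$ and $n_i$, or the length of the divisor chain $a_i \mid \dots \mid n_i$. Since each refinement step replaces $a$ by a proper multiple $a'$ dividing $n_i$, the chain $a \mid a' \mid \dots \mid n_i$ of divisors has length at most $\log_2 n_i$, which is too lossy; instead one wants an additive handle. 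I suspect the cleanest fix is to bound, for each block separately, the number of times it can be strictly refined by $n_i - 1$ via a dimension-counting argument inside $\fullmatrixalgebra{n_i}$ itself, and then combine with the idempotent-merging analysis; making these two mechanisms interact without double-counting, so that the grand total stays $\le 2n$ rather than blowing up to $\sum n_i \log n_i$, is the crux of the proof.
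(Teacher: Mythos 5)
Your overall architecture is the same as the paper's: attach to each semisimple algebra an integer invariant that is bounded by $2n$ and strictly increases along the chain, using the Wedderburn decomposition and the fact that the simple blocks "fit inside" $\fullmatrixalgebra{n}$ (your $\sum_i a_i m_i = n$ is essentially the paper's Lemma~\ref{lem:dimensions-direct-sum}, which it proves via orthogonal idempotents and rank additivity, and your observation that each source block must land injectively in some target block is the paper's Lemma~\ref{lem:homomorphisms-direct-sum}). The problem is that the invariant you actually propose, $\Phi(\algebra{A})=\sum_i(a_i+m_i)$ with $m_i=n_i/a_i$, is not monotone, and you never repair this. A concrete failure: in $\fullmatrixalgebra{4}$, take $\algebra{A}=\mathbb{C}I_4$ (one block, $a=1$, $m=4$, $\Phi=5$) properly contained in $\algebra{B}\isomorph\fullmatrixalgebra{2}$ embedded with multiplicity $2$ ($a=2$, $m=2$, $\Phi=4$); your invariant strictly \emph{decreases}. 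You flag exactly this non-monotonicity of $a+n_i/a$ as "where the real work lies" and then only speculate about fixes (divisor chains, per-block bounds, avoiding double-counting across merging blocks) without carrying any of them out. Since strict monotonicity is the entire content of the theorem, this is a genuine gap, not a detail.

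The repair is to drop the multiplicities from the invariant altogether. The paper uses $s(\algebra{A}) := 2\bigl(\sum_{i=1}^{k} a_i\bigr) - k$, which depends only on the abstract Wedderburn type $(a_1,\dots,a_k)$. Boundedness is your own computation: $\sum_i a_i \le n$, so $s \le 2n-1$. Strict monotonicity follows from the block-embedding facts you already have: choosing for each block $i$ of $\algebra{A}$ a block $f(i)=j$ of $\algebra{B}$ into which it injects, one gets $b_j \ge \sum_{i\in f^{-1}(j)} a_i$, hence $2b_j - 1 \ge 2\bigl(\sum_{i\in f^{-1}(j)} a_i\bigr) - |f^{-1}(j)|$ for every $j$ (including $f^{-1}(j)=\emptyset$, since $b_j\ge 1$); summing over $j$ gives $s(\algebra{A})\le s(\algebra{B})$, and equality forces $f$ to be a bijection with $a_i=b_{f(i)}$, i.e.\ $\algebra{A}=\algebra{B}$. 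In the example above, $s$ goes $1 \to 3 \to 7$ along $\mathbb{C}I_4\subsetneq \fullmatrixalgebra{2}\otimes I_2\subsetneq\fullmatrixalgebra{4}$, as it should. Your route of tracking how the subalgebra sits inside $\fullmatrixalgebra{n}$ (central idempotents, multiplicities) introduces exactly the data that makes monotonicity fail; the intrinsic isomorphism type suffices.
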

To prove the theorem we need several auxiliary lemmas, which may be  	self-evident for a reader familiar with the theory of semisimple algebras.
They show that such chains behave well
with respect to the direct sum decompositions of the $\algebra{A}_i$.

\begin{lemma}
	\label{lem:dimensions-direct-sum}
	If there is an algebra monomorphism
	\[\fullmatrixalgebra{a_1} \oplus \dots \oplus \fullmatrixalgebra{a_k} 
	\to \fullmatrixalgebra{m},\]
	then $\sum_{i=1}^k a_i \leq m$.
\end{lemma}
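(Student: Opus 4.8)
\textbf{Proof plan for Lemma~\ref{lem:dimensions-direct-sum}.}
The plan is to turn the abstract algebra monomorphism into a statement about how the identity matrix of $\fullmatrixalgebra{m}$ decomposes under the image subalgebra, and then count dimensions via the minimal central idempotents. First I would recall that $\algebra{A} := \fullmatrixalgebra{a_1} \oplus \dots \oplus \fullmatrixalgebra{a_k}$ has center spanned by the $k$ block identities $\varepsilon_1, \dots, \varepsilon_k$, which are orthogonal idempotents summing to $1_{\algebra{A}}$. Let $\phi$ denote the monomorphism. Since $\phi$ is an algebra homomorphism it sends idempotents to idempotents and respects orthogonality and sums, so the $f_i := \phi(\varepsilon_i)$ are pairwise orthogonal idempotents in $\fullmatrixalgebra{m}$ with $\sum_i f_i = \phi(1_{\algebra{A}}) =: f$, an idempotent in $\fullmatrixalgebra{m}$. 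An idempotent in a full matrix algebra has rank equal to its trace (being conjugate to a diagonal $0/1$ matrix), and ranks of orthogonal idempotents add up, so $\sum_{i=1}^k \rank{f_i} = \rank{f} \leq m$.

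The heart of the argument is then the lower bound $\rank{f_i} \geq a_i$. Restrict $\phi$ to the $i$-th summand $\fullmatrixalgebra{a_i}$, which has identity $\varepsilon_i$; its image lies in $f_i \fullmatrixalgebra{m} f_i$, and composing with the isomorphism $f_i \fullmatrixalgebra{m} f_i \cong \fullmatrixalgebra{r_i}$ where $r_i = \rank{f_i}$, we obtain an injective algebra homomorphism $\fullmatrixalgebra{a_i} \to \fullmatrixalgebra{r_i}$ sending identity to identity. So it suffices to show that a unital injective homomorphism $\fullmatrixalgebra{a} \to \fullmatrixalgebra{r}$ forces $a \leq r$. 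This is standard: such a map makes $\mathbb{C}^r$ into a module over $\fullmatrixalgebra{a}$, and since $\fullmatrixalgebra{a}$ is simple with unique irreducible module $\mathbb{C}^a$ (of dimension $a$), the module $\mathbb{C}^r$ is a direct sum of copies of $\mathbb{C}^a$, hence $a \mid r$ and in particular $a \leq r$; injectivity guarantees the module is nonzero so at least one copy appears. Summing over $i$ gives $\sum_{i=1}^k a_i \leq \sum_{i=1}^k r_i \leq m$, as desired.

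The main obstacle, such as it is, lies in making precise the claim that $\phi$ maps the $i$-th simple summand into the corner algebra $f_i \fullmatrixalgebra{m} f_i$ and that this corner is itself a full matrix algebra of size $\rank{f_i}$. The first point follows because $\varepsilon_i$ is the identity of the $i$-th summand, so every element $x$ of that summand satisfies $x = \varepsilon_i x \varepsilon_i$, whence $\phi(x) = f_i \phi(x) f_i$; the second is the standard fact that for an idempotent $e$ of rank $r$ in $\fullmatrixalgebra{m}$, conjugating by a matrix that sends the image of $e$ to the span of the first $r$ coordinate vectors identifies $e \fullmatrixalgebra{m} e$ with $\fullmatrixalgebra{r}$. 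Once these bookkeeping facts are in place the dimension count is immediate. An alternative to the module-theoretic step is to note that a unital injection $\fullmatrixalgebra{a} \to \fullmatrixalgebra{r}$ embeds an algebra of $\mathbb{C}$-dimension $a^2$ into one of dimension $r^2$, giving $a^2 \leq r^2$ and hence $a \leq r$ directly — this avoids representation theory entirely and may be the cleanest route for the write-up.
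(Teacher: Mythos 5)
Your proof is correct, but it takes a different route from the paper's. The paper decomposes the identity of $\fullmatrixalgebra{a_1} \oplus \dots \oplus \fullmatrixalgebra{a_k}$ all the way down into the $\sum_{i=1}^k a_i$ \emph{diagonal rank-one} idempotents (one per diagonal entry of each block), pushes these forward under the monomorphism to obtain $\sum_i a_i$ nonzero, pairwise orthogonal idempotents in $\fullmatrixalgebra{m}$, and then only needs two elementary facts: a nonzero idempotent has rank at least $1$, and ranks of orthogonal idempotents add. You instead stop at the $k$ \emph{central} block identities $\varepsilon_i$, so you get only $k$ orthogonal idempotents $f_i$ and must recover the factor $a_i$ separately; this is where your corner-algebra step $f_i \fullmatrixalgebra{m} f_i \cong \fullmatrixalgebra{r_i}$ and the bound $a_i \le r_i$ (via the module structure of $\mathbb{C}^{r_i}$ over $\fullmatrixalgebra{a_i}$, or more simply via $a_i^2 \le r_i^2$) come in. Both arguments are sound. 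The paper's decomposition buys maximal elementarity --- no corner algebras, no representation theory, just rank additivity --- at the cost of juggling more idempotents. Your version is structurally cleaner (it isolates the single statement ``a unital embedding $\fullmatrixalgebra{a} \to \fullmatrixalgebra{r}$ forces $a \le r$'') and the module-theoretic variant even yields the stronger divisibility conclusion $a_i \mid r_i$, which the lemma does not need but which is the standard Bratteli-diagram refinement of this statement. One small point done correctly on your side that is worth keeping explicit in any write-up: the monomorphism need not be unital, so $f = \phi(1_{\algebra{A}})$ is merely \emph{some} idempotent of $\fullmatrixalgebra{m}$, and the final inequality is $\sum_i r_i = \rank{f} \le m$ rather than an equality with $m$.
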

\begin{proof}
	For each~$i$, there are exactly~$a_i$ diagonal matrice
	in~$\fullmatrixalgebra{a_i}$ with one entry $1$ and all other $0$.
	These matrices are nonzero, idempotent, and pairwise orthogonal
	(i.e., the product of any two of them is 0).
	It follows that the direct sum, and hence the monomorphism image,
	contains a set of $d=\sum_{i=1}^k a_i$ 
	nonzero, idempotent, and pairwise orthogonal elements.
	
	Let~$M_1, \dots, M_d \in \fullmatrixalgebra{m}$
	be the matrices of this set.	
	Since the $M_i$ are nonzero, each has rank at least 1.
	Suppose ${i \neq j \in [d]}$.
	Then $M_i+M_j$ is idempotent,
	orthogonal to all other $M_{\ell}$,
	and has  $\rank{M_i+M_j} = \rank{M_i} + \rank{M_j}$
	(see e.g.~Theorem~IV.12 in~\cite{Adrian1937}).
	Because the maximal rank of an~$m\times m$ matrix is $m$,
	it follows by induction that
	$d=\sum_{i=1}^k a_i \leq \rank{\sum_{i=1}^k M_i} \leq m$.
\end{proof}

\begin{lemma}
	\label{lem:homomorphisms-direct-sum}
	Suppose $a_1, \dots, a_k, {b}_1, \dots, {b}_{\ell} \in \nat$ and let 
	\[\phi \colon \bigoplus_{i=1}^k \fullmatrixalgebra{a_i} \to 
	         \bigoplus_{j=1}^\ell \fullmatrixalgebra{b_j}\]
	be an algebra monomorphism.
	Then for every $i \in [k]$ there is a $j\in [\ell]$
	such that~$\pi_j \circ \phi$ maps~$\fullmatrixalgebra{a_i}$
	injectively into~$\fullmatrixalgebra{b_j}$,
	where $\pi_j$ is the projection onto the $j$-th component $\fullmatrixalgebra{b_j}$.
\end{lemma}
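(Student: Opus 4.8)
The plan is to exploit the fact that each $\fullmatrixalgebra{a_i}$ is a simple algebra, so it cannot be "split" nontrivially by a composition of homomorphisms. Fix $i \in [k]$ and write $\algebra{S} := \fullmatrixalgebra{a_i}$, viewed as the $i$-th summand of the domain. Restricting $\phi$ to $\algebra{S}$ (embedded in the direct sum via $0$ in the other coordinates) gives an algebra homomorphism $\psi := \phi|_{\algebra{S}} \colon \algebra{S} \to \bigoplus_{j=1}^{\ell} \fullmatrixalgebra{b_j}$, and this $\psi$ is injective because $\phi$ is. Now consider the projections $\psi_j := \pi_j \circ \psi \colon \algebra{S} \to \fullmatrixalgebra{b_j}$ for $j \in [\ell]$. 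Each $\psi_j$ is an algebra homomorphism out of the simple algebra $\algebra{S}$, so its kernel is a two-sided ideal of $\algebra{S}$, hence is either $0$ or all of $\algebra{S}$; that is, each $\psi_j$ is either injective or the zero map.

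The key step is then to observe that the $\psi_j$ cannot all be the zero map: if $\psi_j = 0$ for every $j$, then $\psi = 0$, contradicting injectivity of $\psi$ (since $\algebra{S} \neq 0$, as $a_i \geq 1$). Therefore there exists some $j \in [\ell]$ with $\psi_j$ not the zero map, and by the dichotomy above this $\psi_j = \pi_j \circ \phi|_{\fullmatrixalgebra{a_i}}$ is injective, which is exactly the claim. One should be slightly careful about whether "algebra homomorphism" here is assumed unital; if not, the kernel argument via simplicity still applies verbatim since simplicity only concerns two-sided ideals, and a nonzero homomorphism out of a simple algebra is automatically injective regardless of unitality.

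I do not expect a real obstacle here — the statement is a standard consequence of the simplicity of full matrix algebras. The only point requiring a little care is making precise the passage from "$\phi$ injective on the direct sum" to "$\phi$ injective on the summand $\fullmatrixalgebra{a_i}$": this is immediate because the embedding of $\fullmatrixalgebra{a_i}$ as the $i$-th coordinate is itself injective, so its composition with $\phi$ is injective. After that, everything reduces to the fact that a two-sided ideal of $\fullmatrixalgebra{a_i}$ is trivial, which is classical (and could be cited from the same reference \cite{zimmermann2014} used for Wedderburn's theorem).
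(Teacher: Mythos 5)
Your proposal is correct and follows essentially the same route as the paper's proof: restrict $\phi$ to the $i$-th summand, use simplicity of $\fullmatrixalgebra{a_i}$ to conclude each $\pi_j\circ\phi_i$ is injective or zero, and note that not all can be zero since $\phi$ is injective. No issues.
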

\begin{proof}
	Full matrix algebras are simple,
	i.e.,~contain no proper nontrivial two-sided ideals. 
	For a simple algebra $\algebra{A}$,
	any homomorphism $\psi \colon \algebra{A} \to \algebra{B}$
	into some algebra $\algebra{B}$
	is either injective or zero
	(otherwise, $\ker(\psi)$ is a proper nontrivial two-sided ideal of $\algebra{A}$).
	Suppose $i \in [k]$.
	The map $\pi_j \circ \phi_i \colon \fullmatrixalgebra{a_i} \to \fullmatrixalgebra{b_j}$ is an algebra monomorphism for all $j \in [\ell]$,
	where $\phi_i$ is the restriction of $\phi$ to
	the $i$-th component $\fullmatrixalgebra{a_i}$.
	Now, $\pi_j \circ \phi_i$ must be injective for some~$j$,
	because if all $\pi_j \circ \phi_i$ were zero,
	$\phi$ was not injective.
\end{proof}

\begin{lemma}
	\label{lem:agreeing-sets-step}
	Let $\algebra{A} \subseteq \algebra{B} \subseteq \fullmatrixalgebra{n}$
	be two semisimple matrix algebras with direct sum decompositions
	\begin{align*}
	\algebra{A} &\isomorph \fullmatrixalgebra{a_1} \oplus \dots \oplus \fullmatrixalgebra{a_k} \text{ and }\\
	\algebra{B} &\isomorph \fullmatrixalgebra{b_1} \oplus \dots \oplus
	\fullmatrixalgebra{b_\ell}.
	\end{align*}
	Then~$ 2(\sum_{i = 1}^{k} a_{i})-k \leq 2(\sum_{j = 1}^{\ell} b_{j})-\ell$ with equality exactly if~$\algebra{A} = \algebra{B}$.
\end{lemma}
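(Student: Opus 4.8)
The plan is to translate the inclusion $\algebra{A}\subseteq\algebra{B}$ into combinatorial data linking the two direct sum decompositions, and then to control the quantity $2(\sum_i a_i)-k$ by a double-counting argument. Fix isomorphisms identifying $\algebra{A}$ with $\bigoplus_{i=1}^k\fullmatrixalgebra{a_i}$ and $\algebra{B}$ with $\bigoplus_{j=1}^\ell\fullmatrixalgebra{b_j}$, and regard the inclusion as an algebra monomorphism $\phi$ between these. For $j\in[\ell]$ let $\pi_j$ denote the projection onto the $j$-th summand and set
\[S_j:=\setcondition{i\in[k]}{\pi_j\circ\phi\text{ is nonzero on the }i\text{-th summand }\fullmatrixalgebra{a_i}}.\]

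First I would show that for every $j$ the restriction of $\pi_j\circ\phi$ to the subalgebra $\bigoplus_{i\in S_j}\fullmatrixalgebra{a_i}$ is injective: its kernel, being an ideal of this semisimple algebra, is the direct sum of some subfamily of the simple components $\fullmatrixalgebra{a_i}$ with $i\in S_j$, but by the definition of $S_j$ no such component lies in the kernel, so the kernel is trivial. Lemma~\ref{lem:dimensions-direct-sum} then yields $\sum_{i\in S_j}a_i\le b_j$ for every $j$. Summing over $j$ and writing $d_i:=|\{\,j\mid i\in S_j\,\}|$ gives $\sum_i a_i d_i\le\sum_j b_j$, while Lemma~\ref{lem:homomorphisms-direct-sum} guarantees $d_i\ge1$ for every $i$. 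Every $S_j$ is moreover nonempty when the inclusion is unital, which is the case relevant for Theorem~\ref{thm:matrix-alegbra-chain-length}, since there every algebra contains the identity matrix; a $j$ with $S_j=\emptyset$ would only add slack, as it still contributes $b_j\ge1$ to $\sum_j b_j$.

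The arithmetic core is a double count of the incidence set $\setcondition{(i,j)}{i\in S_j}$. Writing $\bar E:=\sum_i d_i=\sum_j|S_j|$ and using $d_i\ge1$, $|S_j|\ge1$ (so that $2\bar E\ge k+\ell$) and $a_i\ge1$, we get
\begin{align*}
2\Bigl(\sum_j b_j\Bigr)-\ell\ &\ge\ 2\Bigl(\sum_i a_i d_i\Bigr)-\ell\ =\ 2\sum_i a_i+2\sum_i a_i(d_i-1)-\ell\\
&\ge\ 2\sum_i a_i+2(\bar E-k)-\ell\ \ge\ 2\sum_i a_i-k,
\end{align*}
which is the asserted inequality. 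For the equality statement I would trace which of the estimates must be tight, namely $\sum_{i\in S_j}a_i=b_j$ for all $j$, no empty $S_j$, and $d_i=|S_j|=1$ for all $i,j$. The latter conditions force the relation $\setcondition{(i,j)}{i\in S_j}$ to be the graph of a bijection $\sigma\colon[k]\to[\ell]$ with $b_{\sigma(i)}=a_i$, so $\{a_i\}$ and $\{b_j\}$ coincide as multisets. Then $\dim\algebra{B}=\sum_j b_j^2=\sum_i a_i^2=\dim\algebra{A}$, and since $\algebra{A}\subseteq\algebra{B}$ are finite dimensional this gives $\algebra{A}=\algebra{B}$; the converse is immediate from the uniqueness of the direct sum decomposition.

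The step I expect to be the main obstacle is obtaining the constant $2$: a direct count via Lemma~\ref{lem:dimensions-direct-sum} only yields $\sum_i a_i\le\sum_j b_j$, and the factor $2$ appears precisely because the incidence set can be counted from both sides, giving $2\bar E=\sum_i d_i+\sum_j|S_j|\ge k+\ell$. Identifying exactly when all the inequalities turn into equalities, together with the minor non-unital case, are the remaining points that require care.
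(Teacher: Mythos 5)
Your proof is correct and follows essentially the same route as the paper's: both rest on Lemmas~\ref{lem:dimensions-direct-sum} and~\ref{lem:homomorphisms-direct-sum} together with the componentwise bound $b_j \ge \sum a_i$ over the simple components of $\algebra{A}$ mapping injectively into $\fullmatrixalgebra{b_j}$, and differ only in bookkeeping (you double-count the full incidence relation $\{(i,j) : i \in S_j\}$, while the paper fixes a single target $f(i)$ for each $i$ and sums the per-$j$ inequality $2b_j-1\ge 2\sum_{i\in f^{-1}(j)}a_i-|f^{-1}(j)|$). Your handling of possibly empty $S_j$ and of the equality case is sound.
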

\begin{proof}
	First, we pick an algebra monomorphism 
	${\phi \colon \bigoplus_{i=1}^k \fullmatrixalgebra{a_i} 
		\to \bigoplus_{j=1}^\ell \fullmatrixalgebra{b_j}}$.
	Second, for each ${i\in [k]}$ we choose an~${f(i)=j}$
	such that~${\pi_j \circ \phi}$ maps $\fullmatrixalgebra{a_i}$
	injectively into~$\fullmatrixalgebra{b_j}$.
	Such choices exist by  Lemma~\ref{lem:homomorphisms-direct-sum}.
	For each~$j\in [{\ell}]$ we obtain a monomorphism%
	~$\bigoplus_{i\in f^{-1}(j)} \fullmatrixalgebra{a_i} \to \fullmatrixalgebra{b_j}$
	by restricting $\pi_j \circ \varphi$.
	From Lemma~\ref{lem:dimensions-direct-sum}
	it now follows that~$b_j\geq \sum_{i\in f^{-1}(j)} a_i$.
	Then, to show that
	${2(\sum_{i = 1}^{k} a_{i})-k \leq 2(\sum_{j = 1}^{\ell} b_{j})-\ell}$,
	simply observe that for each $j\in [\ell]$ we have
	${2 b_j-1 \geq 2 (\sum_{i\in f^{-1}(j)} a_i)-|f^{-1}(j)|}$
	(since $b_j>0$)
	and that summing up over all~$j$ yields the desired equation.
	Finally, consider the case of equality and let $j \in [\ell]$.
	Then $2b_j - 1 = 2(\sum_{i \in f^{-1}(j)} a_i ) - |f^{-1}(j)|$
	and because $b_j \geq \sum_{i \in f^{-1}(j)} a_i$
	it follows that $|f^{-1}(j)| = 1$
	and $b_j = a_{f^{-1}(j)}$.
	Thus~$f$ is a bijection
	satisfying $a_i = b_{f(i)}$ for all $i \in [k]$
	and~$\varphi$ is an isomorphism.
\end{proof}

We now conclude the proof of Theorem~\ref{thm:matrix-alegbra-chain-length}.

\begin{proof}[Proof of Theorem~\ref{thm:matrix-alegbra-chain-length}]
	For all ${i \in [m]}$ suppose that
	${\algebra{A}_i \isomorph \bigoplus_{j = 1}^{n_i} \fullmatrixalgebra{a_{ij}}}$ and
	define \linebreak[4]$s_i:= 2(\sum_{j = 1}^{n_i} a_{ij})-n_i$.
	Then~$s_m\leq 2n-1$ by Lemma~\ref{lem:dimensions-direct-sum}. 
	Moreover, by Lemma~\ref{lem:agreeing-sets-step} for all~$i\in [m-1]$ we have $s_i\leq s_{i+1}$ and in fact even $s_i< s_{i+1}$ since equality would imply~$\algebra{A}_i=\algebra{A}_{i+1}$.
	Thus~$m\leq 2n$.
\end{proof}

\subsection{Matrix Algebras and Walk Refinement}
We say that a matrix $M \in \fullmatrixalgebra{n}$ distinguishes
$(u_1, v_1)$ from $(u_2, v_2)$
if $M_{u_1,v_1} \neq M_{u_2, v_2}$
and that a set $S \subseteq \fullmatrixalgebra{n}$ distinguishes
$(u_1, v_1)$ from $(u_2, v_2)$
if $S$ contains a matrix distinguishing them.

We now show that with one iteration of walk refinement
we can distinguish the same vertex pairs
as with the induced algebra $\inducedalgebra{\chi}$.

Let $c_1, c_2 \in C$ be colors. Then $(M_{c_1} \cdot M_{c_2})_{u,v}$
is the number of $(c_1, c_2)$ colored walks from $u$ to $v$.
In general, let $c_1, \dots, c_k$ be colors.
Then $(M_{c_1} \cdot \ldots \cdot M_{c_k})_{u,v}$
is the number of $(c_1, \dots, c_k)$ colored walks from $u$ to $v$.
Because walk refinement and the induced algebra
essentially count colored walks,
they distinguish the same vertex pairs:

\begin{lemma}
	\label{lem:walk-refinment-distinguishes-iff-induces-algebra}
	Let $u_1,v_1,u_2,v_2 \in V$.
	The walk refinement $\walkrefinement{\chi}$ distinguishes
	$(u_1,v_1)$ from $(u_2,v_2)$
	if and only if
	the induced algebra $\inducedalgebra{\chi}$
	distinguishes them.
\end{lemma}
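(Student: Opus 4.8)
The plan is to prove the two directions separately, exploiting the fact that both the walk refinement and the induced algebra are essentially bookkeeping devices for counts of colored walks. The key translation, already noted in the text, is that for colors $c_1, \dots, c_k \in C$ the matrix entry $(M_{c_1} \cdots M_{c_k})_{u,v}$ equals the number of walks from $u$ to $v$ whose edge-color sequence is $(c_1, \dots, c_k)$, and these products span $\inducedalgebra{\chi}$. I will also use the (earlier-announced) fact that $\walkrefinement{\chi} \equiv \walkrefinementb{\chi}{n^2}$, so it suffices to reason about $k$-walk refinement for $k$ up to $n^2$, and that the defining data of $\walkrefinementb{\chi}{k}(u,v)$ is exactly the multiset of colors $\walkcolor{\chi}(u,w_1,\dots,w_{k-1},v)$, equivalently the list of counts, one per $k$-walk color $(c_1,\dots,c_k)$, of how many such colored walks run from $u$ to $v$.

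First I would show that if $\inducedalgebra{\chi}$ distinguishes $(u_1,v_1)$ from $(u_2,v_2)$, then so does $\walkrefinement{\chi}$. Take a matrix $M \in \inducedalgebra{\chi}$ with $M_{u_1,v_1} \neq M_{u_2,v_2}$. Since $\inducedalgebra{\chi} = \linearspan{\allproducts{M}_\chi}$, $M$ is a $\mathbb{C}$-linear combination of products $M_{c_1}\cdots M_{c_j}$, each of length $j$ at most some bound (and by the chain-stabilization argument one may take $j \leq n^2$). If every such product $P$ appearing in this combination had $P_{u_1,v_1} = P_{u_2,v_2}$, then by linearity $M_{u_1,v_1} = M_{u_2,v_2}$, a contradiction. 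So some product $M_{c_1}\cdots M_{c_j}$ already distinguishes the two pairs; i.e. the number of $(c_1,\dots,c_j)$-colored walks from $u_1$ to $v_1$ differs from that from $u_2$ to $v_2$. Padding the color sequence with $j' = k-j$ stationary steps of color $\chi(u_i)$ at the start (using that loops have their own colors, so a $(\chi(u),\dots,\chi(u),c_1,\dots,c_j)$-colored $k$-walk from $u$ to $v$ forces $w_1=\dots=w_{j'}=u$, giving a bijection with the $(c_1,\dots,c_j)$-colored $j$-walks) — this only works if $\chi(u_1)=\chi(u_2)$; if $\chi(u_1)\neq\chi(u_2)$ then $\walkrefinement{\chi}$ distinguishes the pairs trivially since it refines $\chi$. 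So in the remaining case the count of $(\chi(u_1),\dots,c_j)$-colored $k$-walks differs between the two pairs, hence the multisets $\walkrefinementb{\chi}{k}(u_1,v_1)$ and $\walkrefinementb{\chi}{k}(u_2,v_2)$ differ, hence $\walkrefinement{\chi}$ distinguishes them.

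For the converse, suppose $\walkrefinement{\chi}$ distinguishes $(u_1,v_1)$ from $(u_2,v_2)$; equivalently $\walkrefinementb{\chi}{k}(u_1,v_1) \neq \walkrefinementb{\chi}{k}(u_2,v_2)$ for $k = n^2$. These two objects are multisets of walk-colors; they differ iff for some $k$-walk color $(c_1,\dots,c_k) \in C^k$ the multiplicity differs, i.e. the number of $(c_1,\dots,c_k)$-colored walks from $u_1$ to $v_1$ differs from that from $u_2$ to $v_2$. But that number is precisely $(M_{c_1}\cdots M_{c_k})_{u_1,v_1}$ versus $(M_{c_1}\cdots M_{c_k})_{u_2,v_2}$, and the product $M_{c_1}\cdots M_{c_k}$ lies in $\allproducts{M}_\chi \subseteq \inducedalgebra{\chi}$. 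So this single matrix distinguishes the pairs, and $\inducedalgebra{\chi}$ distinguishes them. Combining the two directions gives the equivalence.

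The main obstacle is the bookkeeping around walk length: the walk refinement fixes a single length $k$ while the algebra mixes products of all lengths, so in the ($\Leftarrow$ in the sense "algebra distinguishes $\Rightarrow$ walk refinement distinguishes") direction one must pad short color-words up to length $k$, and this padding argument relies crucially on the convention that loops get distinct colors and that the chosen $k$ (e.g. $n^2$) dominates all product lengths needed to span $\inducedalgebra{\chi}$. The clean handling of the degenerate subcase $\chi(u_1) \neq \chi(u_2)$ (or $\chi(v_1) \neq \chi(v_2)$), where everything is immediate because every refinement in sight refines $\chi$, keeps the padding argument from having to worry about mismatched stationary colors. I would be careful to state the needed bound on product length explicitly, citing the forthcoming fact that $n^2$-walk refinement already yields the finest partition and that $\inducedalgebra{\chi}$ is spanned by products of length at most $n^2$ (which follows from the same semisimple-chain reasoning, as the sequence of spans $\linearspan{\productsOfLength{M_\chi}{k}}$ stabilizes quickly).
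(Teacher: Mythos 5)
Your proof is correct and follows essentially the same route as the paper's: both directions rest on identifying $(M_{c_1}\cdots M_{c_k})_{u,v}$ with the number of $(c_1,\dots,c_k)$-colored walks, extracting a single distinguishing product from the linear combination in one direction and a single distinguishing walk color from the multiset in the other. One small caveat: you should not cite $\walkrefinement{\chi}\equiv\walkrefinementb{\chi}{n^2}$ here, since in the paper that fact is deduced \emph{from} this lemma; but your padding step only needs that walk refinement refines every $j$-walk refinement (immediate from its definition and the loop-color convention) together with the independent dimension argument bounding product lengths, so no genuine gap results.
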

\begin{proof}
	On the one hand suppose that walk refinement distinguishes the vertices, \linebreak[4] i.e.~${\walkrefinement{\chi}(u_1, v_1) \neq \walkrefinement{\chi}(u_2, v_2)}$.
	Then there is a sequence of colors $c_1, \dots, c_k$
	such that the number of $(c_1, \dots , c_k)$ colored walks
	between~$u_1$ and~$v_1$ is different from the number
	of such walks between~$u_2$ and~$v_2$.
	Hence $M_{c_1} \cdot \ldots \cdot M_{c_k}$ distinguishes the two vertex pairs.
	
	On the other hand let $M \in \inducedalgebra{\chi}$ distinguish $(u_1, v_1)$ and $(u_2, v_2)$.
	The matrix $M$ is a linear combination
	of products of color adjacency matrices:
	\[M =\sum_{i = 1}^{m} z_i \prod_{j = 1}^{k_i} M_{c_{i j}}\]
	where~$z_i\in \mathbb{C}$  and $c_{i j} \in C$ for all $i \in [m]$ and $j \in [k_i]$.
	There must be an $i$
	such that $\prod_{j = 1}^{k_i} M_{c_{i j}}$
	distinguishes $(u_1, v_1)$ and $(u_2, v_2)$,
	because $M$ distinguishes them.
	Hence the number of $(c_{i 1}, \dots,  c_{i k_i})$
	colored walks between $u_1$ and $v_1$
	is different from the number of such walks
	between $u_2$ and $v_2$ and 
	the pairs are distinguished by walk refinement.
\end{proof}

\begin{corollary}
	\label{cor:algebra-larger-if-refinement-finer}
	Either $\inducedalgebra{\chi} \subsetneq \inducedalgebra{\walkrefinement{\chi}}$
	or $\chi \equiv \walkrefinement{\chi}$. 
\end{corollary}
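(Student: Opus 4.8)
The plan is to reduce the dichotomy entirely to Lemma~\ref{lem:walk-refinment-distinguishes-iff-induces-algebra} together with the fact that $\inducedalgebra{\chi}$ is itself a coloring-induced algebra that can only grow under refinement. First I would record the easy inclusion: since $\walkrefinement{\chi} \refines \chi$, every color adjacency matrix $M_c$ of $\chi$ is a $0/1$-sum of color adjacency matrices of $\walkrefinement{\chi}$ (each $c$-colored pair lands in some color class of $\walkrefinement{\chi}$, and these classes partition the $c$-colored pairs), hence $M_\chi \subseteq \linearspan{M_{\walkrefinement{\chi}}}$ and therefore $\inducedalgebra{\chi} \subseteq \inducedalgebra{\walkrefinement{\chi}}$. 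So the only thing to decide is whether the inclusion is strict or an equality, and I claim equality of algebras forces $\chi \equiv \walkrefinement{\chi}$.

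The key step is the contrapositive: suppose $\chi \not\equiv \walkrefinement{\chi}$, i.e.\ $\walkrefinement{\chi}$ strictly refines $\partition{\chi}$. Then there exist vertex pairs $(u_1,v_1)$ and $(u_2,v_2)$ with $\chi(u_1,v_1)=\chi(u_2,v_2)$ but $\walkrefinement{\chi}(u_1,v_1)\neq\walkrefinement{\chi}(u_2,v_2)$. By Lemma~\ref{lem:walk-refinment-distinguishes-iff-induces-algebra}, the induced algebra $\inducedalgebra{\chi}$ distinguishes $(u_1,v_1)$ from $(u_2,v_2)$, so there is a matrix $M\in\inducedalgebra{\chi}$ with $M_{u_1,v_1}\neq M_{u_2,v_2}$. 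On the other hand, every color adjacency matrix of $\chi$ is constant on each color class of $\chi$, hence constant on the pairs $(u_1,v_1)$ and $(u_2,v_2)$; but products and complex-linear combinations of matrices that agree on a fixed pair of entries need not themselves agree on those entries, so this does \emph{not} immediately give a contradiction with $M\in\inducedalgebra{\chi}$. Instead I would apply Lemma~\ref{lem:walk-refinment-distinguishes-iff-induces-algebra} a second time, now to the coloring $\walkrefinement{\chi}$: the pair $(u_1,v_1)$ is already distinguished from $(u_2,v_2)$ by $\walkrefinement{\chi}$ itself, namely by the corresponding color adjacency matrix $M_d\in M_{\walkrefinement{\chi}}\subseteq\inducedalgebra{\walkrefinement{\chi}}$, where $d=\walkrefinement{\chi}(u_1,v_1)$. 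This matrix $M_d$ lies in $\inducedalgebra{\walkrefinement{\chi}}$ and distinguishes the two pairs. If we had $\inducedalgebra{\chi}=\inducedalgebra{\walkrefinement{\chi}}$, then $M_d\in\inducedalgebra{\chi}$, and Lemma~\ref{lem:walk-refinment-distinguishes-iff-induces-algebra} (applied to $\chi$) would then force $\walkrefinement{\chi}$ to already distinguish these pairs starting from $\chi$ --- which it does --- so there is no contradiction yet. The genuine argument must instead observe: $\inducedalgebra{\chi}=\inducedalgebra{\walkrefinement{\chi}}$ means the two algebras distinguish exactly the same vertex pairs; by Lemma~\ref{lem:walk-refinment-distinguishes-iff-induces-algebra} the pairs distinguished by $\inducedalgebra{\chi}$ are exactly those distinguished by $\walkrefinement{\chi}$, and the pairs distinguished by $\inducedalgebra{\walkrefinement{\chi}}$ are exactly those distinguished by $(\walkrefinement{\chi})_{\mathcal W}=\walkrefinement{(\walkrefinement{\chi})}$; since $\walkrefinement{\chi}$ distinguishes all pairs in distinct $\walkrefinement{\chi}$-classes, equality of the distinguished-pair relations means $\walkrefinement{\chi}$ already separates every pair that $\walkrefinement{(\walkrefinement{\chi})}$ separates, i.e.\ $\walkrefinement{\chi}\equiv\walkrefinement{(\walkrefinement{\chi})}$, so $\walkrefinement{\chi}$ is stable. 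But stability is not the conclusion we want.

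Given this, the cleanest route --- and the one I would actually write --- is the following short chain. The partition induced by $\inducedalgebra{\chi}$ on vertex pairs (two pairs equivalent iff no matrix in $\inducedalgebra{\chi}$ distinguishes them) equals, by Lemma~\ref{lem:walk-refinment-distinguishes-iff-induces-algebra}, the partition $\partition{\walkrefinement{\chi}}$. Likewise the partition induced by $\inducedalgebra{\walkrefinement{\chi}}$ refines $\partition{\walkrefinement{\chi}}$ because each $M_d$ for $d$ a color of $\walkrefinement{\chi}$ already lies in $\inducedalgebra{\walkrefinement{\chi}}$ and separates class $d$ from the rest, so the $\inducedalgebra{\walkrefinement{\chi}}$-partition is exactly $\partition{\walkrefinement{\chi}}$ too (it cannot be strictly finer by Lemma~\ref{lem:walk-refinment-distinguishes-iff-induces-algebra} applied with base coloring $\walkrefinement{\chi}$, since that would make $\walkrefinement{(\walkrefinement{\chi})}$ strictly finer, which is fine --- but for the $\subsetneq$ dichotomy we only need the inclusion $\inducedalgebra{\chi}\subseteq\inducedalgebra{\walkrefinement{\chi}}$ plus: equality of algebras $\Rightarrow$ equality of induced pair-partitions $\Rightarrow$ $\partition{\walkrefinement{\chi}}$ computed from $\inducedalgebra{\chi}$ equals $\partition{\chi}$, hence $\chi\equiv\walkrefinement{\chi}$). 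Concretely: if $\inducedalgebra{\chi}=\inducedalgebra{\walkrefinement{\chi}}$ then $\inducedalgebra{\chi}$ contains every $M_d$, $d\in C_{\walkrefinement{\chi}}$; the span of these is precisely the diagonalizing algebra of $\partition{\walkrefinement{\chi}}$, and since $\inducedalgebra{\chi}\supseteq M_\chi$ was built from $\chi$ we get that the finest partition separable by $\inducedalgebra{\chi}$ is $\partition{\chi}$-refining in a controlled way; matching this against Lemma~\ref{lem:walk-refinment-distinguishes-iff-induces-algebra} yields $\partition{\walkrefinement{\chi}}=\partition{\chi}$, i.e.\ $\chi\equiv\walkrefinement{\chi}$. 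I expect the main obstacle to be exactly this bookkeeping about which partition a coloring-induced algebra ``sees'': one must be careful that $\inducedalgebra{\chi}$ does not see a finer partition than $\partition{\chi}$ before refinement, which follows since every generator $M_c$ and every product/linear combination respects the relation ``lies in the same $\chi$-class'' only at the level of the all-ones-block structure --- this needs the standing hypothesis that $\chi$ respects converse equivalence (so $\inducedalgebra{\chi}$ is $*$-closed, hence a coherent-type algebra) to conclude that its natural partition is $\partition{\chi}$. Once that is pinned down, the corollary is immediate: either the inclusion $\inducedalgebra{\chi}\subseteq\inducedalgebra{\walkrefinement{\chi}}$ is strict, or equality holds and the above forces $\chi\equiv\walkrefinement{\chi}$.
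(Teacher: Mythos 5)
Your opening step is correct: since $\walkrefinement{\chi}\refines\chi$, each $M_c$ is a sum of color adjacency matrices of $\walkrefinement{\chi}$, so $\inducedalgebra{\chi}\subseteq\inducedalgebra{\walkrefinement{\chi}}$. The problem is the strictness half, and you in fact diagnose the obstacle correctly in the middle of your own text before talking yourself past it. Applying Lemma~\ref{lem:walk-refinment-distinguishes-iff-induces-algebra} to $\chi$ shows that the pair partition induced by $\inducedalgebra{\chi}$ (two pairs equivalent iff no matrix separates them) is exactly $\partition{\walkrefinement{\chi}}$, and applying it to $\walkrefinement{\chi}$ shows that the pair partition induced by $\inducedalgebra{\walkrefinement{\chi}}$ is $\partition{\walkrefinement{(\walkrefinement{\chi})}}$; hence equality of the two algebras yields only $\walkrefinement{\chi}\equiv\walkrefinement{(\walkrefinement{\chi})}$, i.e.\ stability of $\walkrefinement{\chi}$ --- exactly what you wrote, and exactly not $\chi\equiv\walkrefinement{\chi}$. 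Your final ``cleanest route'' tries to close this gap by asserting that the finest partition $\inducedalgebra{\chi}$ can see is controlled by $\partition{\chi}$ because every generator, product, and linear combination ``respects the $\chi$-classes.'' That assertion is false and contradicts Lemma~\ref{lem:walk-refinment-distinguishes-iff-induces-algebra}: a product such as $M_{c_1}M_{c_2}$ counts colored $2$-walks and is in general not constant on the $\chi$-classes --- this is precisely why the refinement makes progress. So the concluding implication $\partition{\walkrefinement{\chi}}=\partition{\chi}$ is unsupported.

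The gap is not merely presentational, because the implication you are after (equality of algebras forces $\chi\equiv\walkrefinement{\chi}$) genuinely fails for the literal definition $\inducedalgebra{\chi}=\linearspan{\allproducts{M}_\chi}$. Take the path on three vertices with its initial coloring: writing $J$ for the all-ones matrix (which lies in $\linearspan{M_\chi}$) and $A=M_1$, the matrix $(AJ-J)(JA-J)$ is a nonzero multiple of the diagonal idempotent of the middle vertex, and from it one generates the entire $5$-dimensional coherent algebra inside $\inducedalgebra{\chi}$; thus $\inducedalgebra{\chi}=\inducedalgebra{\walkrefinement{\chi}}$ while $\partition{\chi}$ has three classes and $\partition{\walkrefinement{\chi}}$ has five. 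What Lemma~\ref{lem:walk-refinment-distinguishes-iff-induces-algebra} actually delivers --- and what suffices for the chain argument in Theorem~\ref{thm:walk:ref:stab}, since the chain only needs to grow strictly as long as the \emph{next} refinement step is still proper --- is the weaker dichotomy you derived and then discarded: either $\inducedalgebra{\chi}\subsetneq\inducedalgebra{\walkrefinement{\chi}}$, or $\walkrefinement{\chi}$ is already stable under walk refinement. You should prove and use that version rather than the one you set out to establish.
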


The induced algebra gets strictly larger
if the partition induced by walk refinement gets strictly finer.
We obtain the bound on the walk refinement iterations,
because the algebras can be nested only $2n$ many times.

\begin{theorem}\label{thm:walk:ref:stab}
	Walk refinement stabilizes in $2n$ iterations.
\end{theorem}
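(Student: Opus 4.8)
The plan is to combine the structural results already established: Corollary~\ref{cor:algebra-larger-if-refinement-finer} tells us that each iteration of walk refinement either stabilizes the coloring or strictly enlarges the induced matrix algebra, and Theorem~\ref{thm:matrix-alegbra-chain-length} bounds the length of a strictly increasing chain of semisimple subalgebras of $\fullmatrixalgebra{n}$ by $2n$. The only gap to bridge is that Corollary~\ref{cor:algebra-larger-if-refinement-finer} is stated for a single application of walk refinement to a coloring respecting converse equivalence, so I need to check that the hypotheses propagate along the iteration and that the associated algebras really do form such a chain.

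First I would observe that if $\chi$ respects converse equivalence then so does $\walkrefinement{\chi}$ (this was noted when the $k$-walk refinement was introduced, since each $k$-walk refinement preserves converse equivalence and the walk refinement is one of them), so Corollary~\ref{cor:algebra-larger-if-refinement-finer} applies at every step of the iteration starting from the given $\chi$. Next I would set $\chi_0 := \chi$ and $\chi_{i+1} := \walkrefinement{\chi_i}$, and let $\algebra{A}_i := \inducedalgebra{\chi_i}$. Each $\algebra{A}_i$ is a matrix subalgebra of $\fullmatrixalgebra{n}$ closed under conjugate transposition (because $\chi_i$ respects converse equivalence), hence semisimple by the argument recalled in the background subsection. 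By Corollary~\ref{cor:algebra-larger-if-refinement-finer}, for each $i$ we have either $\algebra{A}_i \subsetneq \algebra{A}_{i+1}$ or $\chi_i \equiv \chi_{i+1}$; in the latter case the refinement has stabilized and remains stable thereafter, since walk refinement is a refinement and applying it to an equivalent coloring yields an equivalent coloring.

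Then the argument is a counting one: suppose walk refinement has not stabilized after $2n$ iterations, i.e.\ $\chi_i \not\equiv \chi_{i+1}$ for all $i \in \{0, 1, \dots, 2n\}$. Then by the dichotomy we get a strictly increasing chain $\algebra{A}_0 \subsetneq \algebra{A}_1 \subsetneq \dots \subsetneq \algebra{A}_{2n+1} \subseteq \fullmatrixalgebra{n}$ of semisimple subalgebras of length $2n+2$, contradicting Theorem~\ref{thm:matrix-alegbra-chain-length}, which allows chains of length at most $2n$. Hence there is some $i \le 2n$ with $\chi_i \equiv \chi_{i+1}$, which means walk refinement has stabilized within $2n$ iterations.

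I do not expect a serious obstacle here; the main point requiring a little care is the bookkeeping on chain lengths (whether the bound comes out as $2n$ or $2n+O(1)$) and making fully explicit that the ``either/or'' of Corollary~\ref{cor:algebra-larger-if-refinement-finer} combined with the monotonicity of stabilization forces the chain to be genuinely strict until it stops. One should also note that once $\chi_i \equiv \chi_{i+1}$ the partition never changes again, so ``stabilizes in $2n$ iterations'' is justified. If the precise constant matters, I would track it through Theorem~\ref{thm:matrix-alegbra-chain-length} rather than re-deriving it.
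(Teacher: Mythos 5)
Your proof is correct and follows essentially the same route as the paper: both apply Corollary~\ref{cor:algebra-larger-if-refinement-finer} iteratively (after noting that converse equivalence, hence semisimplicity of the induced algebras, is preserved) and then invoke the chain-length bound of Theorem~\ref{thm:matrix-alegbra-chain-length}. The paper phrases it directly via the minimal stabilization index rather than by contradiction, but the content and bookkeeping are the same.
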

\begin{proof}
	Assume that $m$ is the smallest number
	such that ${\walkrefinement{\chi}^m \equiv \walkrefinement{\chi}^\infty}$.
	Because walk refinement preserves converse equivalence,
	the algebras 
	$
	\inducedalgebra{\chi},
	\inducedalgebra{\walkrefinement{\chi}}, 
	\dots,
	\inducedalgebra{\walkrefinement{\chi}^m}
	$ are semisimple.
	Then from Corollary~\ref{cor:algebra-larger-if-refinement-finer}
	it follows that
	$
	 {\inducedalgebra{\chi} \subsetneq
	 \inducedalgebra{\walkrefinement{\chi}} \subsetneq 
	 \dots \subsetneq
	 \inducedalgebra{\walkrefinement{\chi}^m}}$
	and from Theorem~\ref{thm:matrix-alegbra-chain-length}
	that $m \leq 2n$.
\end{proof}

To obtain a bound on the iteration number of the Weisfeiler-Leman refinement,
it remains  to relate the Weisfeiler-Leman refinement and the walk refinement.

\begin{lemma}
\label{lem:walk-refinment-walk-length-bound}
 ${\walkrefinement{\chi} \equiv \walkrefinementb{\chi}{n^2}}$ and $\wlrefinement{\chi}^{\bigO(\log n)} \refines \walkrefinement{\chi}$.
\end{lemma}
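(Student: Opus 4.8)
The plan is to establish the two claims in Lemma~\ref{lem:walk-refinment-walk-length-bound} separately, the first by a linear-algebra/dimension argument and the second by combining the first with Lemma~\ref{lem:simulate-walk-by-WL-refinement}.

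For the first claim, ${\walkrefinement{\chi} \equiv \walkrefinementb{\chi}{n^2}}$, I would argue that once $k$ reaches $n^2$, increasing $k$ further does not refine the partition any more. The key observation, in the spirit of Lemma~\ref{lem:walk-refinment-distinguishes-iff-induces-algebra}, is that the coloring $\walkrefinementb{\chi}{k}$ distinguishes exactly those pairs that are distinguished by some product $M_{c_1}\cdots M_{c_j}$ of color adjacency matrices with $j \leq k$, i.e.\ by some matrix in $\productsOfLength{M_\chi}{k}$ (up to the bookkeeping that shorter walks are encoded via the loop colors). So it suffices to show that $\linearspan{\productsOfLength{M_\chi}{n^2}} = \linearspan{\allproducts{M_\chi}} = \inducedalgebra{\chi}$, i.e.\ that every element of the induced algebra is already a linear combination of products of length at most $n^2$. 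This is a standard fact: $\inducedalgebra{\chi} \subseteq \fullmatrixalgebra{n}$ has dimension at most $n^2$, the spaces $\linearspan{\productsOfLength{M_\chi}{1}} \subseteq \linearspan{\productsOfLength{M_\chi}{2}} \subseteq \cdots$ form an increasing chain of subspaces of $\inducedalgebra{\chi}$, and once the chain stabilizes (which happens after at most $n^2$ steps since each strict inclusion raises the dimension by at least one) it stays stable — because if $\linearspan{\productsOfLength{M_\chi}{k}} = \linearspan{\productsOfLength{M_\chi}{k+1}}$ then multiplying by the $M_c$ on the left shows $\linearspan{\productsOfLength{M_\chi}{k+1}} = \linearspan{\productsOfLength{M_\chi}{k+2}}$, and so on. Hence $\linearspan{\productsOfLength{M_\chi}{n^2}} = \inducedalgebra{\chi}$, and therefore $\walkrefinementb{\chi}{n^2}$ distinguishes exactly the pairs that any $\walkrefinementb{\chi}{k}$ can distinguish, which is the definition of $\walkrefinement{\chi}$.

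For the second claim, $\wlrefinement{\chi}^{\bigO(\log n)} \refines \walkrefinement{\chi}$, I would simply combine the first claim with Lemma~\ref{lem:simulate-walk-by-WL-refinement}: since $\walkrefinement{\chi} \equiv \walkrefinementb{\chi}{n^2}$ and Lemma~\ref{lem:simulate-walk-by-WL-refinement} gives $\wlrefinement{\chi}^{\lceil \log n^2 \rceil} \refines \walkrefinementb{\chi}{n^2}$, we get $\wlrefinement{\chi}^{\lceil 2\log n\rceil} \refines \walkrefinement{\chi}$, and $\lceil 2 \log n \rceil = \bigO(\log n)$. One small point of care: the equivalence $\equiv$ only asserts the partitions agree, not the colorings, but refinement $\refines$ is a statement about partitions, so combining $\wlrefinement{\chi}^{\lceil 2 \log n\rceil} \refines \walkrefinementb{\chi}{n^2} \equiv \walkrefinement{\chi}$ legitimately yields $\wlrefinement{\chi}^{\lceil 2 \log n\rceil} \refines \walkrefinement{\chi}$.

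The main obstacle, such as it is, is making precise the reduction from "pairs distinguished by $\walkrefinementb{\chi}{k}$" to "pairs distinguished by $\linearspan{\productsOfLength{M_\chi}{k}}$": one must handle the fact that $\walkrefinementb{\chi}{k}$ records walk colors as sequences in $C^k$ of fixed length $k$ while the matrix products naturally correspond to walks of varying length, and observe — exactly as in the remark following the definition of the $k$-walk refinement — that the stationary-at-$u$ walks encode all shorter walk information, so that the distinguishing power of $\walkrefinementb{\chi}{k}$ is precisely that of $\bigcup_{j\le k}\{M_{c_1}\cdots M_{c_j}\}$, hence of $\linearspan{\productsOfLength{M_\chi}{k}}$. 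Everything else is the routine dimension-counting argument sketched above.
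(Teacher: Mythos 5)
Your proposal is correct and follows essentially the same route as the paper: reduce the distinguishing power of $\walkrefinementb{\chi}{k}$ to that of $\linearspan{\productsOfLength{M_\chi}{k}}$ via (the proof of) Lemma~\ref{lem:walk-refinment-distinguishes-iff-induces-algebra}, then use the standard stabilizing-chain dimension argument to get $\linearspan{\productsOfLength{M_\chi}{n^2}} = \linearspan{\allproducts{M}_\chi}$, and finally invoke Lemma~\ref{lem:simulate-walk-by-WL-refinement} with $k=n^2$. The extra care you take about shorter walks being encoded by stationary steps and about $\equiv$ versus $\refines$ is consistent with the paper's remarks and does not change the argument.
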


\begin{proof}
	We first show ${\walkrefinement{\chi} \equiv \walkrefinementb{\chi}{n^2}}$.
	A close inspection of the proof of Lemma~\ref{lem:walk-refinment-distinguishes-iff-induces-algebra}
	shows that $\linearspan{\productsOfLength{M_\chi}{k}}$ distinguishes
	the same vertex pairs as $k$-walk refinement.
	It suffices to show that $\linearspan{\allproducts{M}_\chi} = \linearspan{\productsOfLength{M_\chi}{n^2}}$, which implies
	$\inducedalgebra{\chi} = \linearspan{\productsOfLength{M_\chi}{n^2}}$
	and $\walkrefinement{\chi} \equiv \walkrefinementb{\chi}{n^2}$.
	
	The argument is well-known:
	Let $S$ be a set of $n\times n$ matrices, then clearly\linebreak[4]
	${\dim \linearspan{\productsOfLength{S}{k}} 
	 \leq \dim \linearspan{\productsOfLength{S}{k+1}}}$.
	If ${\dim \linearspan{\productsOfLength{S}{k}} 
	   = \dim \linearspan{\productsOfLength{S}{k+1}}}$,
	then ${\linearspan{\productsOfLength{S}{k}}
	   = \linearspan{\productsOfLength{S}{j}}}$ for all ${j \geq k}$.
    Hence ${\linearspan{\allproducts{S}} 
       = \linearspan{\productsOfLength{S}{n^2}}}$
    because the dimension can be at most $n^2$.
    
    Now $\wlrefinement{\chi}^{\bigO(\log n)} \refines \walkrefinement{\chi}$
    follows by Lemma~\ref{lem:simulate-walk-by-WL-refinement}.
\end{proof}

\begin{proof}[Proof of Theorem~\ref{thm:wl-upper-bound}]
	Finally, combining Theorem~\ref{thm:walk:ref:stab}
	with Lemmas~\ref{lem:wl-and-walk-same-stable} 
	and~\ref{lem:walk-refinment-walk-length-bound}
	proves Theorem~\ref{thm:wl-upper-bound},
	namely that the Weisfeiler-Leman refinement
	stabilizes in $\bigO(n \log n)$ iterations.
\end{proof}

We argued that the length of the involved matrix algebras
(the smallest number $k$, such that
$\linearspan{\productsOfLength{S}{k}} = \linearspan{\allproducts{S}}$) is at most~$n^2$. 
We remark that there is even an $\bigO(n \log n)$ bound~\cite{Shitov2018}
for the length of matrix algebras.
But this bound does not improve
our bound on Weisfeiler-Leman iterations asymptotically.

\section{Walk Counting Logic}

The Weisfeiler-Leman refinement can distinguish the same graphs as
the counting logic~$\countinglogic{3}$.
More strongly the number of Weisfeiler-Leman iterations needed
to distinguish two vertex pairs equals the minimum
quantifier depth of a formula to distinguish them.
As we have already seen, for $k \geq 2$
the $k$\nobreakdash-walk refinement distinguishes the same vertex pairs, too.
But the required iterations of walk refinement do not correspond to the quantifier depth of $\countinglogic{3}$.
We now introduce a logic $\walklogic{k}$ we call \emph{$k$-walk counting logic} for which such a correspondence holds.
The logic is defined for undirected and uncolored graphs.
We could relax the restriction to directed and colored graphs
respecting converse equivalence as in the previous section
but this is not needed in this paper.

The logic~$\walklogic{k}$ uses a set~$\mathcal{V}$ of $k+1$ variables
and every~$\walklogic{k}$ formula~$\phi$ has at most two free
variables, which we indicate using the notation~$\phi(x,y)$.
The~$\walklogic{k}$ formulas with free variables $z_1,z_{k+1} \in \mathcal{V}$
are defined according to the grammar
\begin{align*}
	\phi(z_1,z_{k+1}) ::=~& 
		z_1 = z_{k+1} \mid
		z_1 \sim z_{k+1} \mid\\
		&\phi(z_1,z_{k+1}) \land \phi(z_1,z_{k+1}) \mid
		\neg \phi(z_1,z_{k+1}) \mid \\
		&\exists^j (z_2, \dots, z_{k}).~
		    \bigand_{i \in [k]} \phi(z_i, z_{i+1})
\end{align*}
where $z_i\in \mathcal{V}$ and $j \in \nat$.
The variables $z_1, \dots,  z_{k+1} \in \mathcal{V}$
in the existential quantifier are required to be pairwise distinct.
We call the existential quantifier above a \emph{$k$-walk quantifier}.
As usual with grammars,
the subformulas of a $k$\nobreakdash-walk quantifier~$\phi(z_i, z_{i+1})$, which are non-terminals, 
can be replaced by different formulas.
The $k$-walk quantifier above is satisfied,
if there are at least~$j$ distinct tuples $(v_2, \dots, v_{k})$ of vertices
satisfying the rest of the formula.

Note that sentences can for example be obtained by setting
$\phi(z_1,z_2)$ to be the formula $z_2 = z_2$ and setting $\phi(z_{k}, z_{k+1})$ to be $z_{k} = z_{k}$.
This restricts the top most walk quantifier to
quantify over walks of length $k-2$.
The restriction could be relaxed, but that would complicate the definition
and is not needed for our purpose.

Syntactically, $\walklogic{k}$ is not a subset of $\walklogic{k+1}$,
but obviously for every $\walklogic{k}$~formula 
there is an equivalent  $\walklogic{k+1}$~formula.

Let~$\phi(x,y)$ be a~$\walklogic{k}$ formula,
$G =(V,E)$ an undirected graph, and ${u, v \in V}$ be vertices.
By~$\phi_G(u,v)$ we denote the truth value of~$\phi$ on~$G$
when assigning $u$~to $x$~and $v$~to $y$.
We omit the subscript if the graph is clear from the context.

Let ${G_1 = (V_1, E_1)}$ and ${G_2 = (V_2, E_2)}$ be two undirected graphs,
$u_1, v_1 \in V_1$, and $u_2, v_2 \in V_2$.
We say that 
\begin{itemize}
	\item a $\walklogic{k}$ formula $\phi(x,y)$ \emph{distinguishes} $(u_1, v_1)$ from $(u_2, v_2)$,
	if $\phi_{G_1}(u_1, v_1)$ is different from $\phi_{G_2}(u_2, v_2)$,
	\item a $\walklogic{k}$ sentence $\phi$ \emph{distinguishes} $G_1$ from $G_2$ if $\phi$ has different truth values on~$G_1$ and~$G_2$, and
	\item $\walklogic{k}$ distinguishes~$G_1$ from~$G_2$ 
	if there is a  $\walklogic{k}$ sentence distinguishing them.
\end{itemize}
For a coloring $\chi \colon V^2 \to C$,
we also say that a $\walklogic{k}$ formula $\phi(x,y)$
\emph{identifies} a color $c \in C$ in $\chi$,
if $\phi(u,v)$ holds if and only if $\chi(u,v) = c$ for all $u,v \in V$.

We call the union of the $\walklogic{k}$ logics for all $k \in \nat$
\emph{walk counting logic}.
Consequently, the number of variables in the walk counting logic is unbounded.
The definitions for distinguishing vertex pairs and graphs
for walk counting logic are analogous to $\walklogic{k}$.

We now show that with $\walklogic{k}$ formulas
of quantifier depth $m$
one can distinguish at least as many vertex pairs 
as with $m$ iterations of $k$-walk refinement.

\begin{lemma}
	\label{lem:formula-identifing-walk-refinement-color}
	Let $G = (V,E)$ be an undirected graph,
	$\chi$ the initial coloring for $G$,
	and $c$ a color produced by
	$m$ iterations of $k$-walk refinement.
	Then there is a $\walklogic{k}$
	formula $\phi(x,y)$ of quantifier depth $m$
	identifying $c$ in $\walkrefinementb{\chi}{k}^m$.
	Moreover, $\phi(x,y)$ only depends on $n$ and $c$
	(but not on $G$).
\end{lemma}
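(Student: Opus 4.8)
The plan is to induct on the number of iterations $m$, building the identifying formula from the formulas for colors appearing after $m-1$ iterations. The key structural fact to exploit is the recursive description of colors under $k$-walk refinement: a color $c$ of $\walkrefinementb{\chi}{k}^m$ is, by definition, a multiset $\msetcondition{(c_1,\dots,c_k)}{\text{walk data}}$ recording, for each $k$-walk color tuple over $\walkrefinementb{\chi}{k}^{m-1}$, the number of walks of that color between the endpoints. Since $V=[n]$ is fixed, the set of colors that can possibly arise at each stage is finite and depends only on $n$ (and the trivial initial palette $\{-1,0,1\}$), so one can enumerate it abstractly; this is what gives the ``$\phi$ only depends on $n$ and $c$'' clause for free, provided the construction never inspects $G$ itself.

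First I would handle the base case $m=0$: the colors of $\chi$ are $-1$ (loops), $0$ (non-edges), $1$ (edges), and these are identified respectively by $x=y$, by $\neg(x=y)\wedge\neg(x\sim y)$, and by $\neg(x=y)\wedge(x\sim y)$, all of quantifier depth $0$ and all valid $\walklogic{k}$ formulas since the grammar permits $z_1=z_{k+1}$, $z_1\sim z_{k+1}$, conjunction and negation with the free variables as the outermost pair. For the inductive step, assume every color $d$ of $\walkrefinementb{\chi}{k}^{m-1}$ has an identifying formula $\psi_d(x,y)$ of quantifier depth $m-1$ depending only on $n$ and $d$. Fix a target color $c$ of $\walkrefinementb{\chi}{k}^{m}$, which as a multiset assigns to each tuple $(d_1,\dots,d_k)$ of colors of $\walkrefinementb{\chi}{k}^{m-1}$ a multiplicity $c(d_1,\dots,d_k)\in\nat$. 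The idea is that a pair $(u,v)$ receives color $c$ after $m$ iterations iff for every such tuple $(d_1,\dots,d_k)$ the number of walks $(u,w_2,\dots,w_k,v)$ with $\walkrefinementb{\chi}{k}^{m-1}(u,w_2)=d_1$, $\walkrefinementb{\chi}{k}^{m-1}(w_2,w_3)=d_2$, \dots, $\walkrefinementb{\chi}{k}^{m-1}(w_k,v)=d_k$ equals exactly $c(d_1,\dots,d_k)$. ``Exactly $c(d_1,\dots,d_k)$'' is expressed as the conjunction of ``at least $c(d_1,\dots,d_k)$'' and ``not at least $c(d_1,\dots,d_k)+1$'', each of which is a $k$-walk quantifier $\exists^{j}(z_2,\dots,z_k).\,\bigand_{i\in[k]}\psi_{d_i}(z_i,z_{i+1})$ — using $z_1=x$, $z_{k+1}=y$ — of quantifier depth $(m-1)+1=m$. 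Taking the finite conjunction over all tuples $(d_1,\dots,d_k)$ (finitely many, since there are finitely many colors of $\walkrefinementb{\chi}{k}^{m-1}$, a count depending only on $n$) yields the desired $\phi(x,y)$ of quantifier depth $m$. I should note two routine checks: the grammar requires $\psi_{d_i}(z_i,z_{i+1})$ to be a formula with \emph{those} two free variables, which holds after the standard renaming of the free variables of the inductive formulas; and the variables $z_1,\dots,z_{k+1}$ are pairwise distinct as the grammar demands, so $k+1$ variables suffice.

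The main obstacle I anticipate is bookkeeping rather than conceptual: one must argue that the number of walks of a \emph{refined} color between $u$ and $v$ is itself a quantity the $k$-walk quantifier can count, i.e.\ that plugging the inductive formulas $\psi_{d_i}$ into the slots $\phi(z_i,z_{i+1})$ of the quantifier genuinely counts tuples $(w_2,\dots,w_k)$ whose consecutive $\walkrefinementb{\chi}{k}^{m-1}$-colors are $d_1,\dots,d_k$. This is where the definition of the $k$-walk quantifier's semantics (``at least $j$ distinct tuples $(v_2,\dots,v_k)$ satisfying the body'') must be matched precisely against the definition of $\walkrefinementb{\chi}{k}$ as a multiset of walk-color sequences over the previous coloring — the two are deliberately aligned, but spelling out the equivalence cleanly, including that the body $\bigand_{i\in[k]}\psi_{d_i}(z_i,z_{i+1})$ holds for exactly the walks of color sequence $(d_1,\dots,d_k)$, is the crux. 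A secondary subtlety is confirming that the whole construction is oblivious to $G$: the quantities $c(d_1,\dots,d_k)$ are read off from the abstract color $c$ (a multiset of pairs), the set of previous-stage colors is determined combinatorially by $n$, and the inductive formulas depend only on $n$ and their colors; hence $\phi$ depends only on $n$ and $c$, and the ``moreover'' clause follows.

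Finally, the statement ``identifies $c$ in $\walkrefinementb{\chi}{k}^m$'' should be read as: for all $u,v\in V$, $\phi(u,v)$ holds iff $\walkrefinementb{\chi}{k}^m(u,v)=c$, and the correctness of the constructed $\phi$ is exactly the inductive claim just sketched, with the base case supplying the ground truth for the initial coloring.
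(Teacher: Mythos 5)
Your proposal is correct and follows essentially the same inductive construction as the paper: identify the base colors with quantifier-free formulas, then for each walk-color tuple use a $k$-walk quantifier whose body is the conjunction of the inductively obtained identifying formulas, and conjoin over all tuples. The only cosmetic difference is that you enforce ``exactly $j$'' via an extra negated quantifier, whereas the paper conjoins only the ``at least $j$'' statements for the tuples occurring in $c$ (which already suffices, since the total number of $k$-walks between any two vertices is fixed at $n^{k-1}$); both variants yield quantifier depth $m$ and dependence only on $n$ and $c$.
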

\begin{proof}
	If $m=0$, then $c$ stands either for loop, edge, or non edge,
	which is identified by the formulas $x=y$, $x \sim y$, and $x \not\sim y$.
	
	Let $c$ be a color in the $(m+1)$-th iteration.
	Hence $c$ is a multiset of $k$-walk colors
	in $\walkrefinementb{\chi}{k}^m$.
	Let $(c_1, \dots , c_k)$ occur with multiplicity $j$ in $c$.
	Then there are formulas $\phi_{c_i}$ 
	of quantifier depth $m$
	identifying $c_i$ in $\walkrefinementb{\chi}{k}^m$
	by induction hypothesis.
	The formula 
	\[\phi_{(c_1, \dots , c_k)} (z_1, z_{k+1}) := \exists^{j} z_2, \dots , z_{k}.~
	\bigand_{i \in [k]} \phi_{c_{i}}(z_i, z_{i+1})\]
	holds for vertices $u$ and $v$ assigned to $z_1$ and $z_{k+1}$ respectively
	if and only if
	there are at least $j$~many $(c_1, \dots , c_k)$ colored walks 
	from $u$ to $v$ in $\walkrefinementb{\chi}{k}^m$.
	
	Then the conjunction over all $(c_1, \dots , c_k)$ in $c$
	identifies $c$ in $\walkrefinementb{\chi}{k}^{m+1}$
	and is of quantifier depth $m+1$.
\end{proof}

There is a technical detail that, when one is interested in
distinguishing graphs rather than distinguishing vertex pairs, one (sometimes) needs an additional quantifier.
This is the case because
a refinement distinguishes two graphs after $m$ applications,
if the multisets of colors of both graphs are different.
Hence, there is a hidden quantifier
(saying that there is a color,
that occurs with different multiplicity in both graphs).

\begin{lemma}
	\label{lem:refinement-implies-logic-step}
	If $m$ iterations of $k$-walk refinement
	distinguish two graphs $G_1$ and $G_2$,
	then a $\walklogic{k}$ sentence
	of quantifier depth $m+1$ (respectively $m+2$ if $k=2$)
	distinguishes $G_1$ and $G_2$.
\end{lemma}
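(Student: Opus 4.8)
The plan is to reduce the statement to Lemma~\ref{lem:formula-identifing-walk-refinement-color}, which already gives, for each color $c$ produced by $m$ iterations of $k$-walk refinement, a $\walklogic{k}$ formula $\phi_c(x,y)$ of quantifier depth $m$ that identifies $c$ in $\walkrefinementb{\chi}{k}^m$ and depends only on $n$ and $c$, not on the graph. Since $m$ iterations of $k$-walk refinement distinguish $G_1$ and $G_2$, by definition the multisets of colors differ, so there is a color $c$ occurring with different multiplicities $j_1 \neq j_2$ in the two graphs; without loss of generality $j_1 > j_2$. The idea is to turn ``there are at least $j_1$ vertex pairs of color $c$'' into a $\walklogic{k}$ sentence using a single additional walk quantifier on top of $\phi_c$.

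First I would write the candidate sentence. Using the convention described just before the lemma for obtaining sentences, set the outermost quantifier to range over $(z_1,\dots,z_{k+1})$ with the branch formulas $\phi(z_1,z_2),\dots$ chosen so that only $\phi_c(z_1,z_{k+1})$ carries content and the remaining conjuncts are trivially true (e.g.\ $z_i = z_i$), so that effectively the quantifier counts the number of pairs $(u,v)$ with $\phi_c(u,v)$. Concretely the sentence is of the form $\exists^{j_1}(z_1,\dots,z_{k+1}).~\bigand_{i\in[k]}\psi_i(z_i,z_{i+1})$ where the $\psi_i$ are arranged so the body is equivalent to $\phi_c(z_1,z_{k+1})$; this has quantifier depth $m+1$. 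Since $\phi_c$ identifies $c$ in $\walkrefinementb{\chi}{k}^m$ and $k$-walk refinement is isomorphism invariant (so the number of $c$-colored pairs in $G_\ell$ equals $j_\ell$), this sentence holds on $G_1$ but not on $G_2$.

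The one subtlety — and the only real obstacle — is the case $k = 2$. When $k=2$ the walk quantifier $\exists^j(z_2).~\phi(z_1,z_2)\land\phi(z_2,z_3)$ quantifies over only the single intermediate variable $z_2$, not over the endpoints, so a single outer quantifier cannot directly count pairs $(z_1,z_3)$. Here the fix is to spend one extra quantifier: use one walk quantifier to bind $z_1$ (respectively $z_3$) by making the relevant branch formula a tautology that fixes the other endpoint, and then a second walk quantifier to bind the remaining free variable, counting pairs in two nested steps, each of which contributes one to the quantifier depth. This yields a sentence of quantifier depth $m+2$ when $k=2$, matching the statement. Once this bookkeeping is done the correctness argument is the same: the nested counting quantifiers evaluate to the number of $c$-colored pairs, which is $j_1$ in $G_1$ and $j_2 < j_1$ in $G_2$, so the sentence distinguishes the two graphs.
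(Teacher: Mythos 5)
Your reduction to Lemma~\ref{lem:formula-identifing-walk-refinement-color} and your handling of $k\geq 3$ match the paper: pick a color $c$ with different multiplicities $j_1>j_2$, take the depth-$m$ formula $\phi_c$ identifying it, and spend one walk quantifier to count the $c$-colored pairs (the paper writes this as $\exists^{j_1}x,y.\,\phi_c(x,y)$, realized inside a $k$-walk quantifier with the remaining conjuncts trivial). The only slips there are cosmetic: the walk quantifier binds $z_2,\dots,z_k$, not all of $z_1,\dots,z_{k+1}$, and if the unused bound variables are unconstrained the count becomes $j_1\cdot n^{k-3}$ rather than $j_1$ --- still enough to distinguish, since both graphs have the same $n$.

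The $k=2$ case, however, contains a genuine gap. You assert that ``the nested counting quantifiers evaluate to the number of $c$-colored pairs,'' but they do not: a formula $\exists^{a}x.\,\exists^{b}y.\,\phi_c(x,y)$ only expresses that at least $a$ vertices have $c$-outdegree at least $b$. There is no single choice of $a,b$ determined by $j_1>j_2$ alone that computes or compares the totals, so your two nested steps do not ``count pairs'' and your correctness argument for this case does not go through as written. The missing idea --- and what the paper supplies --- is to pass through the multiset of $c$-outdegrees: since the total number of $c$-colored pairs is the sum of that multiset, differing totals force differing multisets, and the multiset is pinned down (up to the needed distinction) by formulas of the form $\exists^{a}x.\,\exists^{b}y.\,\phi_c(x,y)$. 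The paper packages this as a disjunction over all outdegree multisets summing to $n_1$; equivalently, you could argue directly that some pair $(a,b)$ yields a distinguishing formula of depth $m+2$. Either way, this combinatorial step must be made explicit for the $k=2$ bound to hold.
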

\begin{proof}
	Assume~$m$ iterations of $k$-walk refinement
	distinguish the two graphs
	and let~$\chi_i$ be the coloring obtained for $G_i$ for $i\in[2]$.
	Then there is a color $c$
	occurring for a different number of vertex pairs,
	say~$n_1$ and~$n_2$ with $n_1 > n_2$,
	in~$\chi_1$ and~$\chi_2$, respectively.
	Let $\phi(x,y)$ be the formula from Lemma~\ref{lem:formula-identifing-walk-refinement-color}
	of quantifier depth $m$
	that identifies vertex pairs of color $c$ in both colorings.
	Now the formula $\exists^{n_1} x,y.~\phi(x,y)$
	is of quantifier depth $m+1$ and
	distinguishes the graphs.
	
	Suppose now that $k=2$ (and hence the prior formula is
	not a valid $\walklogic{2}$ formula).
	Let~$D$ be the multiset of $c$\nobreakdash-outdegrees
	of all vertices in~$G_1$.
	Then the sum of all $c$\nobreakdash -outdegrees (respecting the multiplicity)
	is~$n_1$.
	Let $\mathcal{D}$ be the set of all possible $c$\nobreakdash-outdegree multisets
	with sum~$n_1$.
	Then the formula
	\[\bigvee_{D \in \mathcal{D}} \bigand_{(i,d) \in D} 
	\exists^i x.~\exists^d y.~\phi(x,y)\]
	distinguishes $G_1$ and $G_2$,
	where $(i,d) \in D$ says that~$d$ occurs with multiplicity~$i$ in~$D$.
\end{proof}

\section{Bijective Walk Pebble Game}
We now describe a game called the
\emph{bijective $k$-walk pebble game},
which corresponds to $k$-walk refinement and
$k$-walk counting logic.
It is an adaption of the bijective $3$-pebble game
to agree with the $k$-walk refinement.
 
There are two players, Spoiler and Duplicator.
The game is played on two undirected graphs 
${G_1=(V_1, E_1)}$ and ${G_2=(V_2, E_2)}$.
Spoiler obtains $k+1$ pairs of pebbles 
$(p_1, q_1), \dots, (p_{k+1},q_{k+1})$
labeled with numbers $1$ to $k+1$.
We say that the pebble pairs
$(p_i, q_i)$ and $(p_{i+1}, q_{i+1})$ for all $i \in [k]$
and the pairs $(p_{k+1}, q_{k+1})$ and $(p_1, q_1)$
are \emph{consecutive}. 
Given a pebble pair $(p_i, q_i)$
we will for simplicity write $(p_{i+1},q_{i+1})$
for the next consecutive pebble pair,
in particular in the case $i=k+1$, where $(p_1, q_1)$ is meant.

If $|V_1| \neq |V_2|$, Spoiler wins immediately.
Otherwise, all pebbles are placed beside the graphs. 
The game is played in multiple rounds.
One round consists of the following three moves:

\begin{enumerate}
	
	\item If there are pebbles already placed on the graphs,
	Spoiler can choose a pair of pebbles $(p_i, q_i)$,
	replace it with $(p_1, q_1)$ and then must also
	replace the next pair $(p_{i+1}, q_{i+1})$,
	if it is placed on the graph,
	with $(p_{k+1}, q_{k+1})$.
	In either case, she~(Spoiler) then picks up all pebble pairs apart the first and last.
		
	\item 
	Duplicator chooses a bijection $f\colon  V_1^{k-1} \to V_2^{k-1}$. 
	
	\item Spoiler places the pebbles $p_i$ for $2 \leq i \leq k$ onto vertices of $G_1$.
	She may place multiple pebbles on the same vertex.
	Assume pebble $p_i$ is placed onto vertex $u_i$
	and $f(u_2, \dots, u_{k}) = (v_2, \dots, v_{k})$.
	Then, Spoiler also places the pebbles $q_i$ onto $v_i$ for all~$2 \leq i \leq k$.

\end{enumerate}
Thus, as opposed to a bijection between vertices in the classic game, Duplicator chooses a bijection from the~$k$-walks in~$G_1$ from~$p_1$ to~$p_{k+1}$ to the~$k$-walks from~$q_1$ to~$q_{k+1}$ in~$G_2$ (hence the name walk pebble game). 

We say that Spoiler wins the game after the $i$-th round,
if there are consecutive pebble pairs
$(p_i, q_i)$ and $(p_{j+1}, q_{j+1})$
placed on vertices $(u, v) \in V_1^2$ and $(u',v') \in V_2^2$,
such that the induced subgraphs
$G_1[\{u,v\}]$ and $G_2[\{u',v'\}]$
are not isomorphic.
Duplicator wins the game if Spoiler never wins the game.

We say that Spoiler can \emph{force a win after the $i$-th round}
or has \emph{a winning strategy in~$i$~rounds},
if she can always win the game after the $i$-th round
for all possibles moves of Duplicator.

With \emph{bijective walk pebble game} we refer to the game
in which Spoiler is allowed to choose the number $k$ in the beginning
(after she has seen the two graphs).

Note that, similar to the toplevel quantifier of a $\walklogic{k}$ sentence,
in the first round, only $k-1$ pebbles are placed on the graph
and hence they describe a $(k-2)$\nobreakdash-walk.
Besides keeping definitions simpler,
in the case $k = 2$ this also ensures that the game becomes 
the  bijective 3-pebble game
(modulo some irrelevant replacements of pebble pairs).

In the following, let $G_1$ and $G_2$ be two undirected graphs.
Furthermore let $u,v \in V_1$, and $u', v' \in V_2$.
We say that the bijective $k$-walk pebble game
\emph{distinguishes} $(u,v)$
from $(u', v')$ in $m$ rounds,
if Spoiler has a winning strategy in~$m$ rounds in the game that has been altered as follows: instead of her making her first move 
in the first round, the pebble pairs $(p_1,q_1)$ and $(p_{k+1},q_{k+1})$
are placed on $(u,u')$ and $(v,v')$, respectively. Afterwards the game proceeds normally with Duplicator choosing a bijection and so on.
In the special case that $(u,v)$ is an edge, non-edge, or a loop but $(u',v')$ is not of the same type, we say that 
the vertex pairs are distinguished in 0 rounds.

\begin{lemma}
	\label{lem:walklogic-distinguish-edges-implies-pebble-game}
	Let $u,v \in V_1$ and $u',v' \in V_2$.
	If there is a $\walklogic{k}$ formula~$\phi(x,y)$
	of quantifier depth $m$
	that distinguishes $(u,v)$ from $(u', v')$,
	then so does the bijective $k$-walk pebble game
	in $m$ rounds.
\end{lemma}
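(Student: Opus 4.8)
The plan is to prove the statement by induction on the structure of the $\walklogic{k}$ formula $\phi(x,y)$, showing simultaneously a slightly stronger statement that handles formulas with two free variables that are not necessarily $z_1$ and $z_{k+1}$ (so that the induction goes through when we peel off a $k$-walk quantifier and its subformulas $\phi(z_i,z_{i+1})$ talk about consecutive pebble pairs other than the first and last). Concretely, I would prove: if $\phi(z_i,z_{i+1})$ distinguishes the pair at consecutive pebble positions $(p_i,q_i),(p_{i+1},q_{i+1})$, then Spoiler, with those two pebble pairs already in place, can win in $m$ further rounds. The base case is when $\phi$ is an atom $z_i=z_{i+1}$ or $z_i\sim z_{i+1}$ or a Boolean combination of such atoms of quantifier depth $0$: then the two vertex pairs already differ in edge/loop type, so by definition Spoiler wins in $0$ rounds.

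For the Boolean cases I would argue that if $\phi = \psi_1\land\psi_2$ distinguishes the pairs, then one of $\psi_1,\psi_2$ does (its truth values differ), and both have quantifier depth at most $m$, so the induction hypothesis applies; negation is immediate since $\neg\psi$ distinguishes iff $\psi$ does. The heart of the argument is the quantifier case: suppose $\phi(z_1,z_{k+1}) = \exists^j(z_2,\dots,z_k).~\bigand_{i\in[k]}\psi_i(z_i,z_{i+1})$ has quantifier depth $m+1$ and distinguishes $(u,v)$ from $(u',v')$, say it holds at $(u,v)$ in $G_1$ but fails at $(u',v')$ in $G_2$. With pebbles $(p_1,q_1),(p_{k+1},q_{k+1})$ on $(u,u'),(v,v')$, Spoiler initiates a round; Duplicator picks a bijection $f\colon V_1^{k-1}\to V_2^{k-1}$, which we reinterpret as a bijection between the $k$-walks from $u$ to $v$ in $G_1$ and those from $u'$ to $v'$ in $G_2$. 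Since at least $j$ walks $(u,w_2,\dots,w_k,v)$ in $G_1$ satisfy $\bigand_i\psi_i$ but fewer than $j$ walks from $u'$ to $v'$ in $G_2$ do, by pigeonhole there is a walk $(u,w_2,\dots,w_k,v)$ satisfying $\bigand_i\psi_i$ whose $f$-image $(u',w_2',\dots,w_k',v')$ does not satisfy $\bigand_i\psi_i$, i.e.\ some $\psi_{i_0}$ holds at $(w_{i_0},w_{i_0+1})$ (with the convention $w_1=u$, $w_{k+1}=v$) in $G_1$ but fails at the corresponding image pair in $G_2$. Spoiler places the pebbles accordingly, producing two consecutive pebble pairs $(p_{i_0},q_{i_0}),(p_{i_0+1},q_{i_0+1})$ on vertex pairs distinguished by $\psi_{i_0}$, which has quantifier depth at most $m$; by the induction hypothesis Spoiler wins in $m$ more rounds, for a total of $m+1$.

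The main obstacle, and the point requiring the most care, is bookkeeping the pebble relabeling and the cyclic "consecutive" structure: after Spoiler picks up pebbles and the round's placements, the pebbles that were $p_1,\dots,p_{k+1}$ must be made to sit at positions $u,w_2,\dots,w_k,v$ and the winning witness $\psi_{i_0}$ refers to an \emph{arbitrary} consecutive pair, not necessarily the first and last --- this is exactly why the induction hypothesis must be stated for all consecutive pebble-pair positions, matching the grammar's reuse of the non-terminal $\phi(z_i,z_{i+1})$. I would also note the degenerate cases built into the logic (the top-level convention forcing $\psi_1$ to be $z_2=z_2$ and $\psi_k$ to be $z_k=z_k$, so the outermost quantifier ranges over $(k-2)$-walks), which align precisely with the game's stipulation that in the first round only $k-1$ pebbles are placed; these do not cause trouble since trivially-true subformulas never supply the distinguishing witness. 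A minor point is checking that $f$ being a bijection $V_1^{k-1}\to V_2^{k-1}$ legitimately serves as the walk bijection in the game regardless of which vertices are actually reachable --- this is handled exactly as in the classical bijective game, where Duplicator's bijection is over all tuples and adjacency is only checked afterwards.
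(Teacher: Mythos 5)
Your proposal is correct and follows essentially the same route as the paper's proof: induction on quantifier depth, the same handling of the Boolean cases, and in the quantifier case the same pigeonhole argument that some witness walk in $G_1$ must map under Duplicator's bijection to a non-witness in $G_2$, after which Spoiler descends into the offending conjunct $\psi_{i_0}$. The only (cosmetic) difference is that you strengthen the induction hypothesis to arbitrary consecutive pebble pairs, whereas the paper relies on the game's built-in relabeling move (Spoiler renaming $(p_{i_0},q_{i_0})$ to $(p_1,q_1)$ at the start of the next round) to reduce back to the first and last pair.
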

\begin{proof}
	Assume that $\phi(x,y)$	distinguishes $(u,v)$ from $(u', v')$
	and that the pebble pairs $(p_1,q_1)$ and $(p_{k+1},q_{k+1})$ 
	are placed on these vertices.
	The proof proceeds by induction on~$m$.
	
	If $m=0$, $\phi(x,y)$ is quantifier free,
	hence $p_1$ and $p_{k+1}$ cover an edge, non-edge, or the same vertex,
	where $q_1$ and $q_{k+1}$ cover something different,
	and Spoiler wins the game immediately.

	Assume $\phi(z_1,z_{k+1})$ has quantifier depth $m+1$.
	If $\phi = \neg \phi'$,
	then $\phi'$ distinguishes $(u,v)$ from $(u', v')$, too.
	If $\phi = \phi_1 \wedge \phi_2$,
	one formula of~$\phi_1$ and~$\phi_2$ distinguishes~$(u,v)$ from $(u', v')$.
	Hence we can assume that $\phi$ is a walk quantifier: 
	\[\phi(z_1,z_{k+1}) = \exists^j z_2, \dots, z_{k}.~
    \bigwedge_{i \in [k]} \phi_{i}(z_{i}, z_{i+1}). \]
	Assume w.l.o.g.~that $\phi_{G_1}(u, v)$ is true but $ \phi_{G_2}(u', v')$ not.
	Duplicator chooses a bijection $f\colon  V_1^{k-1} \to V_2^{k-1}$.
	There must be a tuple$(w_2, \dots, w_{k}) \in V_1^{k-1}$
	serving as witness of the quantifier in $G_1$
	but $f(w_2, \dots, w_{k}) = (w'_2, \dots , w'_{k})$ does not serve as witness for~$G_2$,
	because otherwise $\phi_{G_2}(u', v')$ was true.
	Then Spoiler places for $2 \leq i \leq k$ the pebble~$p_i$ on~$w_i$
	and pebble~$q_i$ on $w'_i$.
	
	Now, there must be an $i \in [k] $
	such that $\phi_i(w_i, w_{i+1})$ is true
	but $\phi_i(w'_i, w'_{i+1})$ is not,
	because otherwise $(w'_2, \dots , w'_{k})$ is a witness.
	Now~$\phi_i$ is of quantifier depth~$m$ and
	the $i$-th and $(i+1)$-th pebble pairs are placed on the correct vertices.
	Thus Spoiler removes all other pebbles and wins the game
	in additional~$m$ rounds  by induction hypothesis.
\end{proof}

\begin{lemma}
	If there is a $\walklogic{k}$ sentence~$\phi$
	of quantifier depth~$m$ distinguishing~$G_1$ and~$G_2$,
	then Spoiler has a winning strategy in $m$ rounds
	in the bijective $k$-walk pebble game.
\end{lemma}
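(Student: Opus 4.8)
The plan is to reduce the statement to the vertex-pair version already established in Lemma~\ref{lem:walklogic-distinguish-edges-implies-pebble-game}, mirroring how Lemma~\ref{lem:refinement-implies-logic-step} bridges between distinguishing vertex pairs and distinguishing graphs at the refinement level. So suppose $\phi$ is a $\walklogic{k}$ sentence of quantifier depth $m$ with, say, $\phi_{G_1}$ true and $\phi_{G_2}$ false. By the form of $\walklogic{k}$ sentences described after the grammar, we may assume the outermost connective of $\phi$ is handled by the same case analysis as in the previous lemma: if $\phi = \neg\phi'$ then $\phi'$ also distinguishes (with values swapped), and if $\phi = \phi_1 \wedge \phi_2$ then one of the conjuncts distinguishes $G_1$ from $G_2$ and has quantifier depth at most $m$. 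Hence we may assume $\phi$ is a $k$-walk quantifier, i.e. $\phi = \exists^j (z_2,\dots,z_k).~\bigwedge_{i\in[k]}\phi_i(z_i,z_{i+1})$, where $z_1$ and $z_{k+1}$ are not actually free (the sentence case), so effectively the quantifier ranges over $(k-2)$-walks as noted in the definition.

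First I would have Spoiler play so that, in the first round, the pebble pairs $(p_1,q_1)$ and $(p_{k+1},q_{k+1})$ end up on a vertex pair of $G_1$ that is a witness for $\phi$ but whose image under Duplicator's bijection is not a witness in $G_2$. Concretely: Duplicator must supply a bijection $f\colon V_1^{k-1}\to V_2^{k-1}$ (in the first round $k-1$ pebbles are placed, describing a $(k-2)$-walk, matching the toplevel quantifier). Since $\phi_{G_1}$ is true there are at least $j$ tuples in $V_1^{k-1}$ satisfying $\bigwedge_i \phi_i$, while $\phi_{G_2}$ is false so fewer than $j$ such tuples exist in $V_2^{k-1}$; by a counting/pigeonhole argument there is a satisfying tuple $(w_2,\dots,w_k)$ in $G_1$ whose image $f(w_2,\dots,w_k)=(w'_2,\dots,w'_k)$ is not satisfying in $G_2$. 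Spoiler places the pebbles accordingly. Then, exactly as in the proof of Lemma~\ref{lem:walklogic-distinguish-edges-implies-pebble-game}, there is an index $i\in[k]$ with $\phi_i(w_i,w_{i+1})$ true in $G_1$ but $\phi_i(w'_i,w'_{i+1})$ false in $G_2$; since $\phi_i$ has quantifier depth $m-1$ and the $i$-th and $(i+1)$-th consecutive pebble pairs sit on the relevant vertices, Spoiler invokes Lemma~\ref{lem:walklogic-distinguish-edges-implies-pebble-game} to win in $m-1$ further rounds, for $m$ rounds total.

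I would also dispatch the degenerate size case up front: if $|V_1|\neq|V_2|$ Spoiler wins immediately, so assume $|V_1|=|V_2|$, which is also what makes Duplicator's bijections on $V_1^{k-1}$ well-typed. The only real subtlety — and the step I expect to need the most care — is the transition in that first round between "a sentence is true on $G_1$ but false on $G_2$" and "Spoiler can steer the pebbles onto a distinguishing vertex pair": one must check that the $\exists^{j}$ counting semantics really does force a witness in $G_1$ mapping to a non-witness in $G_2$ under \emph{any} Duplicator bijection, and that the pebble-placement mechanics of round one (where $p_1,q_1,p_{k+1},q_{k+1}$ are the pebbles left on the board and $p_2,\dots,p_k$ are freshly placed according to $f$) line up with the semantics of the toplevel quantifier. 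Once that is in place, everything reduces cleanly to the already-proven vertex-pair lemma, so no further calculation is needed.
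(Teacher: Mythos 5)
Your proposal is correct and follows essentially the same route as the paper: reduce the Boolean connectives as in Lemma~\ref{lem:walklogic-distinguish-edges-implies-pebble-game}, treat the top-level walk quantifier (which for a sentence ranges over $(k-2)$-walks), use the counting semantics of $\exists^j$ together with injectivity of Duplicator's bijection to find a witness in $G_1$ mapped to a non-witness in $G_2$, locate a consecutive pair distinguished by some $\phi_i$ of depth $m-1$, and finish by the vertex-pair lemma. The only cosmetic slip is the remark that $p_1,q_1,p_{k+1},q_{k+1}$ are "left on the board" in round one (they start beside the graphs and only $p_2,\dots,p_k$ are placed), but this does not affect the argument.
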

\begin{proof}
	Assume $\phi$ is a sentence of quantifier depth~${m>0}$
	distinguishing~$G_1$ and~$G_2$.
	For the same reasons as in Lemma~\ref{lem:walklogic-distinguish-edges-implies-pebble-game}, 
	we can assume that $\phi$ is a walk-quantifier:
	
	\[\phi = \exists^j z_1, \dots, z_{k-1}.
	~\bigwedge_{i \in [k-2]} \phi_i(z_i, z_{i+1}).\]
	Again as in Lemma~\ref{lem:walklogic-distinguish-edges-implies-pebble-game},
	for each bijection $f\colon  V_1^{k-1} \to V_2^{k-1}$
	there is a witness \linebreak[4]$(w_1, \dots, w_{k-1}) \in V_1^{k-1}$ of $G_1$
	such that $f(w'_1, \dots, w'_{k-1}) = (v_1, \dots, v_{k-1})$
	is not a witness of $G_2$.
	Again, there is a $j \in [k-2]$
	such that~$\phi_i$ distinguishes
	$(w_j, w_{j+1})$ and $(w'_j, w'_{j+1})$.
	
	When Spoiler places the~$p_i$ pebbles on the~$w_i$
	and the~$q_i$ pebbles on the~$w'_i$,
	Spoiler can force a win in additional
	${m-1}$ rounds by Lemma~\ref{lem:walklogic-distinguish-edges-implies-pebble-game}.
	So overall she has a winning strategy in $m$ rounds.
\end{proof}

\begin{lemma}
	\label{lem:pebble-game-disntguishes-edges-implies-walkrefinement}
	Let $u,v \in V_1$ and $u',v' \in V_2$.
	If the bijective $k$-walk pebble game distinguishes 
	$(u,v)$ from $(u',v')$ in $m$ rounds,
	then $m$ iterations of $k$-walk refinement
	distinguish them.
\end{lemma}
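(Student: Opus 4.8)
The plan is to prove the contrapositive by induction on~$m$: if $m$ iterations of $k$-walk refinement do \emph{not} distinguish $(u,v)$ from $(u',v')$, then Duplicator can survive $m$ rounds of the altered game in which $(p_1,q_1)$ and $(p_{k+1},q_{k+1})$ start on $(u,u')$ and $(v,v')$. Write $\chi_i^{(\ell)}$ for the coloring obtained by applying $k$-walk refinement $\ell$ times to the initial coloring of $G_i$; these colors are canonical (nested multisets over the initial colors) and hence comparable between $G_1$ and $G_2$. The statement proved by induction is $P(m)$: in any position in which only the pebble pairs $(p_1,q_1)$ and $(p_{k+1},q_{k+1})$ are on the board, on $(a,a')$ and $(b,b')$ respectively, and $\chi_1^{(m)}(a,b)=\chi_2^{(m)}(a',b')$, Duplicator can play so that Spoiler does not win within the next $m$ rounds. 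Applying $P(m)$ to the starting position of the altered game gives the lemma, since a Duplicator strategy surviving $m$ rounds witnesses that Spoiler has no winning strategy in $m$ rounds.

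The base case $P(0)$ is immediate: $\chi_1^{(0)}(a,b)=\chi_2^{(0)}(a',b')$ forces $a,b$ and $a',b'$ to realise the same type (loop, edge, or non-edge), so $G_1[\{a,b\}]\isomorph G_2[\{a',b'\}]$; as these are the only pebbled vertices and form the only consecutive pebble pair, Spoiler has not won.

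For the inductive step let $m\geq 1$ and $\chi_1^{(m)}(a,b)=\chi_2^{(m)}(a',b')$. Unfolding one application of $k$-walk refinement, this says that the multiset over $(w_1,\dots,w_{k-1})\in V_1^{k-1}$ of the step-color tuples $\bigl(\chi_1^{(m-1)}(a,w_1),\chi_1^{(m-1)}(w_1,w_2),\dots,\chi_1^{(m-1)}(w_{k-1},b)\bigr)$ coincides with the analogous multiset for $G_2$; in particular $|V_1|=|V_2|$, and there is a bijection $g\colon V_1^{k-1}\to V_2^{k-1}$ mapping each $k$-walk from $a$ to $b$ in $G_1$ to a $k$-walk from $a'$ to $b'$ in $G_2$ that carries, step by step, the same $\chi^{(m-1)}$-colors. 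Duplicator plays $g$ (Spoiler's first move of the round can only swap the two pebbles, which is handled in the same way using that $k$-walk refinement respects converse equivalence). After Spoiler places the internal pebbles, we obtain pebbled walks $a=u_1,u_2,\dots,u_k,u_{k+1}=b$ in $G_1$ and $a'=v_1,\dots,v_{k+1}=b'$ in $G_2$ with $\chi_1^{(m-1)}(u_i,u_{i+1})=\chi_2^{(m-1)}(v_i,v_{i+1})$ for all $i\in[k]$. Since $\chi_i^{(m-1)}\refines\chi_i^{(0)}$, this gives $\chi_1^{(0)}(u_i,u_{i+1})=\chi_2^{(0)}(v_i,v_{i+1})$ for $i\in[k]$, and for the remaining consecutive pair $(p_{k+1},q_{k+1}),(p_1,q_1)$ the hypothesis gives $\chi_1^{(0)}(a,b)=\chi_2^{(0)}(a',b')$; so every consecutive pebble pair induces isomorphic two-vertex subgraphs and Spoiler has not won this round. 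When Spoiler picks up a pebble pair at the start of the next round, only the first and last pair remain, either on $(u_i,u_{i+1})$ and $(v_i,v_{i+1})$ for some $i\in[k]$ --- then $\chi_1^{(m-1)}(u_i,u_{i+1})=\chi_2^{(m-1)}(v_i,v_{i+1})$ by choice of $g$ --- or on $(b,a)$ and $(b',a')$, in which case $\chi_1^{(m-1)}(b,a)=\chi_2^{(m-1)}(b',a')$ follows from $\chi_1^{(m)}(a,b)=\chi_2^{(m)}(a',b')$ by converse equivalence together with the fact that one iteration of $k$-walk refinement can be undone (see below). In all cases $P(m-1)$ applies to the resulting position and Duplicator survives the remaining $m-1$ rounds.

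The only delicate point is that the two across-graph facts just used --- ``$\chi_i^{(m)}\refines\chi_i^{(0)}$'' and ``$\chi_i^{(m)}(x,y)$ determines both $\chi_i^{(m-1)}(x,y)$ and $\chi_i^{(m-1)}(y,x)$'' --- are graph-independent. This is because one $k$-walk refinement step returns a canonical nested multiset over $\{-1,0,1\}$ and, since the initial coloring separates loops from non-loops, ``being a loop color'' is decidable on every level by a graph-independent rule (a color is a loop color precisely if, viewed as a multiset of step-tuples, it contains a tuple all of whose entries equal one and the same loop color of the level below); consequently one refinement step is inverted by reading off the last entry of the unique step-tuple whose first $k-1$ entries all equal a loop color --- this tuple is the walk that stays at the source --- and converse equivalence amounts to reversing tuples. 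I expect this de-iteration/converse bookkeeping, together with keeping track of the cyclic arrangement of the $k+1$ pebbles so that the ``closing'' consecutive pair is handled correctly, to be the main thing to get right; the core of the argument, extracting the bijection $g$ by unfolding a single level of walk refinement, is short.
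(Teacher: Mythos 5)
Your proof is correct and is essentially the paper's argument read contrapositively: the paper inducts on $m$ to show that a Spoiler win in $m+1$ rounds forces some $k$-walk color to occur with different multiplicities (since every bijection sends some walk to a differently $\chi^{(m)}$-colored walk), whereas you use the equivalent dual fact that equal multiplicities yield a color-preserving bijection $g$ for Duplicator. Your extra bookkeeping (canonical nested-multiset colors, de-iteration via the stationary walk, converse equivalence for the closing pebble pair) is sound and in fact makes explicit details the paper's proof leaves implicit.
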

\begin{proof}
	Assume that the pebble pairs
	$(p_1,q_1)$ and $(p_{k+1},q_{k+1})$ are placed on $(u,v)$ and $(u',v')$
	and Spoiler has a winning strategy in additional $m$ rounds.
	Let $\chi$ and $\chi'$ be the initial colorings
	of the graphs $G_1$ and $G_2$.
	The proof proceeds by induction on $m$.
	
	If $m=0$, $(u,v)$ is an edge, non-edge, or loop,
	$(u',v')$ is not of the same type,
	and hence $\chi(u,v) \neq \chi'(u',v')$.
	
	Assume Spoiler can force a win of the game in additional $m+1$
	rounds.
	Whatever bijection Duplicator chooses,
	Spoiler can place the pebbles such that
	she can force a win in $m$ additional rounds.
	That means, by inductive hypothesis,
	that for every bijective mapping between
	the $k$\nobreakdash -walks from~$u$ to~$v$ in $\walkrefinementb{\chi}{k}^{m}$
	and the $k$\nobreakdash -walks from~$u'$ to~$v'$ in $\walkrefinementb{{(\chi')}}{k}^{m}$
	there is a walk
	that is mapped to a walk of different color.
	
	Hence, there is a $k$\nobreakdash -walk color 
	that occurs with different multiplicity from~$u$ to~$v$
	in~$\walkrefinementb{\chi}{k}^{m}$
	than from~$u'$ to~$v'$
	in $\walkrefinementb{{(\chi')}}{k}^{m}$.
	This just says that the vertex pairs
	obtain different colors in
	$\walkrefinementb{\chi}{k}^{m+1}$ respectively%
	$\walkrefinementb{{(\chi')}}{k}^{m+1}$.
\end{proof}

\begin{lemma}
	If Spoiler can force a win in the bijective $k$\nobreakdash-walk pebble game
	in $m$ rounds,
	then the $k$\nobreakdash-walk refinement distinguishes the graphs~$G_1$ and~$G_2$
	after $m$ iterations.
\end{lemma}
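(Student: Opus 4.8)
The plan is to reduce, by inspecting the first round, to Lemma~\ref{lem:pebble-game-disntguishes-edges-implies-walkrefinement}, and then to convert the resulting family of vertex-pair inequalities into a multiset discrepancy using the counting device that is already implicit in Lemma~\ref{lem:refinement-implies-logic-step}. First I would dispose of the case $|V_1|\neq|V_2|$: there the initial colorings already give multisets of vertex-pair colors of different cardinality, so $k$-walk refinement distinguishes $G_1$ and $G_2$ after $0\leq m$ iterations. So assume $|V_1|=|V_2|$ and, for the moment, $k\geq 3$; the case $k=2$, where the game is (up to irrelevant pebble renamings) the classical bijective $3$-pebble game, I would treat through the correspondence of~\cite{CaiFI1992,Hella96}. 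In particular $m\geq 1$.

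Now suppose Spoiler can force a win in $m$ rounds from the empty configuration. In the first round Duplicator must pick some bijection $f\colon V_1^{k-1}\to V_2^{k-1}$, Spoiler answers by placing $p_2,\dots,p_k$ on a tuple $(w_2,\dots,w_k)$ with $f(w_2,\dots,w_k)=(w_2',\dots,w_k')$, and from there she still forces a win in the remaining $m-1$ rounds. I would show that for every such $f$ there is an index $i\in\{2,\dots,k-1\}$ with
\[\walkrefinementb{\chi}{k}^{m-1}(w_i,w_{i+1})\neq\walkrefinementb{{(\chi')}}{k}^{m-1}(w_i',w_{i+1}').\]
If the configuration reached after the first round is already a win, some consecutive pair $(p_i,q_i),(p_{i+1},q_{i+1})$ with $i\in\{2,\dots,k-1\}$ witnesses it, so already $\walkrefinementb{\chi}{k}^{0}$ separates the two vertex pairs, and since finer colorings keep pairs apart the inequality also holds after $m-1$ iterations. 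Otherwise, in the second round Spoiler renames a suitable placed pair $(p_i,q_i)$, $i\in\{2,\dots,k-1\}$, to $(p_1,q_1)$ and the consecutive pair $(p_{i+1},q_{i+1})$ to $(p_{k+1},q_{k+1})$; after picking up the remaining pebbles she is exactly in the starting configuration of the vertex-pair version of the game on $(w_i,w_{i+1})$ and $(w_i',w_{i+1}')$. Counting rounds then shows Spoiler wins that game in $m-1$ rounds, so Lemma~\ref{lem:pebble-game-disntguishes-edges-implies-walkrefinement} yields the displayed inequality.

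Running this over all bijections $f$ and using $|V_1^{k-1}|=|V_2^{k-1}|$, I would conclude that there is no bijection between $V_1^{k-1}$ and $V_2^{k-1}$ that preserves, coordinate by coordinate, the sequence of $\walkrefinementb{\chi}{k}^{m-1}$-colors along the induced $(k-2)$-walk; equivalently, some color sequence $(c_1,\dots,c_{k-2})$ of the $(m-1)$-st refinement occurs with different multiplicity as the color sequence of a $(k-2)$-walk in $G_1$ than in $G_2$. It remains to see that this is visible after one more iteration, and here I would use the loop-padding trick that recurs throughout the paper: a $(k-2)$-walk $w_2\to\dots\to w_k$ corresponds bijectively to the $k$-walk $w_2\to w_2\to w_3\to\dots\to w_k\to w_k$, whose $\walkrefinementb{\chi}{k}^{m-1}$-color sequence is $(\ell,c_1,\dots,c_{k-2},\ell')$ with $\ell=\walkrefinementb{\chi}{k}^{m-1}(w_2,w_2)$ and $\ell'=\walkrefinementb{\chi}{k}^{m-1}(w_k,w_k)$ loop colors; loop colors are recognizable and force their endpoint to be stationary. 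Consequently, for each vertex pair $(u,v)$ the color $\walkrefinementb{\chi}{k}^{m}(u,v)$ already determines $\walkrefinementb{\chi}{k}^{m-1}(u,u)$, $\walkrefinementb{\chi}{k}^{m-1}(v,v)$, and the number of $k$-walks from $u$ to $v$ whose color sequence is $(\walkrefinementb{\chi}{k}^{m-1}(u,u),c_1,\dots,c_{k-2},\walkrefinementb{\chi}{k}^{m-1}(v,v))$, which is precisely the number of $(c_1,\dots,c_{k-2})$-colored $(k-2)$-walks from $u$ to $v$. Summing over $(u,v)$, the total number of $(c_1,\dots,c_{k-2})$-colored $(k-2)$-walks in a graph is a function of the multiset $\msetcondition{\walkrefinementb{\chi}{k}^{m}(u,v)}{u,v}$ alone; hence this multiset differs for $G_1$ and $G_2$, which is exactly to say that $m$ iterations of $k$-walk refinement distinguish $G_1$ and $G_2$.

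The hard part will be the bookkeeping rather than anything conceptual: matching the round counts between the walk pebble game (whose first round contains no Spoiler renaming move and places only $k-1$ pebbles) and the vertex-pair game of Lemma~\ref{lem:pebble-game-disntguishes-edges-implies-walkrefinement}, and checking that in the second round Spoiler may always take her renaming index in $\{2,\dots,k-1\}$, so that a genuine two-pebble configuration supported on two vertices of the committed $(k-2)$-walk is reached rather than one of the degenerate one-pebble configurations the rules also permit — together with the separate, classical treatment of $k=2$.
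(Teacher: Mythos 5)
Your proposal takes essentially the same route as the paper's proof: use the first round together with Lemma~\ref{lem:pebble-game-disntguishes-edges-implies-walkrefinement} to conclude that no bijection between the $(k-2)$-walks of $G_1$ and those of $G_2$ preserves the walk colors after $m-1$ iterations, and then observe that if $m$ iterations failed to distinguish the graphs such a color-preserving bijection would have to exist. You are merely more explicit than the paper on the two points it glosses over --- the round-count/degenerate-configuration bookkeeping when passing to the vertex-pair game, and the loop-padding argument showing that the multiplicity of each $(k-2)$-walk color is determined by the multiset of pair colors after $m$ iterations --- and both of these fillings-in are correct.
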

\begin{proof}
	At the beginning of the game, Duplicator chooses a bijection.
	For every such bijection,
	Spoiler can place the pebbles
	such that she can force a win in additional $m-1$ rounds.
	As in and by Lemma~\ref{lem:pebble-game-disntguishes-edges-implies-walkrefinement}
	there is no bijective mapping between the walks
	on vertices of $G_1$ and those on $G_2$
	such that assigned walks have the same color
	after $m-1$ iterations of $k$-walk refinement.
	
	But then $m$ iterations distinguish the graphs
	because if not, such a mapping would always exist.
\end{proof}

\begin{theorem}
	\label{thm:walk-refinement-logic-game-correspondence}
	Two graphs $G_1$ and $G_2$ are distinguished 
	by $k$-walk refinement if and only if
	they are distinguished by $\walklogic{k}$ if and only if
	Spoiler has a winning strategy in the bijective $k$-walk pebble game.
\end{theorem}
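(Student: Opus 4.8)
The plan is to obtain the theorem as a formality by closing a cycle of three implications, each of which is an immediate consequence of a lemma already established in this section or the previous one. Concretely, I would show that $k$-walk refinement distinguishing $G_1$ and $G_2$ implies that $\walklogic{k}$ distinguishes them, that this in turn implies Spoiler has a winning strategy in the bijective $k$-walk pebble game, and that this finally implies $k$-walk refinement distinguishes $G_1$ and $G_2$; chaining the three arrows yields the pairwise equivalence of all three conditions.

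First, suppose $k$-walk refinement distinguishes $G_1$ and $G_2$. By definition this means the multisets of colors produced by $m$ iterations of $k$-walk refinement on the two graphs differ for some $m$, so Lemma~\ref{lem:refinement-implies-logic-step} produces a $\walklogic{k}$ sentence (of quantifier depth $m+1$, or $m+2$ if $k=2$) distinguishing $G_1$ and $G_2$, whence $\walklogic{k}$ distinguishes them. Second, if $\walklogic{k}$ distinguishes $G_1$ and $G_2$, then some $\walklogic{k}$ sentence $\phi$ does so; $\phi$ has some finite quantifier depth $m$, and the lemma that lifts Lemma~\ref{lem:walklogic-distinguish-edges-implies-pebble-game} from vertex pairs to whole graphs shows Spoiler wins in $m$ rounds, hence has a winning strategy. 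Third, if Spoiler has a winning strategy, she can force a win in $m$ rounds for some $m$, and the lemma that lifts Lemma~\ref{lem:pebble-game-disntguishes-edges-implies-walkrefinement} from vertex pairs to whole graphs shows that $m$ iterations of $k$-walk refinement distinguish $G_1$ and $G_2$. This closes the cycle.

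I do not expect a genuine obstacle here: the substantive work is contained in the cited lemmas, and what remains is bookkeeping — reconciling the three notions of ``distinguishing'' (inequality of color multisets for the refinement, existence of a separating sentence for the logic, existence of a winning strategy for the game) and disposing of the degenerate case $|V_1|\neq|V_2|$, in which Spoiler wins immediately, the color multisets have different cardinalities, and a $\walklogic{k}$ sentence counting $(k-1)$-tuples of vertices separates the graphs. It is also worth stating explicitly that, unlike the quantitative correspondences elsewhere in the paper, this theorem asserts only the qualitative equivalence, so the additive slack of one or two quantifiers/rounds coming from Lemma~\ref{lem:refinement-implies-logic-step} is harmless.
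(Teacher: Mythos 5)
Your proposal is correct and matches the paper's approach exactly: the paper states Theorem~\ref{thm:walk-refinement-logic-game-correspondence} without a separate proof precisely because it follows by chaining Lemma~\ref{lem:refinement-implies-logic-step}, the lemma lifting Lemma~\ref{lem:walklogic-distinguish-edges-implies-pebble-game} to sentences, and the lemma lifting Lemma~\ref{lem:pebble-game-disntguishes-edges-implies-walkrefinement} to graphs, closing the same cycle of implications you describe. Your remarks on the degenerate case $|V_1|\neq|V_2|$ and on the harmlessness of the additive quantifier-depth slack are correct bookkeeping.
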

\begin{corollary}
	Two graphs $G_1$ and $G_2$ are distinguished 
	by walk refinement
	if and only if
	they are distinguished by walk counting logic
	if and only if
	Spoiler has a winning strategy in the bijective walk pebble game.
\end{corollary}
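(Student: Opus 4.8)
The plan is to derive the corollary from Theorem~\ref{thm:walk-refinement-logic-game-correspondence} by quantifying over the parameter $k$, using that walk refinement equals $n^2$-walk refinement on $n$-vertex graphs (Lemma~\ref{lem:walk-refinment-walk-length-bound}) and that walk counting logic and the bijective walk pebble game are, by definition, the unions over all $k$ of the corresponding $\walklogic{k}$ logics and bijective $k$-walk pebble games, with Spoiler choosing $k$ after seeing the graphs. The three equivalences should be established pairwise in a cycle.

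First I would show that if walk refinement distinguishes $G_1$ and $G_2$, then so does walk counting logic. Since $\walkrefinement{\chi} \equiv \walkrefinementb{\chi}{n^2}$ and the same holds for the other graph (and for all iterates, since $\walkrefinementb{\cdot}{n^2}$ is the finest $k$-walk refinement and stabilizes to the same partition), distinguishing the two graphs by walk refinement means that some finite iteration of $n^2$-walk refinement distinguishes them. By Lemma~\ref{lem:refinement-implies-logic-step} there is a $\walklogic{n^2}$ sentence distinguishing them, and $\walklogic{n^2}$ is part of walk counting logic. Here one must be slightly careful that $n = |V_1| = |V_2|$, as otherwise the graphs are distinguished trivially (in zero rounds of the game, by a sentence comparing vertex counts, and by the refinement since the color multisets have different sizes); I would dispatch that edge case first in all three implications.

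Next, if walk counting logic distinguishes $G_1$ and $G_2$, then some $\walklogic{k}$ sentence $\phi$ of some quantifier depth $m$ does so. By the lemma preceding Theorem~\ref{thm:walk-refinement-logic-game-correspondence} (the one stating that a distinguishing $\walklogic{k}$ sentence of depth $m$ yields a Spoiler winning strategy in $m$ rounds of the bijective $k$-walk pebble game), Spoiler wins the bijective $k$-walk pebble game; since in the bijective walk pebble game Spoiler is allowed to pick this $k$ at the start, Spoiler has a winning strategy there. Finally, if Spoiler has a winning strategy in the bijective walk pebble game, she has fixed some $k$ and wins the bijective $k$-walk pebble game in some number $m$ of rounds; by the last lemma before Theorem~\ref{thm:walk-refinement-logic-game-correspondence}, $k$-walk refinement distinguishes $G_1$ and $G_2$ after $m$ iterations, and since $\walkrefinementb{\cdot}{k}^m \refines \walkrefinement{\cdot}^{\,?}$ is not quite what we want—rather, $k$-walk refinement distinguishing implies walk refinement distinguishing because walk refinement is the finest $k'$-walk refinement and thus refines $k$-walk refinement at every iteration—walk refinement distinguishes the two graphs. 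This closes the cycle.

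I do not expect a genuine obstacle here: the statement is a direct packaging of Theorem~\ref{thm:walk-refinement-logic-game-correspondence} together with the two ``$k$ is bounded by $n^2$'' facts, and the only thing requiring minor attention is bookkeeping the off-by-one shifts in quantifier depth versus round count (the $m+1$ or $m+2$ in Lemma~\ref{lem:refinement-implies-logic-step}) and confirming that these finite shifts are harmless because we only care about the existence of a distinguishing object, not its exact depth. The one subtlety worth stating explicitly is that ``walk refinement distinguishes $G_1$ from $G_2$'' unwinds to ``some finite iterate of walk refinement produces different color multisets,'' and that by Lemma~\ref{lem:walk-refinment-walk-length-bound} every such iterate agrees with the corresponding iterate of $n^2$-walk refinement on $n$-vertex graphs, so the reduction to a fixed $k = n^2$ is legitimate.
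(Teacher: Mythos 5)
Your proposal is correct and matches the paper's (implicit) argument: the corollary is exactly the statement of Theorem~\ref{thm:walk-refinement-logic-game-correspondence} with the parameter $k$ existentially quantified, combined with the fact that walk refinement coincides with $n^2$-walk refinement (Lemma~\ref{lem:walk-refinment-walk-length-bound}) and that walk counting logic and the bijective walk pebble game are by definition the unions over all $k$. Your attention to the edge case $|V_1|\neq|V_2|$ and to the agreement of all iterates of walk refinement with those of $n^2$-walk refinement is exactly the right bookkeeping.
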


These equivalences in particular imply that the upper bound for the walk refinement (Theorem~\ref{thm:walk:ref:stab}) translates to the game and logic scenarios as follows.

\begin{corollary}
	If Spoiler has a winning strategy in the bijective walk-pebble game
	on two graphs $G_1$ and $G_2$,
	then she has a winning strategy requiring $\bigO(n)$ rounds.
\end{corollary}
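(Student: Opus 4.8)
The plan is to transport the linear bound on the iteration number of walk refinement (Theorem~\ref{thm:walk:ref:stab}) through the refinement-logic-game correspondences established above. We may assume $|V_1| = |V_2| = n$ (otherwise Spoiler wins in $0$ rounds) and $n \geq 2$ (the case $n = 1$ being trivial). By the preceding corollary, a Spoiler winning strategy in the bijective walk pebble game exists if and only if walk refinement distinguishes $G_1$ and $G_2$, so our hypothesis yields that walk refinement distinguishes the two graphs. By Lemma~\ref{lem:walk-refinment-walk-length-bound}, at every iteration walk refinement induces the same partition as $n^2$-walk refinement (apply the lemma to the successive intermediate colorings), hence $n^2$-walk refinement distinguishes $G_1$ and $G_2$ as well; and by Theorem~\ref{thm:walk:ref:stab} it stabilizes after at most $2n$ iterations on each of the two graphs. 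I would then argue that the two graphs are already distinguished after some $m_0 = \bigO(n)$ iterations of $n^2$-walk refinement: walk refinement, like the Weisfeiler-Leman refinement, never re-merges pairs it has separated, so once the two color multisets differ at some iteration they differ at every later one; moreover the color of a pair after iteration $m+1$ canonically determines its color after iteration $m$, so no distinguishing information is lost as iterations increase and, once the refinement has stabilized on both graphs, no new distinguishing information can arise.

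Next I would push this through the logic and the game. Since $n^2 \geq 3$, Lemma~\ref{lem:refinement-implies-logic-step} (the ``$+2$ if $k = 2$'' alternative not applying) yields a $\walklogic{n^2}$ sentence of quantifier depth $m_0 + 1 = \bigO(n)$ that distinguishes $G_1$ and $G_2$. By the lemma stating that a $\walklogic{k}$ sentence of quantifier depth $m$ distinguishing two graphs provides Spoiler with a winning strategy in $m$ rounds of the bijective $k$-walk pebble game, Spoiler has a winning strategy in $\bigO(n)$ rounds of the bijective $n^2$-walk pebble game. Finally, in the bijective walk pebble game Spoiler may fix $k$ after inspecting the two graphs: she chooses $k = n^2$ and follows this strategy, winning in $\bigO(n)$ rounds, as claimed.

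The only step that is not a direct invocation of an earlier result is the passage from ``$n^2$-walk refinement distinguishes $G_1$ and $G_2$'' (an unquantified statement) to ``it does so within the linearly bounded stabilization time'', and I expect this to be the main, though routine, obstacle. It rests on the monotonicity of the partitions produced by walk refinement, so that a difference between the two color multisets, once present, persists, together with the fact that a pair's color at one iteration is canonically recoverable from its color at the next, so that no distinguishing power is lost as the iterations increase and none can be created once the refinement has stabilized on both graphs. Everything else follows directly from Lemma~\ref{lem:walk-refinment-walk-length-bound}, Theorem~\ref{thm:walk:ref:stab}, Lemma~\ref{lem:refinement-implies-logic-step}, and the $\walklogic{k}$-sentence-to-pebble-game lemma.
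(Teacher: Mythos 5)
Your proposal is correct and follows exactly the route the paper intends: the paper states this corollary without proof as an immediate consequence of the refinement--logic--game equivalences together with Theorem~\ref{thm:walk:ref:stab} and Lemma~\ref{lem:walk-refinment-walk-length-bound}, and your chain (refinement distinguishes $\Rightarrow$ distinguishes within the $\bigO(n)$ stabilization bound $\Rightarrow$ $\walklogic{n^2}$ sentence of depth $\bigO(n)$ $\Rightarrow$ Spoiler strategy in $\bigO(n)$ rounds with $k=n^2$) is precisely that argument spelled out. The one step you flag --- that once both colorings are stable no new distinguishing information arises --- is indeed the only point the paper leaves implicit, and your sketch (colors at iteration $m+1$ canonically determine colors at iteration $m$, and after stabilization the new color of a pair is a function of its old color) is the standard and completable way to close it.
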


\begin{corollary}
	If two graphs $G_1$ and $G_2$ are distinguished
	by walk counting logic,
	then they can be distinguished by a
	walk counting logic sentence of quantifier depth $\bigO(n)$.
\end{corollary}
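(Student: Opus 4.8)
The plan is to chain three things already established: the equivalence between walk counting logic and the walk refinement, the linear bound on the iteration number of walk refinement, and the translation of refinement rounds into sentences. Assume $G_1$ and $G_2$ are distinguished by walk counting logic and let $n$ be their common number of vertices (if the vertex numbers differ the distinction is trivial and a sentence of bounded quantifier depth already works). By the corollary following Theorem~\ref{thm:walk-refinement-logic-game-correspondence}, being distinguished by walk counting logic is the same as being distinguished by walk refinement; so, writing $\chi_i$ for the initial coloring of $G_i$, the multisets of colors of $\walkrefinement{\chi_1}^{m}$ and $\walkrefinement{\chi_2}^{m}$ differ for some $m$.

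Next I would bring in the iteration bound. Theorem~\ref{thm:walk:ref:stab} tells us that walk refinement stabilizes after $2n$ iterations on $n$-vertex graphs, so the colorings $\walkrefinement{\chi_1}^{2n}$ and $\walkrefinement{\chi_2}^{2n}$ are already stable and hence still have different color multisets; that is, $2n$ iterations of walk refinement distinguish $G_1$ and $G_2$. To plug this into the translation lemma I rephrase it for a single fixed walk length: by Lemma~\ref{lem:walk-refinment-walk-length-bound} we have $\walkrefinement{\chi} \equiv \walkrefinementb{\chi}{n^2}$ for every coloring $\chi$ on $n$ vertices, and since a refinement depends only on the induced partition this equivalence is preserved under iteration, giving $\walkrefinement{\chi}^{m} \equiv \walkrefinementb{\chi}{n^2}^{m}$ for all $m$. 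Therefore $2n$ iterations of $n^2$-walk refinement distinguish $G_1$ and $G_2$.

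Finally I apply Lemma~\ref{lem:refinement-implies-logic-step} with $k = n^2$ (which is at least $2$, in fact at least $4$ for $n\geq 2$, so the case giving quantifier depth $m+1$ applies): from $2n$ iterations of $n^2$-walk refinement distinguishing the two graphs we obtain a $\walklogic{n^2}$ sentence of quantifier depth $2n+1$ that distinguishes them. This is a walk counting logic sentence of quantifier depth $\bigO(n)$, which is exactly what the corollary claims.

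I do not expect a genuine obstacle: all of the mathematical substance lives in Theorem~\ref{thm:walk:ref:stab} (the algebraic $\bigO(n)$ bound obtained from chains of semisimple matrix algebras) and in the already-established logic/game/refinement correspondence. The only points needing a careful word are bookkeeping: that walk refinement iterates exactly like $n^2$-walk refinement at the level of partitions, so the $2n$-iteration bound can legitimately be fed into Lemma~\ref{lem:refinement-implies-logic-step} (which is stated for $k$-walk refinement with a fixed $k$ rather than for the walk refinement itself), and that the ``distinguish graphs'' version of the translation costs one extra quantifier, which is absorbed into the $\bigO(n)$.
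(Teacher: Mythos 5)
Your proposal is correct and matches the paper's intended argument: the paper derives this corollary directly from the equivalences of Theorem~\ref{thm:walk-refinement-logic-game-correspondence} combined with the $2n$-iteration bound of Theorem~\ref{thm:walk:ref:stab} and the translation Lemma~\ref{lem:refinement-implies-logic-step}, exactly as you do. Your extra bookkeeping (fixing $k=n^2$ via Lemma~\ref{lem:walk-refinment-walk-length-bound} so that Lemma~\ref{lem:refinement-implies-logic-step} applies, and accounting for the $+1$ quantifier) just makes explicit what the paper leaves implicit.
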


\section{A Linear Lower Bound for Walk Refinement}
\label{sec:lower-bound}

In this section we show that there are graphs
on which walk refinement stabilizes only after $\Omega(n)$ iterations.
Specifically,
we show this for the same graphs, 
for which Fürer already showed that the WL refinement
requires $\Omega(n)$ iterations~\cite{Furer2001}.
We do this by demonstrating that Duplicator has a strategy
in the bijective walk pebble game played on these graphs
that delays the win of Spoiler for at least $\Omega(n)$ rounds.

In the following we recall well-known constructions and their properties
from~\cite{CaiFI1992} and~\cite{Furer2001}.  
For proofs of these properties we refer the reader to the original papers.

\subsection{CFI-Construction}
The graphs used by Fürer in~\cite{Furer2001}
to prove the lower bound
are obtained by taking suitable base graphs
and replacing each vertex with a special gadget.
We first describe these gadgets and their properties.

Let $G=(V,E)$ be a simple connected \emph{base graph}.
We call the vertices and edges in the base graph
\emph{base vertices} and \emph{base edges}, respectively. 
Each base vertex will be replaced by a gadget (a small graph)
and each base edge~$e$ will be represented by edges between the gadgets corresponding to the endpoints of~$e$.

Cai, Fürer, and Immerman introduced the so called CFI\nobreakdash-gadgets~\cite{CaiFI1992}
consisting of outer and middle vertices,
where a base edge results in edges between the outer vertices of two gadgets.
However, in~\cite{Furer2001} Fürer uses a variant of these gadgets
only consisting of the middle vertices.
He directly connects the middle vertices of two gadgets.
In this paper we follow this approach because
it simplifies our reasoning for the bijective walk pebble game (see Figures~\ref{fig:lower-bound-grid} and~\ref{fig:lower-bound-gadgets}).

A gadget~$F_d$ of degree~$d$ consists of all~$d$ tuples $\{0,1\}^d$
with an even number of ones as vertices. The gadget has no edges.

Let $v \in V$ be a base vertex of degree~$d$.
When replacing~$v$ with a gadget of degree~$d$,
we denote with $v(a_1, \dots , a_d)$, where $a_i \in \{0,1\}$, 
the vertices of the gadget (of course we have $a_i = 1$ for an even number of $a_i$).

We fix arbitrarily for each base vertex $v$ an ordering of its incident edges,
so that we can speak of the $i$-th base edge incident to $v$.
The undirected graph $\repl{G}$ is obtained from the base graph $G$
by replacing every base vertex $v$ with a gadget~$F(v)$ of degree~$d(v)$
and connecting gadgets arising from adjacent base vertices as follows:
Let $e= \{u,v\}$ be a base edge,
let~$u$ and~$v$ have degree~$d$ and~$d'$ respectively,
and assume that~$e$ is the $i$-th incident edge of~$u$
and the $j$-th incident edge of~$v$.
We then insert the edges 
$\setcondition{\{u(a_1, \dots, a_d), v(b_1, \dots, b_{d'})\}}{a_i = b_j}$
between vertices that agree on the $i$-th and $j$-th component, respectively.
That is, a base edge is represented in $\repl{G}$ by two
complete bipartite induced subgraphs,
one for $a_i = b_j = 0$
and one for $a_i = b_j = 1$
(these subgraphs may be empty if one of the both base vertices has degree~$1$).

To \emph{twist} a base edge $\{u,v\} \in E$ means to
replace every edge between a vertex of the gadget $F(u)$
and a vertex of the gadget $F(v)$
by a nonedge and vice versa.
We obtain from $\repl{G}$ another graph $\replt{G}$
by \emph{twisting} some arbitrary base edge.
The graph $\replt{G}$ is well defined up to isomorphism,
because the graph obtained from $\repl{G}$
by twisting another edge is always isomorphic to $\replt{G}$.
If there is at least one base edge,
then $\repl{G}$ is not isomorphic to $\replt{G}$.

We say that $v(a_1, \dots, a_d)$ \emph{originates} from $v$
or that the \emph{origin} of $v(a_1, \dots, a_d)$ is $v$. 
Likewise, we say that an edge
$\{u(a_1, \dots, a_d), v(b_1, \dots, b_{d'})\}$ \emph{originates} from $\{u,v\}$
or has \emph{origin} $\{u, v\}$.
We extend this notion to sets and walks: 
A set of vertices in $\repl{G}$ (or $\replt{G}$)
originates from the set of origins
and a walk originates from the walk consisting of the origins of the visited vertices.

In the following, we will use $\straight{u}, \straight{v},$ and $\straight{w}$
for vertices of $\repl{G}$
with origins $u, v$, and $w$ respectively.
Similarly, we use $\twisted{u}, \twisted{v},$ and $\twisted{w}$
for vertices of $\replt{G}$.
Let $\repl{G} = (\straight{V}, \straight{E})$ and $\replt{G} = (\twisted{V}, \twisted{E})$ and note that $\straight{V} = \twisted{V}$
and hence we can reinterpret an automorphism of $\replt{G}$
as a mapping between $\repl{G}$ and $\replt{G}$.
We say that an automorphism $\phi$ of $\replt{G}$
\emph{moves the twist} to the base edge $\{w, w'\} \in E$
if
\[
\{\straight{u}, \straight{v}\} \in \straight{E} 
\Leftrightarrow
\{\phi(\straight{u}), \phi(\straight{v})\} \in \phi(\twisted{E})
\]
for all $\{\straight{u},\straight{v}\}$ not originating from $\{w, w'\}$ and
\[
\{\straight{u}, \straight{v}\} \in \straight{E} 
\Leftrightarrow
\{\phi(\straight{u}), \phi(\straight{v})\} \notin \phi(\twisted{E})
\]
for all $\{\straight{u},\straight{v}\}$ originating from $\{w, w'\}$.

That is, $\phi$ behaves like an isomorphism 
except on vertex pairs originating from the base edge $\{w, w'\}$,
on which~$\phi$ inverts adjacency.
For every base edge, there is an automorphism of $\replt{G}$
moving the twist to that edge
(which is just another way of saying
that the graph obtained by twisting some edge in $\repl{G}$
is always isomorphic to $\replt{G}$).
Assume that~$\phi$ moves the twist to $\{u, v\}$
and we want to move the twist to $\{v, w\}$.
Then there is another automorphism~$\psi$
which possibly permutes the vertices of $F(v)$ but is otherwise constant
such that $\psi \circ \phi$ moves the twist to $\{v, w\}$.
In general, not every automorphism moves the twist to a single base edge,
but to an odd number of bases edges (on which it inverts adjacency).
In the following we only consider automorphisms
moving the twist to a single base edge.

\subsection{Lower Bound for the Weisfeiler-Leman Refinement}
We recall the necessary parts of Fürer's lower bound 
on the iteration number of the
$d$-dimensional WL refinement.
We only deal with the $2$-dimensional case.

\tikzstyle{vertex} = [circle, fill=black, inner sep=0.6mm,  minimum size = 1mm]
\tikzstyle{twisted} = [decorate, decoration=zigzag]

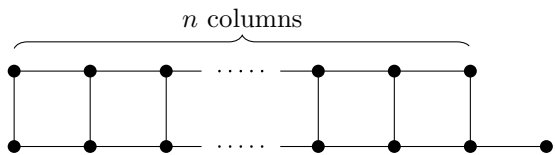
\begin{figure}
	\centering
		\begin{tikzpicture}[>=stealth,shorten >=1pt,node distance=1.5cm,on grid,every node/.style={vertex}]
		
		\foreach \x in {1,...,3}
		{
			\foreach \y in {1,...,2}
			{
				\node (\x\y) at (\x, \y){};
				\ifthenelse{\NOT 1 = \x} {
					\path[black] (\x,\y) edge (\x-1,\y);
				}{}
			}
			\path[black] (\x,1) edge (\x, 2);
		}

		\foreach \x in {5,...,7}
		{
			\foreach \y in {1,...,2}
			{
				\node (\x\y) at (\x, \y){};
				\ifthenelse{\NOT 5 = \x} {
					\path[black] (\x,\y) edge (\x-1,\y);
				}{}
			}
			\path[black] (\x,1) edge (\x, 2);
		}		
	
		\node (91) at (8,1){};
		
		\path[black] (7,1) edge (8,1);
		
		\foreach \y in {1,...,2}
		{
			\path[black, line width = 1pt, line cap=round, dash pattern=on 0pt off 0.14cm] (3.7,\y) edge (4.3,\y);
			\path[black]
				(3,\y) edge (3.5,\y)
				(4.5,\y) edge (5,\y);
		}	
	
		\draw [decorate,decoration={brace,amplitude=5pt},xshift=0pt,yshift=0pt]
		(1,2.3) -- (7,2.3) node [fill=none, black,midway,yshift=0.4cm] 
		{\footnotesize $n$ columns};

		\end{tikzpicture}
	\caption{
		The base graph $\grid{2}{n}$.
	}
	\label{fig:lower-bound-grid}
\end{figure}

Let $\grid{2}{n}$ be a $2\times n$ grid with an additional vertex attached to one corner
(depicted in Figure~\ref{fig:lower-bound-grid}).
In this graph all vertices can be uniquely identified by their distance to
the unique vertex of degree $1$
as well as the distance to the two adjacent vertices of degree $2$
(given that $n \geq 3$).
By replacing with gadgets as described, we obtain two graphs
$\repl{\grid{2}{n}}$ and $\replt{\grid{2}{n}}$.
The graph $\repl{\grid{2}{4}}$ is shown in Figure~\ref{fig:lower-bound-gadgets}.
The graphs $\repl{\grid{2}{n}}$ and $\replt{\grid{2}{n}}$
are not isomorphic
and can be distinguished by the Weisfeiler-Leman refinement.
Hence, Spoiler has a winning strategy in the bijective 3-pebble game
and consequently also in the bijective walk pebble game.
Note that, since vertices of $\grid{2}{n}$ have degree at most $3$,
$\repl{\grid{2}{n}}$ and $\replt{\grid{2}{n}}$ have $\Theta(n)$ vertices.

We recall some facts for the bijective 3-pebble game played on the graphs from~\cite{Furer2001}:
Assume that in the progress of the game
some pebble pairs are placed on the graphs.
Then we call an isomorphism $\phi$ between two graphs \emph{pebble respecting}
if~$v$ is covered by pebble~$p_i$ if and only if $\phi(v)$ is covered by~$q_i$.
In case~$\phi$ is an automorphism it necessarily maps all vertices covered by pebbles to themselves.

\newcommand{\id}[1]{#1}

\newcommand{\drawgadget}[4] {
	
	\draw[dashed] (#3,#4) circle (0.6);

	\ifnum #1=1
		\node (#2\id{0}) at (#3,#4){};
	\else \ifnum #1=2
		\node (#2\id{11}) at (#3+0.25,#4-0.25){};
		\node (#2\id{00}) at (#3-0.25,#4+0.25){};
	
	\else
		\foreach \x in {0,...,1}
		\foreach \y in {0,...,1}
		\foreach \z in {0,...,1}
		{
			\ifodd \numexpr \x + \y + \z \else
				\node (#2\x\y\z) at ({#3 + 0.5*( \z - 0.3*\x) -0.2}, {#4 + 0.5*(\x + 0.7*\z - 1.7*\y -\x*\z)+0.1}){}; 
			\fi
		}
	\fi
	\fi

}

\newcounter{memcount}
\newcommand{\countmem}[2]{%
	\setcounter{memcount}{0}%
	\foreach \i in #1{%
		\ifthenelse{\equal{\i}{#2}} {%
		\stepcounter{memcount}%
	}{}%
	}%

}

\newcommand{\suffx}{x}
\newcommand{\suffy}{y}
\newcommand{\suffz}{z}
\newcommand{\suffsum}{sum}
\newcommand{\suffseq}{seq}
\newcommand{\suffrr}{rr}

\newcommand{\foreachdim}[4] {
	\ifnum #1=1
		\expandafter\foreach \csname#2\suffx\endcsname in {0,...,1} {
			\expandafter\def\csname#2\suffsum\endcsname{%
					\csname#2\suffx\endcsname}
			\expandafter\def\csname#2\suffseq\endcsname{%
					\csname#2\suffx\endcsname}
			\expandafter\def\csname#2\suffrr\endcsname{%
					{\csname#2\suffx\endcsname}}
			#3
		}
	\else \ifnum #1=2
		\expandafter\foreach \csname#2\suffx\endcsname in {0,...,1} {
		\expandafter\foreach \csname#2\suffy\endcsname in {0,...,1} {
			\expandafter\def\csname#2\suffsum\endcsname{%
				\csname#2\suffx\endcsname + %
				\csname#2\suffy\endcsname}
			\expandafter\def\csname#2\suffseq\endcsname{%
				\csname#2\suffx\endcsname\csname#2\suffy\endcsname}
			\expandafter\def\csname#2\suffrr\endcsname{%
				{\csname#2\suffx\endcsname,\csname#2\suffy\endcsname}}
				#3
		}
		}
	\else
		\expandafter\foreach \csname#2\suffx\endcsname in {0,...,1} {
		\expandafter\foreach \csname#2\suffy\endcsname in {0,...,1} {
		\expandafter\foreach \csname#2\suffz\endcsname in {0,...,1} {
		\expandafter\def\csname#2\suffsum\endcsname{%
			\csname#2\suffx\endcsname + %
			\csname#2\suffy\endcsname +%
			\csname#2\suffz\endcsname}
		\expandafter\def\csname#2\suffseq\endcsname{%
			\csname#2\suffx\endcsname\csname#2\suffy\endcsname\csname#2\suffz\endcsname}
		\expandafter\def\csname#2\suffrr\endcsname{%
			{\csname#2\suffx\endcsname,\csname#2\suffy\endcsname,\csname#2\suffz\endcsname}}
		#3
		}
		}
	}
	\fi\fi
	
}

\newcommand{\drawconnectgadget}[8] {

	\foreachdim{#1}{a} {
	\foreachdim{#2}{b}{
		\pgfmathsetmacro\as{\arr[#5]}
		\pgfmathsetmacro\bs{\brr[#6]}
		\ifodd \numexpr \asum \else
		\ifodd \numexpr \bsum \else
		\if\as\bs
		    \countmem{{#7}}{\aseq\id{E}\bseq}
		    \ifthenelse{\value{memcount} > 0}{
		    	\path (#3\aseq) edge[bend right=5] (#4\bseq);
		    }{
		    	\countmem{{#8}}{\aseq\id{E}\bseq}
		    	\ifthenelse{\value{memcount} > 0}{
		    		\path (#3\aseq) edge[bend left=5] (#4\bseq);
		    	}{
		    		\path (#3\aseq) edge (#4\bseq);
		    	}
		    }

		\fi
		\fi\fi
	}
}
	
}

\newcommand{\gadgetlabel}[5] {
	\node[fill = none] () at ($(#1#2#3#4) + #5$) {\scriptsize$#1(#2,#3,#4)$};
}

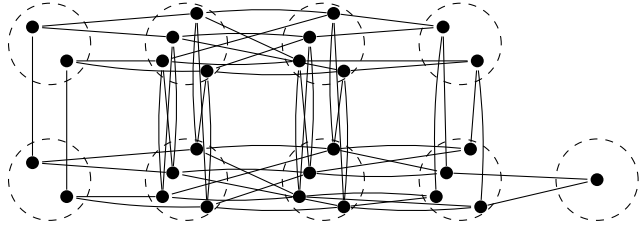
\begin{figure}
	\centering
		\begin{tikzpicture}[>=stealth,shorten >=1pt,on grid,every node/.style={vertex}, scale=0.9]
		
		\def\spacing{2}

		\drawgadget{2}{z1}{-\spacing}{0}
		\drawgadget{2}{z2}{-\spacing}{\spacing}
		
		\drawgadget{3}{a1}{0}{0}
		\drawgadget{3}{b1}{\spacing}{0}
		\drawgadget{3}{c1}{2*\spacing}{0}
		\drawgadget{3}{a2}{0}{\spacing}
		\drawgadget{3}{b2}{\spacing}{\spacing}
		\drawgadget{2}{c2}{2*\spacing}{\spacing}
		
		\drawconnectgadget{3}{3}{b1}{a1}{0}{0}{101E101,000E000}{011E011, 110E110}
		\drawconnectgadget{3}{3}{b1}{c1}{1}{1}{000E000,011E011}{101E101, 110E110}
		\drawconnectgadget{3}{3}{b2}{a2}{0}{0}{101E101,000E000}{110E110,011E011}
		\drawconnectgadget{3}{2}{b2}{c2}{1}{1}{}{}
		
		\drawgadget{1}{d}{3*\spacing}{0}
		\drawconnectgadget{3}{3}{a1}{a2}{2}{2}{000E000,011E011}{101E101,110E110}
		\drawconnectgadget{3}{3}{b1}{b2}{2}{2}{000E000,011E011}{101E101,110E110}
		\drawconnectgadget{3}{2}{c1}{c2}{2}{0}{011E11}{110E00}
		\drawconnectgadget{3}{1}{c1}{d}{0}{0}{}{}
		
		\drawconnectgadget{3}{2}{a1}{z1}{1}{0}{}{011E11}
		\drawconnectgadget{3}{2}{a2}{z2}{1}{0}{}{011E11}
		\drawconnectgadget{2}{2}{z1}{z2}{1}{1}{}{}
		
		\end{tikzpicture}
	\caption{
		The graph $\grid{2}{4}$.
		A circle indicates a base vertex
		and hence contains one gadget.
	}
	\label{fig:lower-bound-gadgets}
\end{figure}

Intuitively,
since Duplicator can move the twist to different edges using automorphisms,
Spoiler needs to ``catch'' the twist with her pebbles.
If a vertex is placed on a vertex $\twisted{v}$
of $\replt{\grid{2}{n}}$ with origin $v$,
all pebble respecting automorphisms of $\replt{\grid{2}{n}}$ fix all vertices in $F(v)$.
Hence, it does not matter
on which vertex originating from $v$ Spoiler places a pebble
and we can simply say that Spoiler places a pebble on $v$.

A set of vertices of $\repl{\grid{2}{n}}$ (or $\replt{\grid{2}{n}}$)
is called a \emph{wall}
if its origin is a separator of~$\grid{2}{n}$,
that is, a set of vertices whose
removal separates the graph into at least two connected components.
We say that Spoiler \emph{builds a wall},
if the vertices covered by the pebbles form a wall.
To avoid a quick win for Spoiler,
Duplicator picks the bijection on her turn so that it is 
origin respecting.
That is, Duplicator maps a vertex~$\straight{u}$ to a vertex~$\twisted{u}$
with the same origin.
Our strategy for Duplicator
in the bijective walk-pebble game described below has this property, too.
Consequently, when asking whether the pebbles form a wall or not,
it does not matter whether we consider the pebbles on
$\repl{\grid{2}{n}}$ or $\replt{\grid{2}{n}}$). 
Since we will only consider origin respecting strategies,
we will often simply think of the origins being pebbled.

Suppose now some vertices of $\grid{2}{n}$ have been pebbled.
A~\emph{component} of the graph $\grid{2}{n}$, w.r.t.~the pebbled vertices,
is an inclusion-wise maximal and nonempty set of base edges~$C$ satisfying the following property:
For every two edges $e_1, e_2 \in C$,
there is a walk $(v_1, \dots , v_j)$ only using edges
$\{v_i, v_{i+1}\} \in C$ for all $i \in [j-1]$
such that ${e_1 = \{v_1, v_2\}}$, ${e_2 = \{v_{j-1}, v_j\}}$,
and $v_2, \dots, v_{j-1}$ are not covered by a pebble.
A component $C$ \emph{contains} a base vertex $v$,
if all base edges incident to $v$ are contained in $C$.
The \emph{size} of the component is
the number of vertices it contains.
We call a component \emph{nontrivial}, when its size is nonzero.

Intuitively, one can think of components as the parts of the graph $\grid{2}{n}$
obtained by deleting only the vertices covered by pebbles,
but not the edges incident to these vertices.
This results in ``dangling'' edges in nontrivial components
(edges, which were incident to only one vertex covered by a pebble)
and edges not incident to any vertex forming the trivial components
(edges, both endpoints of which where covered by a pebble).

We call a component $C$ \emph{twisted},
if there is a pebble respecting automorphism that moves the twist to an edge in $C$.
If $C$ is twisted, every pebble respecting automorphism
moves the twist to an edge in $C$, i.e.,
a twisted component contains precisely the edges,
to which Duplicator can move the twist
with pebble respecting automorphisms.

With $3$ pebbles Spoiler can build at most one wall
and hence in the bijective $3$-pebble game
there are at most two nontrivial components.
When a trivial component is twisted, Spoiler wins the game.
To delay the win of Spoiler,
Duplicator maintains a single twisted component,
whose size only decreases by a constant per round.

\subsection{Lower Bound for Walk Refinement}

The situation changes in the bijective walk pebble game, since the game does not have a bound on the number of pebbles that are used.
In the situation where more than two pebbles are placed on the graph,
there can be many components
(Spoiler may in particular cover every vertex and every edge).
But, once she has to remove all but two pebbles,
there can be at most one wall again.
We describe a strategy of Duplicator with the following properties:
\begin{enumerate}
	\item If the size of the twisted component is at most ${n-2}$,
	its size reduces by at most $2$ after one round.
	\item If the size of the twisted component is greater than ${n-2}$ (e.g.~in the beginning of the game),
	the size of the twisted component is at least $n-4$ after one round.
\end{enumerate}
Combining these properties, Spoiler needs at least $\Omega(n)$ rounds to win.
Intuitively, the existence of such a strategy
comes from the fact that in the bijective walk pebble game
Duplicator needs to preserve adjacency and/or equality only for consecutive pebble pairs.
This allows her to fix the twist locally, possibly introducing global inconsistencies 
but never introducing inconsistencies between consecutive pebble pairs.
We describe a strategy how Duplicator
can introduce these inconsistencies
only on edges incident to a chosen base vertex.

\newcommand{\bij}[3]{f_{{#2},{#3}}^{#1}}

Suppose we are in the bijective $k$-walk pebble game for an arbitrary $k$
and we are in the state of the game in which only two pebble pairs are placed on the graphs.
Assume that the pebbles are placed on
$\straight{u}_1$ and $\straight{u}_2$ in $\repl{\grid{2}{n}}$ and on
$\twisted{u}_1$ and $\twisted{u}_2$ in $\replt{\grid{2}{n}}$ respectively.
We assume, justifying our notation,
that $\straight{u}_i$ and $\twisted{u}_i$ have the same origin for $i \in \{1,2\}$
and that Spoiler has not won the game already,~i.e.,~%
the pebbles define an isomorphism between the subgraphs
induced by the pebbled vertices.

Let $v$ be a base vertex of degree $3$ in the twisted component
and $e_1$ and $e_2$ be two distinct base edges incident to $v$.
We define a bijection
$\bij{v}{e_1}{e_2} \colon \straight{V}^{k-1} \to \twisted{V}^{k-1}$
as follows:

First, pick a pebble respecting isomorphism $\phi$ of $\replt{\grid{2}{n}}$
moving the twist to $e_1$ (it exists because $v$ is in the twisted component).
Let $(\straight{w}_1, \dots , \straight{w}_{k-1}) \in \straight{V}^{k-1}$
and set ${\straight{w}_0 := \straight{u}_1}$ and
${\straight{w}_{k} := \straight{u}_2}$.
We define $\bij{v}{e_1}{e_2}(\straight{w}_1, \dots , \straight{w}_{k-1}) = (\tilde{w}_1, \dots, \tilde{w}_{k-1})$ entry-wise for $i \in [k-1]$
by the following case distinction:
\begin{enumerate}
	\item If $w_i \neq v$, we set $\tilde{w}_i := \phi(\straight{w}_i)$.
	\item If $w_i = v$, let $j < i$ and $\ell>i$ be the unique indices 
	such that
	$$w_j \neq w_{j+1} = \dots = w_i = \dots = w_{\ell-1} \neq w_{\ell}.$$
	We pick an edge $e$ incident to $v$ by a second case distinction:
	\begin{enumerate}
		\item If $e_1:=\{w_j, w_i\}$ and $e_2:=\{w_i, w_{\ell}\}$ are distinct base edges,
		let $e \notin \{e_1, e_2\}$ be the third base edge
		incident to $v$.
	
		\item
		Otherwise at least one of
		$\{w_j, w_i\}$ and $\{w_i, w_{\ell}\}$
		is not a base edge or \linebreak$\{w_j, w_i\} = \{w_i, w_{\ell}\}$.
		Hence, when passing through $v$ at position $i$,
		the walk uses at most one base edge $e'$ incident to $v$.
		Let the edge $e \in \{e_1, e_2\} \setminus \{e'\}$
		be smallest one according to the fixed order of base edges incident with~$v$.
		
	\end{enumerate}

	Finally, let $\psi$ be the automorphism of $\replt{\grid{2}{n}}$
	such that $\psi \circ \phi$ moves the twist to $e$
	and $\psi$is the identity on all vertices
	apart from those in the gadget $F(v) = F(w_i)$.
	We set ${\tilde{w}_i := \psi(\phi(\straight{w}_i))}$.

\end{enumerate}

\begin{lemma}
	The function $\bij{v}{e_1}{e_2}$ is a bijection.
\end{lemma}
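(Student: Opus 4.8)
The plan is to prove that $\bij{v}{e_1}{e_2}$ is injective; since $\straight V^{k-1}$ and $\twisted V^{k-1}$ are finite of equal cardinality, this already gives that it is a bijection. The whole point is that the map is built ``coordinate by coordinate'' out of bijections $\straight V\to\twisted V$ that are pinned down by very coarse information. Concretely, I would first recall that $\phi$ and every automorphism $\psi$ used in the definition are \emph{origin-respecting}: because $\grid{2}{n}$ is rigid for $n\ge 3$ (vertices are pairwise distinguished by their distances to the degree-one vertex and to the two adjacent degree-two vertices), every automorphism of $\replt{\grid{2}{n}}$ maps each gadget $F(w)$ onto itself, a fact already invoked above. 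Hence in either case of the entry-wise definition $\tilde w_i\in F(w_i)$, so $\bij{v}{e_1}{e_2}$ maps a tuple whose entries have origins $(w_1,\dots,w_{k-1})$ to a tuple with exactly this sequence of origins. In particular, from the image tuple one recovers the origin sequence of the preimage, and hence — adding the fixed origins $w_0,w_k$ of the pebbled vertices $\straight u_1,\straight u_2$ — the full sequence $(w_0,\dots,w_k)$, the positions $i$ with $w_i=v$, and for each such $i$ the indices $j<i<\ell$ bounding the maximal run of $v$'s through $i$ (these exist because $v$ is an interior vertex of the twisted component, distinct from $w_0$ and $w_k$).

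Next I would verify that the bijection $\straight V\to\twisted V$ applied in the $i$-th coordinate, call it $g_i$ — namely $\phi$ if $w_i\ne v$ and $\psi\circ\phi$ otherwise — is determined by the origin sequence $(w_0,\dots,w_k)$ and the index $i$ alone, apart from the once-fixed data $v,e_1,e_2,\phi$ and the chosen ordering of the edges incident to $v$. This is a routine inspection of the nested case distinction in the definition: whether $\{w_j,w_i\}$ and $\{w_i,w_\ell\}$ are base edges, whether they are distinct or coincide, and the consequent choice of the edge $e$ to which $\psi\circ\phi$ is required to move the twist, all refer only to $w_j$, $w_\ell$ and the fixed data, never to the concrete vertices $\straight w_i$.

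Combining the two, injectivity follows: if $\bij{v}{e_1}{e_2}(\straight w_1,\dots,\straight w_{k-1}) = \bij{v}{e_1}{e_2}(\straight w_1',\dots,\straight w_{k-1}')$, then the two preimages share the same origin sequence, hence $g_i=g_i'$ for every $i$, and then $g_i(\straight w_i)=g_i(\straight w_i')$ with $g_i$ injective forces $\straight w_i=\straight w_i'$. The step I expect to be most delicate is the bookkeeping of the second paragraph — pinning down that the edge $e$, and therefore $\psi$ and $g_i$, really is a function of the origin sequence alone — together with checking that the run indices $j,\ell$ are well-defined, which is precisely where it matters that $v$ is a non-pebbled degree-three vertex of the twisted component; the rigidity fact behind the first paragraph is standard for the CFI graphs over $\grid{2}{n}$ and I would merely cite it.
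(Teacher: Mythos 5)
Your proof is correct and follows essentially the same route as the paper's: both arguments rest on the two observations that $\bij{v}{e_1}{e_2}$ preserves the origin sequence and that the chosen automorphisms $\psi$ (hence the coordinate-wise bijections) depend only on that origin sequence, the paper concluding bijectivity directly on each origin class while you conclude injectivity and invoke equal finite cardinality. The extra details you supply (rigidity of the base graph forcing automorphisms to respect gadgets, and well-definedness of the run indices $j,\ell$ because the pebbled origins differ from $v$) are correct and are exactly what the paper's terse proof leaves implicit.
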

\begin{proof}
	The function $\bij{v}{e_1}{e_2}$ maps a walk in $\repl{\grid{2}{n}}$
	to a walk in $\replt{\grid{2}{n}}$ with the same origin.
	For two walks in $\repl{\grid{2}{n}}$ with the same origin,
	Duplicator chooses for each of the walks
	the same additional automorphisms~$\psi$.
	Since~$\phi$ and all chosen~$\psi$ are bijections,
	the map $\bij{v}{e_1}{e_2}$ is a bijection.
\end{proof}

\begin{lemma}
	\label{lem:bij-not-loose}
	If Duplicator chooses $\bij{v}{e_1}{e_2}$ as bijection,
	then Spoiler does not win in the current round.
\end{lemma}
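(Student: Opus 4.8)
The plan is a direct check: after Duplicator answers with $f:=\bij{v}{e_1}{e_2}$ and Spoiler puts $p_2,\dots,p_k$ onto some $\straight{w}_1,\dots,\straight{w}_{k-1}$ (so $q_i$ is forced onto $\tilde{w}_{i-1}$, where I write $\tilde{w}_0:=\twisted{u}_1$, $\tilde{w}_k:=\twisted{u}_2$ and keep the construction's convention $\straight{w}_0=\straight{u}_1$, $\straight{w}_k=\straight{u}_2$), I want to show that every consecutive pebble pair induces isomorphic subgraphs. The wrap-around pair $(p_{k+1},p_1)$ sits on $(\straight{u}_2,\straight{u}_1)$ resp.\ $(\twisted{u}_2,\twisted{u}_1)$; it is untouched this round and was consistent by hypothesis. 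So it remains to prove, for each $i\in[k]$, that $\straight{w}_{i-1}=\straight{w}_i\Leftrightarrow\tilde{w}_{i-1}=\tilde{w}_i$ and $\{\straight{w}_{i-1},\straight{w}_i\}\in\straight{E}\Leftrightarrow\{\tilde{w}_{i-1},\tilde{w}_i\}\in\twisted{E}$.

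Two structural facts do the work. First, since $\grid{2}{n}$ is asymmetric (for $n\geq 3$), $\phi$ and every auxiliary automorphism $\psi$ preserve origins and $\psi$ is the identity off the gadget $F(v)$; moreover each composite $\psi\circ\phi$ is a bijection that respects adjacency and non-adjacency of every vertex pair, except those originating from the single base edge to which it moves the twist, on which it inverts adjacency. Second, along a maximal run of positions $j+1,\dots,\ell-1$ that all lie in $F(v)$, the edge $e$ chosen by the inner case distinction depends only on the flanking vertices $\straight{w}_j,\straight{w}_\ell$, so a single $\psi$ is used for the whole run; and $e$ is always selected so as to differ from $\{\mathrm{orig}(\straight{w}_j),v\}$ and from $\{v,\mathrm{orig}(\straight{w}_\ell)\}$ whenever these are base edges (in case~(a) $e$ is literally the remaining incident edge; in case~(b) $e\neq e'$, and $e'$ is the unique base edge among those two when one of them is a base edge). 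This is precisely where $v$ being of degree~$3$ is used.

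With these facts I would distinguish three cases according to which of $\straight{w}_{i-1},\straight{w}_i$ lie in $F(v)$. If neither does, then $\tilde{w}_{i-1}=\phi(\straight{w}_{i-1})$ and $\tilde{w}_i=\phi(\straight{w}_i)$ (using that $\phi$ is pebble-respecting to handle $i=1$ and $i=k$), and the pair has an origin not incident to $v$, hence not originating from $\phi$'s twist edge, so the origin-preserving near-isomorphism $\phi$ preserves both adjacency and equality of the pair. If exactly one lies in $F(v)$, say $\straight{w}_i$ (the other subcase is symmetric), then $\straight{w}_{i-1}=\straight{w}_j$ is the left flank of $\straight{w}_i$'s run; since $\psi$ is the identity off $F(v)$ and $\phi$ preserves origins we get $\tilde{w}_{i-1}=\phi(\straight{w}_{i-1})=(\psi\phi)(\straight{w}_{i-1})$, while $\tilde{w}_i=(\psi\phi)(\straight{w}_i)$ for the run's $\psi$, so $\{\tilde{w}_{i-1},\tilde{w}_i\}$ is the $\psi\phi$-image of $\{\straight{w}_{i-1},\straight{w}_i\}$; its origin $\{\mathrm{orig}(\straight{w}_j),v\}$ is either not a base edge (both pairs non-edges) or a base edge different from $e$ (adjacency preserved by the first fact), and equality holds because the endpoints have different origins. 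If both lie in $F(v)$, they belong to one run, so the same $\psi$ governs both; then $\{\straight{w}_{i-1},\straight{w}_i\}$ and its origin-preserving image $\{\tilde{w}_{i-1},\tilde{w}_i\}$ both live inside a single gadget, which has no edges and is untouched by twisting, so both are non-edges, and equality follows from injectivity of $\psi\phi$. In every case the consecutive pair is consistent, so Spoiler does not win this round.

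The main obstacle is the bookkeeping in the middle case: matching the $\psi$ attached to position $i$ with the correct flanking base edge of its run, and checking that the two-level case distinction really keeps $e$ off that edge in both subcases (a) and (b), including the degenerate situations where a run abuts an endpoint so that $j=0$ or $\ell=k$. These are harmless because $v$, being contained in the twisted component, is unpebbled, so $\straight{u}_1,\straight{u}_2\notin F(v)$ and the flanks $\straight{w}_j,\straight{w}_\ell$ always exist.
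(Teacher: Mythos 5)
Your proof is correct and follows essentially the same route as the paper's: set $\straight{w}_0,\straight{w}_k$ (and their images under $\phi$) to the already-pebbled endpoints and verify consecutive pairs by the same three-way case split on which of the two vertices lie in $F(v)$, using that $\phi$ only inverts adjacency on pairs originating from its twist edge, that a whole run through $F(v)$ shares one $\psi$, and that the chosen $e$ avoids the flanking base edges. You are in fact slightly more explicit than the paper on the wrap-around pair $(p_{k+1},q_{k+1}),(p_1,q_1)$ and on why the inner case distinction keeps $e$ off both flanking edges, but these are refinements of the same argument rather than a different one.
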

\begin{proof}
	Assume Spoiler picks some $(\hspace{-1pt}\straight{w}_1, \dots, \straight{w}_{k-1}\hspace{-1pt})$ and
	$\bij{v}{e_1}{e_2}(\hspace{-1pt}\straight{w}_1, \dots, \straight{w}_{k-1}\hspace{-1pt}) \hspace{-1pt}=\hspace{-1pt} (\hspace{-1pt}\tilde{w}_1, \dots, \tilde{w}_{k-1}\hspace{-1pt})$.
	Then the $p_i$ pebbles are placed on the $\straight{w}_i$
	and the $q_i$ pebbles are placed on the $\tilde{w}_i$.
	We set $\straight{w}_0 := \straight{u}_1$, $\tilde{w}_0 := \phi(\straight{u}_1)$
	and likewise $\straight{w}_k := \straight{u}_2$ and $\tilde{w}_k = \phi(\straight{u}_2)$.
	Note that $\straight{w}_0$ and $\tilde{w}_0$ are covered by the first pebble pair
	and $\straight{w}_k$ and $\tilde{w}_k$ by the last one
	because $\phi$ was chosen pebble respecting.
	Suppose~$i \in \{0, \dots , k-1\}$.
	To show that Spoiler does not win in this round,
	we show that the $i$\nobreakdash-th and $(i+1)$\nobreakdash-th pebble pair
	define an isomorphism
	between the subgraphs
	of $\repl{\grid{2}{n}}$ and $\replt{\grid{2}{n}}$
	induced by the vertices covered 
	by the $i$-th and $(i+1)$-th pebble pair.
	
	\begin{itemize}
		\item 
		Suppose $\straight{w}_i$ and $\straight{w}_{i+1}$ do no originate from $v$
		and thus $\twisted{w}_i = \phi(\straight{w}_i)$ and\linebreak
		$\twisted{w}_{i+1} = \phi(\straight{w}_{i+1})$.
		Then $\{\straight{w}_i, \straight{w}_{i+1}\}$ in particular
		does not originate from $e_1$
		and hence by the choice of $\phi$ the pebbles define an isomorphism
		of the induced subgraphs.
		
		\item
		Suppose $\straight{w}_i$ and $\straight{w}_{i+1}$ both originate from $v$.
		In this case $\twisted{w}_i = \psi(\phi(\straight{w}_i))$ and
		${\twisted{w}_{i+1} = \psi(\phi(\straight{w}_{i+1}))}$
		for some automorphism $\psi$.
		Because both $\phi$ and $\psi$ are bijections and 
		inside a gadget there are no edges, the pebbles define an isomorphism.
		
		\item
		Lastly, suppose $\straight{w}_i$ originates from $v$ but
		$\straight{w}_{i+1}$ does not.
		In this case ${\twisted{w}_i = \psi(\phi(\straight{w}_i))}$ 
		and ${\twisted{w}_{i+1} = \phi(\straight{w}_{i+1})}$
		for another automorphism $\psi$ such that ${\psi \circ \phi}$
		moves the twist to an edge other than  $\{v, w_{i+1}\}$.
		Therefore ${\{\straight{w}_i, \straight{w}_{i+1}\} \in \straight{E}}$ holds if and only if ${\{\psi(\phi(\straight{w}_i)), \psi(\phi(\straight{w}_{i+1}))\} = \{\twisted{w}_i, \twisted{w}_{i+1}\} \in \phi(\twisted{E})}$.
		Recall that $\psi$ was chosen constant on all vertices apart $F(v)$
		and thus $\twisted{w}_{i+1} = \phi(\straight{w}_{i+1}) = \psi(\phi(\straight{w}_{i+1}))$.
		The case where~$\straight{w}_{i+1}$ originates from~$v$
		but~$\straight{w}_i$ does not is symmetric.
		\qedhere
	\end{itemize}	
\end{proof}

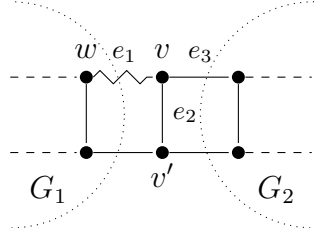
\begin{figure}
	\centering
		\begin{tikzpicture}[>=stealth,shorten >=1pt,node distance=1.5cm,on grid,every node/.style={vertex}]
		\node[fill=none] () at (0,0.35) {$\strut w$};
		\node (U1) at (0,0){};
		\node (U2) at (0,-1){};
		
		\node[fill=none] () at (1,0.35) {$\strut v$};
		\node (V1) at (1,0){};
		\node[fill=none] () at (1,-1.35) {$\strut v'$};
		\node (V2) at (1,-1){};
		
		\node (P3) at (2,0){};
		\node (P4) at (2,-1){};
		
		\node[fill=none] at (0.5,0.3) {\footnotesize$e_1$};
		\node[fill=none] at (1.3,-0.5) {\footnotesize$e_2$};
		\node[fill=none] at (1.5,0.3) {\footnotesize$e_3$};
		
		\path
		(U1) edge (U2)
		(U1) edge[twisted] (V1)
		(U2) edge (V2)
		(V1) edge (V2)
		(V1) edge (P3)
		(V2) edge (P4)
		(P3) edge (P4);
		\draw[dashed]
		(-1,0) edge (U1)
		(-1,-1) edge (U2)
		(P3) edge (3,0)
		(P4) edge (3,-1);
		
		\begin{scope}
		\clip (-1,1) rectangle (3,-2);
		\draw[dotted] (-1.0,-0.5) circle(1.5);
		\draw[dotted] (3.0,-0.5) circle(1.5);
		\end{scope}
		
		\node[fill=none] at (-0.5,-1.5) {$\strut G_1$};
		\node[fill=none] at (2.5,-1.5) {$\strut G_2$};
		
		\end{tikzpicture}
	\caption{
		The situation in Lemma~\ref{lem:remove-pebbles-component-size}:
		dashed edges may exist but not have to
		and the two subgraphs $G_1$ and $G_2$ are indicated by dotted lines.
		The isomorphism $\phi$ moves the twist to $e_1$.
	}
	\label{fig:bij-separator}
\end{figure}

\begin{lemma}
	\label{lem:remove-pebbles-component-size}
	Let $e_1 = \{v,w\}$ and $e_2 = \{v,v'\}$ be base edges,
	$v' \neq w$, and $\{v,v'\}$ be a separator of $\grid{2}{n{}}$
	separating $\grid{2}{n}$ into two subgraphs $G_1$ and $G_2$.
	Assume that $w$ is contained in $G_1$ and that $G_2$ has $\ell$ vertices.
	
	If Duplicator chooses $\bij{v}{e_1}{e_2}$ as bijection, Spoiler pebbles two corresponding walks and then removes all pebble pairs apart from two consecutive ones,
	then the new twisted component has size at least $\min\{\ell, 2n-\ell-2\}$.
\end{lemma}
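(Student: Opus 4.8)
The plan is to trace the walk Spoiler pebbles, pin down where the twist ends up once she contracts to two consecutive pebble pairs, and then read off the size of the surrounding component from the simple structure of $\grid{2}{n}$. Recall that Duplicator plays $\bij{v}{e_1}{e_2}$, which is built from a pebble-respecting automorphism $\phi$ of $\replt{\grid{2}{n}}$ moving the twist to $e_1$, together with the local corrections $\psi$ inside the gadget $F(v)$. By Lemma~\ref{lem:bij-not-loose}, whatever walk $\straight{w}_0, \dots, \straight{w}_k$ (with $\straight{w}_0 = \straight{u}_1$ and $\straight{w}_k = \straight{u}_2$) Spoiler pebbles and whatever its $\bij{v}{e_1}{e_2}$-image $\twisted{w}_0, \dots, \twisted{w}_k$, Spoiler does not win the round; hence after she keeps only the $i$-th and $(i+1)$-th pebble pairs, sitting on $\straight{w}_i, \straight{w}_{i+1}$ and $\twisted{w}_i, \twisted{w}_{i+1}$, these two pairs still define an isomorphism of the induced subgraphs. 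Since $\bij{v}{e_1}{e_2}$ is origin-respecting, $\twisted{w}_i$ and $\twisted{w}_{i+1}$ have the same origins $w_i, w_{i+1}$ as $\straight{w}_i, \straight{w}_{i+1}$, so the new twisted component is a component of $\grid{2}{n} \setminus \{w_i, w_{i+1}\}$.

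The crux is to identify which component it is, and for that I would extract three properties from the definition of $\bij{v}{e_1}{e_2}$ and from Lemma~\ref{lem:bij-not-loose}. On the two kept pebbles the partial map $\straight{w}_i \mapsto \twisted{w}_i$, $\straight{w}_{i+1} \mapsto \twisted{w}_{i+1}$ is realized by $\phi$ if $v \notin \{w_i, w_{i+1}\}$, and by $\psi \circ \phi$ for the correction $\psi$ of the maximal $v$-block of the walk meeting position $i$ or $i+1$ otherwise. Hence the new twist location $e^{\ast}$ is $e_1 = \{v,w\}$ in the first case and the base edge $e$ incident to $v$ selected in the inner case distinction of the construction in the second --- an edge of $v$ that the walk does not traverse at that $v$-block, chosen among $\{v,w\}$ and $\{v,v'\}$ when this is ambiguous. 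Three things follow: (i) $e^{\ast}$ is incident to $v$; (ii) $e^{\ast}$ has an endpoint $y^{\ast}$ not lying in $\{w_i, w_{i+1}\}$, namely $v$ itself if $v \notin \{w_i, w_{i+1}\}$, and otherwise the neighbor of $v$ at the other end of $e$ (since $e$ avoids any base edge from $v$ to the other kept pebble and $v$ has degree $3$); and (iii) if $\{w_i, w_{i+1}\} = \{v, v'\}$ then $e^{\ast} \neq \{v, v'\}$, so $y^{\ast}$ is either $w \in G_1$ or the third neighbor $v''$ of $v$, which lies in $G_2$ because $w \in G_1$.

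It then remains to show that the component of $\grid{2}{n} \setminus \{w_i, w_{i+1}\}$ containing $y^{\ast}$ has at least $\min\{\ell, 2n - \ell - 2\}$ vertices, where $y^{\ast}$ lies in $v$'s column of the grid or one of the two columns adjacent to it. This is a routine inspection of the two-vertex vertex cuts of $\grid{2}{n}$ --- essentially the grid columns, with a few ``diagonal'' cuts and a degenerate one isolating the pendant vertex. If $\{w_i, w_{i+1}\}$ is not a cut, then $\grid{2}{n} \setminus \{w_i, w_{i+1}\}$ is connected with $2n - 1 \ge \min\{\ell, 2n - \ell - 2\}$ vertices. If it is a cut, it separates $\grid{2}{n}$ into two sides whose sizes sum to $2n - 1$ (the removed pair lies in neither); using that the cut runs within the columns near that of $v$, that $y^{\ast}$ lies on the side of the cut containing $v$'s column (by property (ii), respectively property (iii) in the boundary case $\{w_i,w_{i+1}\} = \{v,v'\}$), and that $w \in G_1$ while $v'' \in G_2$, one checks that the side of $y^{\ast}$ contains all of $G_2$ (size $\ell$) or all of $G_1$ except at most one vertex (size $\ge |G_1| - 1 = 2n - \ell - 2$); this last possibility, arising from a cut that clips a single vertex off $G_1$, is exactly why the bound carries the ``$-2$''. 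Finally, the degenerate pendant cut forces $\min\{\ell, 2n - \ell - 2\} \le 1$, so the bound holds trivially there.

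I expect the main obstacle to be precisely this geometric case analysis, together with verifying properties (ii) and (iii): one has to run through the sub-cases (a) and (b) of the definition of $\bij{v}{e_1}{e_2}$ to see which of $v$'s three incident edges carries the twist after Spoiler's contraction, and then count carefully --- the pendant vertex matters for the exact constant --- over the possible two-vertex cuts of $\grid{2}{n}$ to confirm that in every configuration the twist lands in a part of size at least $\min\{\ell, 2n - \ell - 2\}$. The conceptual content, properties (i)--(iii), follows directly from the design of the bijection, which was built so that it avoids the walk's edges at $v$ and otherwise prefers $e_1$ and $e_2$.
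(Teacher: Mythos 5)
Your proposal is correct and follows essentially the same route as the paper: you extract from the construction of $\bij{v}{e_1}{e_2}$ that after contraction the twist can be moved (by $\phi$ or $\psi\circ\phi$, which is pebble-respecting for the two kept pairs) to an edge incident to $v$ that avoids the kept pebble pair, and then bound the size of the surrounding component of $\grid{2}{n}$ minus the two pebbled origins. The paper organizes the final count by cases on which of $v$, $v'$, or a non-adjacent vertex is pebbled rather than by the location of your endpoint $y^{\ast}$, but the case analysis and the resulting bounds ($\ell+1$, $\ell'$, or $\ell'-1$ with $\ell'=2n-\ell-1$) are the same.
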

\begin{proof}
	The situation of the lemma is shown in Figure~\ref{fig:bij-separator}.
	If $G_2$ contains $\ell$ vertices, then $G_1$ contains $\ell' := 2n-\ell-1$ vertices.
	We first note that if there is no wall,
	the twisted component has size $2n-1$.
	In the case of a wall, we make the following case distinction:
	\begin{itemize}
		\item The vertex~$v$ is not covered by a pebble.
		Then $\phi$ is still a pebble respecting automorphism moving the twist to~$e_1$ and thus~$v$ is in the twisted component.
		It has size at least $\min\{\ell, \ell'\} + 1$.
		
		\item The vertex $v$ and an adjacent vertex $u$ are covered by pebbles.
		If $\{v,u\} \in \{e_1, e_3\}$ then there is no wall.
		So $u = v'$ and thus the twist can be moved to $e_1$ or $e_3$
		by the choice of $\psi$ in the construction of $\bij{v}{e_1}{e_2}$.
		Then the twisted component has size $\min\{\ell, \ell'\}$.
		
		\item The vertex $v$ and a non-adjacent vertex $u$ are covered by pebbles.
		In the construction of $\bij{v}{e_1}{e_2}$  the automorphism~$\psi$
		is chosen such that the twist is moved to~$e_1$ or~$e_2$.
		If~$u$ is in~$G_1$, the twisted component has size $\ell +1$ 
		if the twist was moved to~$e_2$
		and size $\ell' - 1$ is the twist was moved to~$e_1$.
		Otherwise~$u$ is in~$G_2$. Since the twist can be moved to~$e_1$ or~$e_2$,
		the twisted component has size $\ell + 1$.
		\qedhere
	\end{itemize}

\end{proof}

\tikzstyle{vertex} = [circle, fill=black, inner sep=0.6mm,  minimum size = 1mm]
\tikzstyle{twisted} = [decorate, decoration=zigzag]

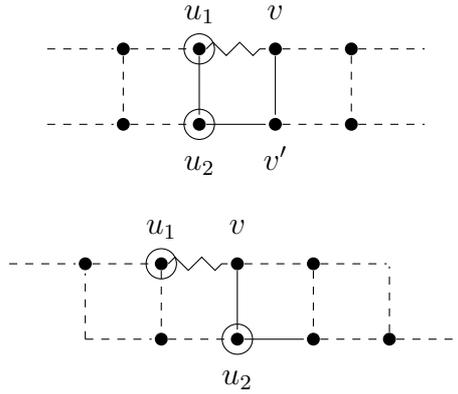
\begin{figure}
	\centering
		\begin{tikzpicture}[>=stealth,shorten >=1pt,node distance=1.5cm,on grid,every node/.style={vertex}]
			\node[fill=none] () at (0,0.5) {$\strut u_1$};
			\node (U1) at (0,0){};
			\node[fill=none] () at (0,-1.5) {$\strut u_2$};
			\node (U2) at (0,-1){};

			\node[fill=none] () at (1,0.5) {$\strut v$};
			\node (V1) at (1,0){};
			\node[fill=none] () at (1,-1.5) {$\strut v'$};
			\node (V2) at (1,-1){};
			
			\node (P1) at (-1,0){};
			\node (P2) at (-1,-1){};
			\node (P3) at (2,0){};
			\node (P4) at (2,-1){};
						
			\path
			  (U1) edge (U2)
			  (U1) edge[twisted] (V1)
			  (U2) edge (V2)
			  (V1) edge (V2);
			\path[dashed]
			  (V1) edge (P3)
			  (V2) edge (P4)
			  (P3) edge (P4)
			  (P1) edge (U1)
			  (P1) edge (P2)
			  (P2) edge (U2);
			\draw[dashed]
			   (-2,0) edge (P1)
			   (-2,-1) edge (P2)
			   (P3) edge (3,0)
			   (P4) edge (3,-1);
			\draw
			   (U1) circle (0.2cm)
			   (U2) circle (0.2cm);
		\end{tikzpicture}\\
		\begin{tikzpicture}[>=stealth,shorten >=1pt,node distance=1.5cm,on grid,every node/.style={vertex}]
		\node[fill=none] () at (0, 0.5) {$\strut u_1$};
		\node (U1) at (0,0){};
		\node[fill=none] () at (1,-1.5) {$\strut u_2$};
		\node (U2) at (1,-1){};
		
		\node[fill=none] () at (1, 0.5) {$\strut v$};
		\node (V1) at (1,0){};
		\node (V2) at (2,-1){};
		
		\node (P1) at (-1,0){};
		\node (P2) at (0,-1){};
		\node (P3) at (2,0){};
		\node (P4) at (3,-1){};
		
		\path
		(U1) edge[twisted] (V1)
		(V1) edge (U2)
		(U2) edge (V2);
		\path[dashed]
		(P1) edge (U1)
		(U1) edge (P2)
		(P2) edge (U2);
		\draw[dashed]
		(V1) edge (P3)
		(P3) edge (V2)
		(V2) edge (P4)
		(-2,0) edge (P1)
		(-1,-1) edge (P2)
		(-1,-1) edge (P1)
		(P3) edge (3,0)
		(P4) edge (4,-1)
		(P4) edge (3,0);
		\draw
		(U1) circle (0.2cm)
		(U2) circle (0.2cm);
		
		\end{tikzpicture}
	\caption{
		The two possible situations in Theorem~\ref{thm:lower-bound-game}:
		the base vertices $u_1$ and $u_2$ are covered by pebbles
		(indicated by circles) and form a wall.
		Dashed edges can exists, but do not have to.
		The automorphism $\phi$ moves the twist to $\{u_1, v\}$
		and the twisted component is to the right of the wall.
	}
	\label{fig:wall-lower-bound-strategy}
\end{figure}

\begin{theorem}
	\label{thm:lower-bound-game}
	For every $k\geq 2$ and $n\geq 3$ Duplicator has a strategy
	in the bijective $k$-walk pebble game
	played on the graphs 
	$\repl{\grid{2}{n}}$ and $\replt{\grid{2}{n}}$
	such that Spoiler wins in $\Omega(n)$ rounds at the earliest.
\end{theorem}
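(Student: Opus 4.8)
The plan is to give Duplicator a strategy and to show that it postpones a Spoiler win for $\Omega(n)$ rounds. Throughout, Duplicator will play only bijections of the form $\bij{v}{e_1}{e_2}$ constructed above; since these are origin respecting, the components, the walls, and the notion of a twisted component coincide on $\repl{\grid{2}{n}}$ and $\replt{\grid{2}{n}}$, and we may freely speak of pebbles sitting on base vertices of $\grid{2}{n}$. First I would record the shape of a generic round: by the rules of the game, right after Spoiler's pebble replacement at the start of a round at most two pebble pairs remain on the graphs --- the first and the last, which are consecutive --- so at most two base vertices are then covered, they form at most one wall, and hence there are at most two nontrivial components, exactly one of which is the twisted component~$T$. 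Duplicator will maintain the invariant that, at the start of every round, (i)~the two pebbled vertex pairs induce isomorphic subgraphs, so Spoiler has not yet won, and (ii)~$T$ is nontrivial and, more precisely, $|T|$ stays above a bound that drops by only a constant per round.

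In a round with current twisted component $T$, Duplicator would pick a degree-$3$ base vertex $v\in T$ together with an incident rung $e_2=\{v,v'\}$ of the ladder (a separator of $\grid{2}{n}$) and a second incident edge $e_1=\{v,w\}$ with $w\neq v'$, the choice being made so as to maximize $\min\{\ell,2n-\ell-2\}$, where $\ell$ is the number of grid vertices on the side of $e_2$ not containing~$w$; she then plays $\bij{v}{e_1}{e_2}$. By Lemma~\ref{lem:bij-not-loose}, Spoiler does not win in this round, which re-establishes invariant~(i). By Lemma~\ref{lem:remove-pebbles-component-size}, once Spoiler has placed her pebbles and, at the start of the following round, removed all but two consecutive pebble pairs, the new twisted component has size at least $\min\{\ell,2n-\ell-2\}$, whatever choices Spoiler makes.

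The main obstacle is to turn this into the two bounds that drive the counting: (1)~if $|T|\leq n-2$ at the start of a round, then the twisted component has size at least $|T|-2$ at the start of the next round; and (2)~if $|T|>n-2$, then it has size at least $n-4$ at the start of the next round. Both are read off from Lemma~\ref{lem:remove-pebbles-component-size} after a case analysis on where the single wall (which is, up to the pendant, a rung of the ladder) can sit relative to~$T$. When $|T|>n-2$, the component $T$ spans enough columns of the ladder to contain a near-central rung; taking $e_2$ to be such a rung realizes a nearly balanced split, $\ell\approx n-1$, which yields~(2). When $|T|\leq n-2$, the component $T$ lies entirely on one side of the wall; taking $e_2$ to be the rung immediately adjacent to the wall on the $T$-side, and choosing $w$ on the wall side of $e_2$, makes $\ell$ --- the size of the side of $e_2$ away from $w$ --- equal to $|T|-2$, while the side containing $w$ (and the wall) has more than $n$ vertices, so the minimum equals $|T|-2$; this yields~(1). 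The pendant vertex and parity are responsible for the small additive losses. I would also treat the first round separately, since there only $k-1$ pebbles are placed and no endpoint pebbles exist yet: the same construction, with a central rung and with the endpoint constraints in the definition of $\bij{v}{e_1}{e_2}$ simply dropped, applies and only simplifies the estimates, leaving a twisted component of size at least $n-4$.

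Combining the pieces, starting from essentially the whole grid bound~(2) leaves a twisted component of size at least $n-4$ after the first round, and bound~(1) then costs at most $2$ per further round, so after $t$ rounds the twisted component has size at least $n-4-2(t-1)$. While this is at least, say, $5$ --- that is, for $t$ up to $\Omega(n)$ --- the twisted component is nontrivial, still contains a degree-$3$ vertex with an incident rung, and Duplicator can keep following her strategy, so by invariant~(i) Spoiler has not won. Hence Spoiler cannot win before round $\Omega(n)$, which is the assertion of the theorem.
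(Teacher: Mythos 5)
Your proposal is correct and follows essentially the same route as the paper: the same invariant (the twisted component loses at most $2$ vertices per round and starts at roughly $n-4$), the same two cases depending on whether its size is at most $n-2$, and the same reliance on Lemmas~\ref{lem:bij-not-loose} and~\ref{lem:remove-pebbles-component-size}. The only gloss is that you describe $e_2$ as always being a rung adjacent to the wall, whereas the paper also handles the configuration where the two pebbles form a non-adjacent (diagonal) wall and $e_2=\{u_2,v\}$ is then an edge back to a pebbled vertex; this is a presentational detail and does not affect the argument.
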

\begin{proof}
	We show that Duplicator has a strategy
	such that after $m$ rounds the twisted component of $\grid{2}{n}$
	is of size at least $n - 2(m+1)$
	and Spoiler can win only if its size is at most~$2$. 
	
	Assume that after $m$ rounds the twisted component has size
	at least $n - 2(m+1) > 2$,
	that there are only two pebbles on the graphs,
	and its Duplicator's turn to pick a bijection.
	There are two cases:
	\begin{itemize}
		\item The twisted component
		has size at most $n-2$.
		Then in particular Spoiler builds a wall.
		Let the pebbles be placed on base vertices~$u_1$ and~$u_2$.
			Duplicator picks~$v$ as depicted in Figure~\ref{fig:wall-lower-bound-strategy}:
			If~$u_1$ and~$u_2$ are adjacent,~%
			$v$ is the neighbor of~$u_1$ in the twisted component
			and~$v'$ is the common neighbor of~$v$ and~$u_2$.
			Duplicator chooses
			$\bij{v}{\{u_1, v\}}{\{v,v'\}}$ as bijection.
			If otherwise~$u_1$ and~$u_2$ are not adjacent,~%
			$v$ is the neighbor of both~$u_1$ and~$u_2$ in the twisted component.
			We possibly exchange names of~$u_1$ and~$u_2$
			so that $\{u_2, v\}$ is a separator of the graph.
			Then Duplicator chooses
			$\bij{v}{\{u_1, v\}}{\{u_2,v\}}$ as bijection.
			By Lemma~\ref{lem:bij-not-loose}, Spoiler does not win in
			the current round and
			by Lemma~\ref{lem:remove-pebbles-component-size}
			the size of the new twisted component is at least
			$n - 2(m+2)$.
		\item The twisted component has size greater than $n-2$.
		Let $\{u_1, u_2\}$ be a base edge and a separator of $\grid{2}{n}$
		separating $\grid{2}{n}$ into two subgraphs containing at least $n-2$ vertices
		and let~$u_1$ have a neighbor of degree~$3$ in the twisted component.
		Such a separator exists because
		the size of the twisted component is greater than $n-2$.
		Duplicator proceeds with this choice of~$u_1$,~$u_2$, and~$v$ as in the case before\footnote{Formally, our construction of $\bij{v}{e_1}{e_2}$ requires two pebble pairs to be placed.
		If they are not placed already,
		we just pretend that they are placed on~$u_1$ and~$u_2$.}.
		After the current round, the twisted component has size at least $n-4$.
		\qedhere
	\end{itemize}
\end{proof}

With Lemmas~\ref{lem:refinement-implies-logic-step}
and~\ref{lem:walklogic-distinguish-edges-implies-pebble-game}
we obtain the following corollaries:

\begin{corollary}
	\label{cor:lower-bound-walk-logic}
	A walk counting logic formula distinguishing
	$\repl{\grid{2}{n}}$ and $\replt{\grid{2}{n}}$
	has quantifier depth $\Omega(n)$.
\end{corollary}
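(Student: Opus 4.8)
\emph{Proof proposal.}
The plan is to read the bound off directly from the game-side lower bound of Theorem~\ref{thm:lower-bound-game} by contraposing the quantitative implication ``a distinguishing $\walklogic{k}$ sentence of quantifier depth~$m$ yields a winning Spoiler strategy in~$m$ rounds of the bijective $k$-walk pebble game'' that was established in the previous section (the sentence-level companion of Lemma~\ref{lem:walklogic-distinguish-edges-implies-pebble-game}). First I would fix $n\geq 3$ (for smaller~$n$ the statement is vacuous) and write $G_1:=\repl{\grid{2}{n}}$ and $G_2:=\replt{\grid{2}{n}}$. Suppose $\phi$ is a walk counting logic sentence distinguishing $G_1$ and~$G_2$, and let $m$ be its quantifier depth. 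Since walk counting logic is, by definition, the union of the logics $\walklogic{k}$, there is some $k\geq 2$ with $\phi\in\walklogic{k}$; the value of~$k$ may be large, but this will not matter.

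Next I would invoke the lemma asserting that a $\walklogic{k}$ sentence of quantifier depth~$m$ distinguishing two graphs gives Spoiler a winning strategy in~$m$ rounds of the bijective $k$-walk pebble game. Applied to $\phi$, $G_1$, and~$G_2$, it shows that Spoiler can force a win in the bijective $k$-walk pebble game on $G_1,G_2$ after at most~$m$ rounds. On the other hand, Theorem~\ref{thm:lower-bound-game} supplies, for this very~$k$ (the statement there holds for \emph{every} $k\geq 2$), a Duplicator strategy on $\repl{\grid{2}{n}}$ and $\replt{\grid{2}{n}}$ under which Spoiler cannot win before $\Omega(n)$ rounds (concretely, before roughly $n/2-O(1)$ rounds, following the bound $n-2(m+1)$ on the size of the twisted component in the proof of that theorem). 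Combining the two bounds gives $m\geq\Omega(n)$, which is exactly the claim.

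I do not expect a genuine obstacle: the argument is a one-line contraposition. The two points that need a word of care are (i)~that Theorem~\ref{thm:lower-bound-game} is \emph{uniform} in~$k$, so the fact that walk counting logic has an unbounded number of variables does not weaken the bound, and (ii)~that ``distinguishing a graph'' rather than a vertex pair costs at most one or two additional quantifiers (cf.\ Lemma~\ref{lem:refinement-implies-logic-step}), but this additive $O(1)$ is absorbed by the $\Omega(n)$ estimate. Note that the sibling statement Corollary~\ref{cor:lower-bound-walk-refinement} (a lower bound on the iteration number of $k$-walk refinement) then follows from this corollary by contraposing Lemma~\ref{lem:refinement-implies-logic-step}, so the direct route through the game given here is the cleanest for the logic bound.
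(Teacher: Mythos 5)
Your proposal is correct and follows exactly the paper's intended route: the corollary is obtained by combining the (unnamed) sentence-level lemma that a distinguishing $\walklogic{k}$ sentence of quantifier depth~$m$ gives Spoiler a win in~$m$ rounds with the Duplicator strategy of Theorem~\ref{thm:lower-bound-game}, which is uniform in~$k$. Your side remarks (uniformity in $k$, and that the extra quantifier of Lemma~\ref{lem:refinement-implies-logic-step} is only needed for the refinement corollary) are accurate.
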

\begin{corollary}
	\label{cor:lower-bound-walk-refinement}
	Walk refinement distinguishes
	$\repl{\grid{2}{n}}$ and $\replt{\grid{2}{n}}$
	in $\Omega(n)$
	iterations.
	In particular,
	walk refinement stabilizes on
	$\repl{\grid{2}{n}}$ as well as $\replt{\grid{2}{n}}$
	in $\Omega(n)$ iterations.
\end{corollary}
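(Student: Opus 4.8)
The plan is to deduce the bound from the game lower bound of Theorem~\ref{thm:lower-bound-game} by traversing the refinement--logic--game correspondence in the direction ``refinement $\Rightarrow$ logic $\Rightarrow$ game''. Write $G_1=\repl{\grid{2}{n}}$ and $G_2=\replt{\grid{2}{n}}$ and let $N=|V(G_1)|=|V(G_2)|=\Theta(n)$ be their common number of vertices. By Lemma~\ref{lem:walk-refinment-walk-length-bound} walk refinement coincides with $N^2$-walk refinement on every $N$-vertex colored graph respecting converse equivalence (which the initial coloring does and walk refinement preserves); iterating this identity, and using that walk refinement respects $\equiv$, gives $\walkrefinement{\chi}^m\equiv\walkrefinementb{\chi}{N^2}^m$ for all $m$. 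So it suffices to show that $N^2$-walk refinement needs $\Omega(n)$ iterations to distinguish $G_1$ and $G_2$.

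First I would suppose that $m$ iterations of $N^2$-walk refinement distinguish $G_1$ from $G_2$. Since $N^2>2$, Lemma~\ref{lem:refinement-implies-logic-step} yields a $\walklogic{N^2}$ sentence of quantifier depth $m+1$ distinguishing $G_1$ and $G_2$, and by the game side of the correspondence (Lemma~\ref{lem:walklogic-distinguish-edges-implies-pebble-game} in its form for sentences) Spoiler then has a winning strategy in the bijective $N^2$-walk pebble game on $G_1,G_2$ finishing within $m+1$ rounds. But Theorem~\ref{thm:lower-bound-game}, applied with $k=N^2\ge 2$ and $n\ge 3$, provides a Duplicator strategy surviving $\Omega(n)$ rounds, so $m+1=\Omega(n)$ and hence $m=\Omega(n)$. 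Moreover walk refinement does eventually distinguish $G_1$ and $G_2$, since the Weisfeiler--Leman refinement does (recalled in Section~\ref{sec:lower-bound}) and the two refinements have the same stable coloring (Lemma~\ref{lem:wl-and-walk-same-stable} together with $\walkrefinement{\chi}^m\equiv\walkrefinementb{\chi}{N^2}^m$). Thus the least number $m^{\ast}$ of iterations after which walk refinement distinguishes $G_1$ and $G_2$ is finite, and $m^{\ast}=\Omega(n)$ by the above; this is the first statement.

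For the ``in particular'' part I would show that neither graph stabilizes before iteration $m^{\ast}$. Let $s_1,s_2$ be the iteration numbers of walk refinement on $G_1,G_2$. Once a coloring is stable it stays stable (applying walk refinement to $\equiv$-equivalent colorings yields $\equiv$-equivalent colorings), so the coloring on $G_i$ after iteration $\max\{s_1,s_2\}$ already equals its stable coloring, and the same colorings occur at every later iteration. I would then invoke the standard fact for CFI pairs $\repl{G},\replt{G}$ that walk refinement proceeds in lockstep on $G_1$ and $G_2$ -- the partition after iteration $m$ is a proper refinement of that after iteration $m-1$ for the same set of indices $m$ on both graphs, which can be extracted from Duplicator's strategy -- so that $s_1=s_2$. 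Finally, if $s_1=s_2<m^{\ast}$, then the colorings of $G_1$ and $G_2$ after iteration $m^{\ast}$ coincide with those after iteration $s_1=s_2$; but the former have different color multisets (distinguishing at $m^{\ast}$) while the latter have equal multisets (no distinguishing before $m^{\ast}$), a contradiction. Hence $s_1=s_2\ge m^{\ast}=\Omega(n)$, which is the second statement.

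I expect the one genuinely nontrivial point to be this lockstep property, i.e.\ ruling out that walk refinement stabilizes on one of the two CFI graphs strictly earlier than on the other; without it the argument above still gives that at least one of $G_1,G_2$ has walk-refinement iteration number $\Omega(n)$. Everything else is a routine concatenation of already-established lemmas with Theorem~\ref{thm:lower-bound-game}; the only arithmetic is the additive constant lost when passing from refinement iterations to quantifier depth to pebble-game rounds (at most $+2$, even for $k=2$), which is absorbed by $\Omega(n)$.
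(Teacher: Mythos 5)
Your derivation of the distinguishing lower bound is exactly the paper's intended argument: the corollary is obtained by chaining Lemma~\ref{lem:refinement-implies-logic-step} (refinement $\Rightarrow$ logic, at the cost of one extra quantifier) with the sentence version of Lemma~\ref{lem:walklogic-distinguish-edges-implies-pebble-game} (logic $\Rightarrow$ game) and then contradicting Theorem~\ref{thm:lower-bound-game}. Your preliminary normalization of walk refinement to $N^2$-walk refinement via Lemma~\ref{lem:walk-refinment-walk-length-bound} is a sensible way to make the cross-graph color comparison well defined, and the additive constants are absorbed by $\Omega(n)$ as you say. This part is correct and matches the paper.

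The ``in particular'' part, however, has a genuine gap. First, the ``lockstep'' property $s_1=s_2$ is asserted but not proved (``can be extracted from Duplicator's strategy'' is not an argument), and it is in fact not needed. Second, and more importantly, your final contradiction conflates $\equiv$ with equality of colorings: stability at iteration $s$ means the \emph{partition} stops refining, but the colors themselves (nested multisets) keep changing, so ``the colorings after iteration $m^\ast$ coincide with those after iteration $s_1=s_2$'' is false, and equal partitions on each graph separately do not by themselves force the cross-graph color multisets at iteration $m^\ast$ to equal those at iteration $s_1$. A correct repair bounds each $s_i$ individually and needs no lockstep: suppose $G_1$ is stable at iteration $s_1 < m^\ast-1$, so the graphs are not yet distinguished at iteration $s_1$. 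Since a walk-refinement color determines the color of the previous iteration (via the stationary walks, as the paper notes when arguing that it is a refinement), if $G_2$ were \emph{not} stable at iteration $s_1$ then some color class of $G_2$ splits while the equinumerous class of $G_1$ does not, and the graphs would already be distinguished at iteration $s_1+1 < m^\ast$, a contradiction. So both graphs are stable and undistinguished at iteration $s_1$; but then the injective relabeling functions from old to new colors must agree on the occurring colors in both graphs (otherwise they are distinguished at $s_1+1$), and inductively the graphs are never distinguished --- contradicting that they are distinguished at $m^\ast$. Hence $s_i \ge m^\ast - 1 = \Omega(n)$ for each $i$ separately.
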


Recall that $\repl{\grid{2}{n}}$ and $\replt{\grid{2}{n}}$
have $\Theta(n)$ vertices, so both corollaries give bounds that are linear in the number of vertices.

We want to remark that $4$ pebble pairs already suffice for Spoiler
to reduce the size of the twisted component by $2$ in each round:
Spoiler places the pebbles always on four vertices originating from a 4-cycle 
starting in the middle of the graph.
Then she moves two of them so that together the 4 pebbles cover a 4-cycle that shares an edge with the 4\nobreakdash -cycle of the previous round. 
Overall, this strategy requires $n/2 + 1$ rounds if $n$ is even and
$(n+1)/2 +1$ rounds otherwise. 
In particular, $4$-walk refinement has the same iteration number as walk-refinement
on these graphs.
Nevertheless, after only one iteration,
$n$-walk refinement distinguishes vertices of different gadgets,
but $4$-walk refinement does not.

With only $3$ pebble pairs the size reduces by at most one per iteration
(as already shown in~\cite{Furer2001}).

\section{Conclusion}

We showed that the $2$-dimensional Weisfeiler-Leman refinement
stabilizes an $n$-vertex graph after $\bigO(n \log n)$ iterations.
Hence in the counting logic $\countinglogic{3}$
we only require a quantifier depth of $\bigO(n \log n)$.
This matches the best known lower bound of the form~$\Omega(n)$ up to a logarithmic factor.
Thus the question remains what the precise bound is,
and whether the iteration number can be superlinear.
At least for the walk refinement we have now matching linear lower and upper bounds.

It remains also an open problem whether our techniques
can be applied to counting first order logic with more than three variables 
(equivalently higher dimensional Weisfeiler-Leman refinement) or to three variable first order logic without counting.

For all of these mentioned avenues of investigation it could be interesting to find a
combinatorial argument for the $\bigO(n)$ bound for walk refinement.
Finally, we also introduced walk counting logic
and the bijective walk pebble game and studying these remains as future work.

\bibliographystyle{plain}
\bibliography{wl2_algebra}

\begin{thebibliography}{10}

\bibitem{WL2018}
Symmetry vs regularity the first 50 years since {Weisfeiler-Leman}
  stabilization.
\newblock \url{https://www.iti.zcu.cz/wl2018/index.html}.
\newblock Accessed: 2018-11-26.

\bibitem{Adrian1937}
A.~Adrian Albert.
\newblock {\em Modern higher algebra}.
\newblock Univ. of Chicago Press, Chicago, Ill., 1937.

\bibitem{DBLP:conf/stoc/Babai16}
L{\'{a}}szl{\'{o}} Babai.
\newblock Graph isomorphism in quasipolynomial time [extended abstract].
\newblock In {\em Proceedings of the 48th Annual {ACM} {SIGACT} Symposium on
  Theory of Computing, {STOC} 2016, Cambridge, MA, USA, June 18-21, 2016},
  pages 684--697. {ACM}, 2016.

\bibitem{DBLP:conf/cp/Berkholz14}
Christoph Berkholz.
\newblock The propagation depth of local consistency.
\newblock In {\em Principles and Practice of Constraint Programming - 20th
  International Conference, {CP} 2014, Lyon, France, September 8-12, 2014.
  Proceedings}, pages 158--173, 2014.

\bibitem{DBLP:conf/soda/BerkholzG17}
Christoph Berkholz and Martin Grohe.
\newblock Linear diophantine equations, group {CSPs}, and graph isomorphism.
\newblock In {\em Proceedings of the Twenty-Eighth Annual {ACM-SIAM} Symposium
  on Discrete Algorithms, {SODA} 2017, Barcelona, Spain, Hotel Porta Fira,
  January 16-19}, pages 327--339. {SIAM}, 2017.

\bibitem{DBLP:conf/lics/BerkholzN16}
Christoph Berkholz and Jakob Nordstr{\"{o}}m.
\newblock Near-optimal lower bounds on quantifier depth and
  {W}eisfeiler-{L}eman refinement steps.
\newblock In {\em Proceedings of the 31st Annual {ACM/IEEE} Symposium on Logic
  in Computer Science, {LICS} '16, New York, NY, USA, July 5-8, 2016}, pages
  267--276, 2016.

\bibitem{CaiFI1992}
Jin{-}yi Cai, Martin F{\"{u}}rer, and Neil Immerman.
\newblock An optimal lower bound on the number of variables for graph
  identification.
\newblock {\em Combinatorica}, 12(4):389--410, 1992.

\bibitem{Furer2001}
Martin F{\"{u}}rer.
\newblock {W}eisfeiler-{L}ehman refinement requires at least a linear number of
  iterations.
\newblock In {\em Automata, Languages and Programming, 28th International
  Colloquium, {ICALP} 2001, Crete, Greece, July 8-12, 2001, Proceedings},
  volume 2076 of {\em Lecture Notes in Computer Science}, pages 322--333.
  Springer, 2001.

\bibitem{MR3729479}
Martin Grohe.
\newblock {\em Descriptive complexity, canonisation, and definable graph
  structure theory}, volume~47 of {\em Lecture Notes in Logic}.
\newblock Association for Symbolic Logic, Ithaca, NY; Cambridge University
  Press, Cambridge, 2017.

\bibitem{Hella96}
Lauri Hella.
\newblock Logical hierarchies in {PTIME}.
\newblock {\em Information and Computation}, 129(1):1--19, 1996.

\bibitem{MR898557}
D.~G. Higman.
\newblock Coherent algebras.
\newblock {\em Linear Algebra and its Applications}, 93:209--239, 1987.

\bibitem{DBLP:conf/lics/KieferS16a}
Sandra Kiefer and Pascal Schweitzer.
\newblock Upper bounds on the quantifier depth for graph differentiation in
  first order logic.
\newblock In {\em Proceedings of the 31st Annual {ACM/IEEE} Symposium on Logic
  in Computer Science, {LICS} '16, New York, NY, USA, July 5-8, 2016}, pages
  287--296. {ACM}, 2016.

\bibitem{MR976178}
Ilia~N. Ponomarenko.
\newblock The isomorphism problem for classes of graphs that are invariant with
  respect to contraction.
\newblock {\em Zap. Nauchn. Sem. Leningrad. Otdel. Mat. Inst. Steklov. (LOMI)},
  174(Teor. Slozhn. Vychisl. 3):147--177, 182, 1988.
\newblock English translation in: Journal of Soviet Mathematics, Vol. 55, pages
  1621--1643, 1991.

\bibitem{Shitov2018}
Yaroslav {Shitov}.
\newblock {An improved bound for the length of matrix algebras}.
\newblock {\em ArXiv e-prints}, July 2018.
\newblock \url{https://arxiv.org/abs/1807.09310}.

\bibitem{MR0543783}
Boris Weisfeiler.
\newblock {\em On construction and identification of graphs}.
\newblock Lecture Notes in Mathematics, Vol. 558. Springer-Verlag, Berlin-New
  York, 1976.

\bibitem{zimmermann2014}
Alexander Zimmermann.
\newblock {\em Representation Theory: A Homological Algebra Point of View}.
\newblock Algebra and Applications. Springer International Publishing, 2014.

\end{thebibliography}

\end{document}